\newcommand{\cmark}{\ding{51}}
\newcommand{\xmark}{\ding{55}}
\newcommand{\lightText}[1]{\emph{\color{gray}#1}}
\begin{document}

\title{Canonical Representations for Direct Generation of Strategies in High-level Petri~Games\footnote[3]{This paper is a revised and extended version of \cite{DBLP:conf/apn/GiesekingW21}.}\thanks{
			This work was supported by 
			the German Research Foundation (DFG)
			through the
			Research Training Group 
			SCARE (DFG GRK 1765) 
			and through Grant Petri Games (No. 392735815).}}

\address{Nick W\"urdemann, wuerdemann@informatik.uni-oldenburg.de}

\author{Manuel Gieseking\\
Department of Computing Science,
University of Oldenburg,
Oldenburg, Germany\\
gieseking@informatik.uni-oldenburg.de
\and Nick W\"urdemann\\
Department of Computing Science,
University of Oldenburg,
Oldenburg, Germany\\
wuerdemann@informatik.uni-oldenburg.de } 

\maketitle

\runninghead{M. Gieseking, N. W\"urdemann}{Canon.\ Representations for Direct Gen.\ of Strategies in High-level Petri~Games}
\vspace*{-1cm}\begin{abstract}
	Petri games are a multi-player game model
	for the synthesis of distributed systems with multiple concurrent processes
	based on Petri nets.
	The processes are the players in the game represented by the token of the net.
	The players are divided into two teams:
	the controllable system and the uncontrollable environment.
	An individual controller is synthesized for each process based only
	on their locally available causality-based information.
	For one environment player and 
	a bounded number of system players,
	the problem of solving Petri games can be reduced 
	to that of solving B\"uchi games.
  
  High-level Petri games are  a concise 
  representation of ordinary Petri games.
  Symmetries, derived from a high-level representation,
  can be exploited to significantly
  reduce the state space in the corresponding B\"uchi game.
  We present a new construction for solving high-level Petri games. It involves
  the definition of a unique, canonical representation of the reduced B\"uchi game.
  This allows us to translate a strategy 
  in the B\"uchi game directly into a strategy in the Petri game.
  An implementation applied on six structurally different benchmark families
  shows in almost all cases a performance increase for larger state spaces.
\end{abstract}
\begin{keywords}
High-level Petri Nets, Petri Games, Synthesis of Distributed Systems, Symmetries, Canonical Representatives, Constructive Orbit Problem
\end{keywords}
\section{Introduction}\label{sec:Introduction}
Whether telecommunication networks,
electronic banking, or the world wide web,
\emph{distributed systems} 
are all around us and are becoming
increasingly more widespread.
Though an entire system may appear as one unit,
the local controllers in a network often act autonomously on only incomplete information
to avoid constant communication.
These independent agents must behave correctly under all 
possible uncontrollable behavior of the environment.
\emph{Synthesis}~\cite{Church/57/Applications}
avoids the error-prone task of manually implementing such local controllers
by automatically generating correct ones 
from a given specification 
(or stating the nonexistence of such controllers).
In case of a single process in the underlying model,
synthesis approaches have been successfully applied in nontrivial applications
(e.g., \cite{DBLP:conf/date/BloemGJPPW07}, 
\cite{DBLP:journals/trob/Kress-GazitFP09}). 
Due to the incomplete information in systems with
multiple processes progressing on their individual rate,
modeling \emph{asynchronous distributed systems}
is even more cumbersome and particularly benefits from a synthesis approach.

\emph{Petri games} \cite{DBLP:journals/corr/FinkbeinerO14}
(based on an underlying Petri net \cite{DBLP:books/sp/Reisig85a} where the tokens are the players in the game)
are a well-suited multi-player game model
for the \emph{synthesis of asynchronous distributed systems}
because of its subclasses with comparably low complexity results.
For Petri games with a single \emph{environment} (\emph{uncontrollable}) player,
a bounded number of \emph{system} (\emph{controllable}) players,
and a \emph{safety objective}, i.e.,
all players have to avoid designated \emph{bad places},
deciding the existence of a winning strategy for the system players is \textsc{exptime}-complete~\cite{DBLP:journals/iandc/FinkbeinerO17}.
This problem is called the \emph{realizability problem}.
The result is obtained via a reduction to a two-player B\"uchi game 
with enriched markings, so called \emph{decision sets}, as states.

\emph{High-level Petri nets} \cite{DBLP:series/eatcs/Jensen92} 
can concisely model large distributed systems.
Correspondingly,
\emph{high-level Petri games} \cite{DBLP:journals/corr/abs-1904-05621}
are a concise high-level representation of ordinary Petri games.
For solving high-level Petri games,
the \emph{symmetries} \cite{10.5555/115220.115224} of the system 
can be exploited to build a \emph{symbolic B\"uchi game}
with a significantly smaller number of states \cite{DBLP:journals/acta/GiesekingOW20}.
The states are \emph{equivalence classes} of decision sets and called \emph{symbolic decision sets}.
For generating a Petri game strategy in a high-level Petri game
the approach proposed in \cite{DBLP:journals/acta/GiesekingOW20}
resorts to the original strategy construction in \cite{DBLP:journals/iandc/FinkbeinerO17},
i.e., the equivalence classes of a symbolic two-player strategy are dissolved
and a strategy for the standard two-player game is generated.
Figure~\ref{fig:HLandLLDiagram} shows in the two bottom layers the relation of the elements just described.

\begin{figure}[!t]
	\centering	
	\begin{tikzpicture}[transform shape,node distance=\ydis and \xdis, on grid, scale=1.0]
	\renewcommand{\xdis}{4.5cm}
	\renewcommand{\ydis}{2.85cm}
	\node[align=center](HLPG){High-level\\ Petri game $ \SG $};
	\node[below=of HLPG, align=center] (LLPG) {Petri game\\ $ \PG=\HLtoLL(\SG) $};
	\node[right=of HLPG, align=center](HL2PG){Symbolic\\two-player\\game $ \STPG $};
	\node[below=of HL2PG, align=center] (LL2PG) {Two-player\\game $ \TPG $};
	\node[right=of HL2PG, align=center](HL2PGS){ Strategy\\ in $ \STPG $ };
	\node[below=of HL2PGS, align=center] (LL2PGS) {Strategy\\in $ \TPG $};
	\node[right=of HL2PGS](HLPGS){};
	\node[below=of HLPGS, align=center] (LLPGS) {Strategy $ \PGS $\\ in $ \PG $};
	\draw[->]
	([xshift=-3pt]HLPG.south) edge node [above, rotate=90] {transform} 
	node [right,xshift=16pt] {\cite{DBLP:journals/corr/abs-1904-05621}} ([xshift=-3pt]LLPG.north)
	(LLPG) edge node [above] {reduce} node [below] {\cite{DBLP:journals/iandc/FinkbeinerO17}} (LL2PG)
	(LL2PG) edge node [above] {solve} node[below]{}(LL2PGS)
	([xshift=3pt]LLPG.north) -- node [below, rotate=90, align=center] {represent} ([xshift=3pt]HLPG.south);
	\draw 
	(HL2PG)-- node[above, rotate=90] {bisimilar} node [right] {\cite{DBLP:journals/acta/GiesekingOW20}} (LL2PG);
	\draw[->]
	(HLPG) edge node [above] {reduce} node [below] {\cite{DBLP:journals/acta/GiesekingOW20}} (HL2PG)
	(HL2PG) edge node [above] {solve}node[below]{} (HL2PGS)
	(HL2PGS) edge node [above, rotate=90] {dissolve} node [below, rotate=90] {symmetries} node [right,xshift=5mm] {\cite{DBLP:journals/acta/GiesekingOW20}} (LL2PGS)
	(LL2PGS) edge node [above] {translate} node [below] {\cite{DBLP:journals/iandc/FinkbeinerO17}} (LLPGS)
	;
	
	\node[above=of HL2PG,align=center,yshift=-1/3*\ydis](HLC){Symbolic game\\ with canon. rep.\\  $ \STPGc $};
	\node[right=of HLC,align=center](HLCS){Strategy\\ in $ \STPGc $};
	
	\draw[->,very thick]
	(HLPG) edge node [above,sloped] {reduce} node [below,sloped] {\begin{NoHyper}\refSection{CanonRepsOfSymbDecsets},\ref{sec:DirStratGen}\end{NoHyper}} (HLC.west)
	(HLC) edge node [above] {solve}node[below]{} (HLCS)
	(HLCS) edge node [above,sloped] {generate} node[below,sloped]{\begin{NoHyper}\refSection{DirStratGen}\end{NoHyper}} (LLPGS)
	;
	\node at ($ (HLPG)!0.5!(HLC.west) $) [xshift=8.0mm,yshift=1.5mm] {\hyperlink{target:CanonRepsOfSymbDecsets}{\phantom{$ 3 $}}};
	\node at ($ (HLPG)!0.5!(HLC.west) $) [xshift=10.5mm,yshift=3mm] {\hyperlink{target:DirStratGen}{\phantom{$ 4 $}}};
	\node at ($ (HLCS)!0.5!(LLPGS) $) [xshift=0.8mm,yshift=-4.8mm] {\hyperlink{target:DirStratGen}{\phantom{$ A $}}};
	
	\draw[draw opacity=0]
	(HLC) edge node [rotate=90] {$ \simeq $} (HL2PG)
	(HLCS) edge node [rotate=90] {$ \simeq $} (HL2PGS)
	;
	\end{tikzpicture}
	\caption{
		An overview of the scope of this paper.
		The connections between the different elements 
		describe their interplay and where these
		methods are introduced. 
		The connections labeled with ``solve''
		mean that a two-player (B\"uchi) game can be solved 
		by standard algorithms in game theory (e.g., \cite{DBLP:conf/dagstuhl/2001automata}).
		The bottom level corresponds to 
		the original reduction in \cite{DBLP:journals/iandc/FinkbeinerO17},
		the level above corresponds to the high-level counterparts described in 
		\cite{DBLP:journals/corr/abs-1904-05621,DBLP:journals/acta/GiesekingOW20},
		and the top level contains the elements introduced in this~paper.
		The new reduction is marked by thick edges.
	}%
	\label{fig:HLandLLDiagram}%
\end{figure}
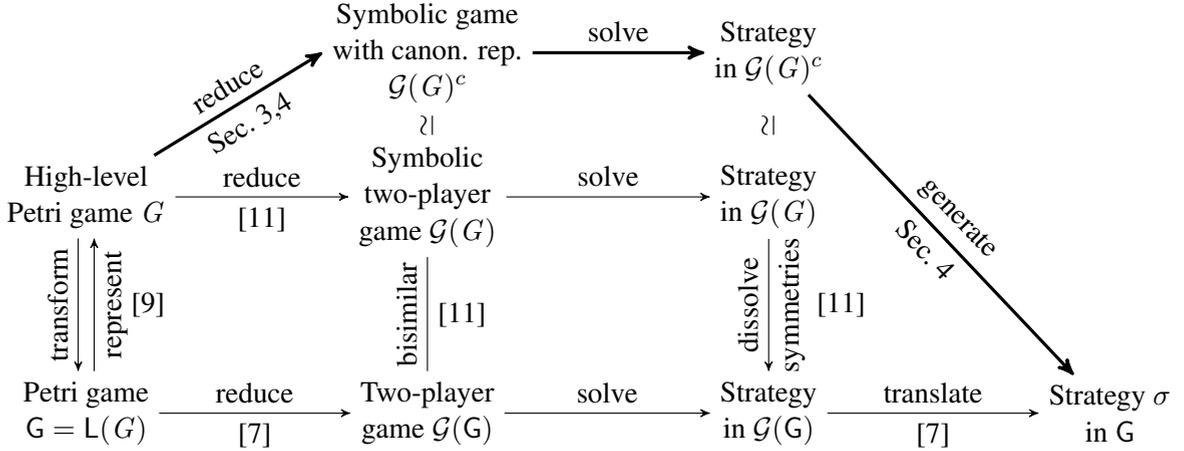
In this paper, we propose a new construction for solving high-level Petri games 
to avoid this detour while generating the strategy.
In~\cite{DBLP:journals/acta/GiesekingOW20} the symbolic B\"uchi game is generated by
comparing each newly added state with all already added ones for equivalence, i.e.,
the \emph{orbit problem} 
must be answered.
The new approach calculates a \emph{canonical representation} for each newly added state 
(the \emph{constructive orbit problem} \cite{DBLP:conf/cav/ClarkeEJS98}),
and only stores these representations.
This generation of a symbolic B\"uchi game 
with canonical representations is based on the corresponding ideas for reachability graphs
from \cite{DBLP:journals/tcs/ChiolaDFH97}.
As in \cite{DBLP:journals/acta/GiesekingOW20},
we consider safe Petri games with a high-level representation,
and exclude Petri games where the system can proceed infinitely without the environment.
For the decidability result we consider, as in \cite{DBLP:journals/iandc/FinkbeinerO17}, 
Petri games with only one environment player, i.e., in every reachable marking there is at most one token on an environment place.

One of the main advantages of the new approach is that the canonical representations
allow to directly generate a Petri game strategy
from a symbolic B\"uchi game strategy
without explicitly resolving all symmetries (cp. thick edges and the top level in \refFig{HLandLLDiagram}).
\begin{figure}[!t]
	\centering
	\begin{subfigure}[t]{0.34\textwidth}
		\centering
		\begin{tikzpicture}[node distance=\ydis and \xdis,on grid,>=stealth',bend angle=45,scale=0.95, transform shape]
	\renewcommand{\xdis}{18mm}
	\renewcommand{\ydis}{10.2mm}
	
	\node[envplace]		(Env)	[label=right:{$ \mathit{Env} $},tokens=1]{};
	\node[transition]	(i)		[below=of Env, label={right:$ \mathit{d} $}]{};
	\node[envplace]		(I)		[below=of i, label={left:$ \mathit{I} $}]{};
	\node[transition]	(inf)	[below=of I, label={[label distance=-2pt,yshift=-2pt]left:$ \mathit{inf} $}]{};
	\node[envplace]		(R)		[below=of inf,yshift=-\ydis, label={[label distance=-1pt]left:$ \mathit{R} $}]{};
	\node[transition]	(h)		[below=of R, yshift=-2mm, label={left:$ \mathit{h} $}]{};
	\node[sysplace]		(H)		[below=of h, label={left:$ \mathit{H} $}]{};
	\node[sysplace]		(Sys)	[right=of inf, label={right:$ \mathit{Sys} $}]{};
	\node[transition]	(a)		[below=of Sys, label={right:$ \mathit{a} $}]{};
	\node[sysplace]		(A)		[below=of a, label={right:$ \mathit{A} $}]{};
	\node[transition]	(b)		[below=of A, yshift=-2mm, label={right:$ \mathit{b} $}]{};
	\node[sysplace,bad]	(B)		[below=of b, label={right:$ \mathit{B} $}]{};
	\node[scale=0.85] at (Sys) 		(SysNotation){$ \basecl_1 $};

	\path[->]
	(i) 
	edge[pre] node[left,scale=0.9]{$ \bullet $}	(Env)
	edge[post] node[left,scale=0.9]{$ x $}	(I)
	(inf)
	edge[pre] node[left,scale=0.9]{$ x $}	(I)
	edge[pre,bend left]	node[above,inner sep=1pt,scale=0.9]{$ \basecl_1 $}	(Sys)
	edge[post] node[left,scale=0.9]{$ x $}	(R)
	edge[post,bend right] node[below,inner sep=2pt,scale=0.9]{$ \basecl_1 $}	(Sys)
	(a)
	edge[pre] node[right,scale=0.9]{$ y $} (Sys)
	edge[post] node[right,scale=0.9]{$ (y,x) $} (A)
	(h)
	edge[pre] node[left,pos=0.7,scale=0.9]{$ x $} (R)
	edge[pre] node[above,sloped,inner sep=1pt,scale=0.8,pos=0.57]{$ \basecl_1\!\times\!\{x\} $} (A)
	edge[post] node[left,scale=0.9]{$ x $} (H)
	(b)
	edge[pre] node[right,scale=0.9]{$ (y,x) $} (A)
	edge[post] node[right,scale=0.9]{$ y $}(B)
	;
	
	\node at (i) (description) [xshift=24mm, yshift=-0.23*\ydis, align=left,draw=black,scale=0.9]
	{ $ \basecl_1=\{ c_1,c_2,c_3 \} $\\
		$ \basecl_2=\{ \bullet \} $\\
		$ x,y\in\basecl_1 $};		
\end{tikzpicture}%
		\subcaption{High-level Petri game~$ \SG $}%
		\label{fig:symmetricGameExampleHLPG}
	\end{subfigure}%
	~
	\begin{subfigure}[t]{0.65\textwidth}
		\centering
		\begin{tikzpicture}[node distance=\ydis and \xdis,on grid,>=stealth',bend angle=45,scale=0.95, transform shape]
	\renewcommand{\xdis}{34mm}
	\renewcommand{\ydis}{8mm}
	\newcommand{\smallxshift}{8mm}
	
	\node[envplace]		(Env)	[label={[inner sep=0pt,xshift=4mm,yshift=0.7mm]below:$ \mathit{Env} $},tokens=1]{};
	\node[transition]	(i2)		[below=of Env, label={right:$ \mathit{d}_2 $}]{};
	\node[envplace]		(I2)		[below=of i2,yshift=0.5mm, label={right:$ \mathit{I}_2 $}]{};	
	\node[transition]	(inf2)		[below=of I2,yshift=0.5mm, label={[xshift=0.5mm]right:$ \mathit{inf}_2 $}]{};
	\node[sysplace]		(Sys2)		[below=of inf2, yshift=-3mm, label={[xshift=0.5mm]right:$ \mathit{Sys}_2 $},tokens=1]{};	
	\node[transition]	(a22)		[below=of Sys2, label={[inner sep=1pt]below:$ \mathit{a}_{22} $}]{};	
	\node[transition] at (a22)	(a21)		[xshift=-\smallxshift, label={[inner sep=1pt]below:$ \mathit{a}_{21} $}]{};
	\node[transition] at (a22)	(a23)		[xshift=\smallxshift, label={[inner sep=1pt]below:$ \mathit{a}_{23} $}]{};
	\node[sysplace]		(A22)		[below=of a22, yshift=-4mm, label={[inner sep=0pt,xshift=-3mm]above:$ \mathit{A}_{22} $}]{};
	\node[sysplace] at (A22)	(A21)		[xshift=-\smallxshift, label={[inner sep=0pt,xshift=-3mm]above:$ \mathit{A}_{21} $}]{};	
	\node[sysplace] at (A22)	(A23)		[xshift=\smallxshift, label={[inner sep=0pt,xshift=-3mm]above:$ \mathit{A}_{23} $}]{};
	\node[envplace]	at (A21)	(R2)		[xshift=-8mm, label={[inner sep=0pt,xshift=-3mm]above:$ \mathit{R}_2 $}]{};
	\node[transition]	(h2)		[below=of R2, yshift=-4mm, label={[inner sep=0pt]left:$ \mathit{h}_2 $}]{};
	\node[sysplace]		(H2)		[below=of h2, yshift=-3mm, label={[inner sep=0pt]right:$ \mathit{H}_{2} $}]{};
	
	\node[transition]	(b22)		[below=of A22, yshift=-4mm, label={[inner sep=0pt]below:$ \mathit{b}_{22} $}]{};	
	\node[transition] at (b22)	(b21)		[xshift=-\smallxshift, label={[inner sep=0pt]below:$ \mathit{b}_{21} $}]{};
	\node[transition] at (b22)	(b23)		[xshift=\smallxshift, label={[inner sep=0pt]below:$ \mathit{b}_{23} $}]{};
	\node[sysplace,bad]		(B22)		[below=of b22, yshift=-3mm, label={[inner sep=1pt]right:$ \mathit{B}_{2} $}]{};
		
	\node[transition]	(i1)		[left=of i2, label={left:$ \mathit{d}_1 $}]{};
	\node[envplace]		(I1)		[below=of i1,yshift=0.5mm, label={left:$ \mathit{I}_1 $}]{};	
	\node[transition]	(inf1)		[below=of I1,yshift=0.5mm,  label={[yshift=0.5mm]right:$ \mathit{inf}_1 $}]{};
	\node[sysplace]		(Sys1)		[below=of inf1, yshift=-3mm, label={left:$ \mathit{Sys}_1 $},tokens=1]{};	
	\node[transition]	(a12)		[below=of Sys1, label={[inner sep=1pt]below:$ \mathit{a}_{12} $}]{};	
	\node[transition] at (a12)	(a11)		[xshift=-\smallxshift, label={[inner sep=1pt]below:$ \mathit{a}_{11} $}]{};
	\node[transition] at (a12)	(a13)		[xshift=\smallxshift, label={[inner sep=1pt]below:$ \mathit{a}_{13} $}]{};
	\node[sysplace]		(A12)		[below=of a12, yshift=-4mm, label={[inner sep=0pt,xshift=-3mm]above:$ \mathit{A}_{12} $}]{};
	\node[sysplace] at (A12)	(A11)		[xshift=-\smallxshift, label={[inner sep=0pt,xshift=-3mm]above:$ \mathit{A}_{11} $}]{};	
	\node[sysplace] at (A12)	(A13)		[xshift=\smallxshift, label={[inner sep=0pt,xshift=-3mm]above:$ \mathit{A}_{13} $}]{};	
	\node[envplace]	at (A11)	(R1)		[xshift=-8mm, label={[inner sep=0pt,xshift=-3mm]above:$ \mathit{R}_1 $}]{};
	\node[transition]	(h1)		[below=of R1, yshift=-4mm, label={[inner sep=0pt]left:$ \mathit{h}_1 $}]{};
	\node[sysplace]		(H1)		[below=of h1, yshift=-3mm, label={[inner sep=0pt]right:$ \mathit{H}_{1} $}]{};
	
	\node[transition]	(b12)		[below=of A12, yshift=-4mm, label={[inner sep=0pt]below:$ \mathit{b}_{12} $}]{};	
	\node[transition] at (b12)	(b11)		[xshift=-\smallxshift, label={[inner sep=0pt]below:$ \mathit{b}_{11} $}]{};
	\node[transition] at (b12)	(b13)		[xshift=\smallxshift, label={[inner sep=0pt]below:$ \mathit{b}_{13} $}]{};
	\node[sysplace,bad]		(B12)		[below=of b12, yshift=-3mm, label={[inner sep=1pt]right:$ \mathit{B}_{1} $}]{};
	
	\node[transition]	(i3)		[right=of i2, label={right:$ \mathit{d}_3 $}]{};
	\node[envplace]		(I3)		[below=of i3,yshift=0.5mm, label={right:$ \mathit{I}_3 $}]{};	
	\node[transition]	(inf3)		[below=of I3,yshift=0.5mm,  label={right:$ \mathit{inf}_3 $}]{};
	\node[sysplace]		(Sys3)		[below=of inf3, yshift=-3mm, label={right:$ \mathit{Sys}_3 $},tokens=1]{};	
	\node[transition]	(a32)		[below=of Sys3, label={[inner sep=1pt]below:$ \mathit{a}_{32} $}]{};	
	\node[transition] at (a32)	(a31)		[xshift=-\smallxshift, label={[inner sep=1pt]below:$ \mathit{a}_{31} $}]{};
	\node[transition] at (a32)	(a33)		[xshift=\smallxshift, label={[inner sep=1pt]below:$ \mathit{a}_{33} $}]{};
	\node[sysplace]		(A32)		[below=of a32, yshift=-4mm, label={[inner sep=0pt,xshift=-3mm]above:$ \mathit{A}_{32} $}]{};
	\node[sysplace] at (A32)	(A31)		[xshift=-\smallxshift, label={[inner sep=0pt,xshift=-3mm]above:$ \mathit{A}_{31} $}]{};	
	\node[sysplace] at (A32)	(A33)		[xshift=\smallxshift, label={[inner sep=0pt,xshift=-3mm]above:$ \mathit{A}_{33} $}]{};
	\node[envplace]	at (A31)	(R3)		[xshift=-8mm, label={[inner sep=0pt,xshift=-3mm]above:$ \mathit{R}_3 $}]{};
	\node[transition]	(h3)		[below=of R3, yshift=-4mm, label={[inner sep=0pt,yshift=-1mm]left:$ \mathit{h}_3 $}]{};
	\node[sysplace]		(H3)		[below=of h3, yshift=-3mm, label={[inner sep=0pt]right:$ \mathit{H}_{3} $}]{};
	
	\node[transition]	(b32)		[below=of A32, yshift=-4mm, label={[inner sep=0pt]below:$ \mathit{b}_{32} $}]{};	
	\node[transition] at (b32)	(b31)		[xshift=-\smallxshift, label={[inner sep=0pt]below:$ \mathit{b}_{31} $}]{};
	\node[transition] at (b32)	(b33)		[xshift=\smallxshift, label={[inner sep=0pt]below:$ \mathit{b}_{33} $}]{};
	\node[sysplace,bad]		(B32)		[below=of b32, yshift=-3mm, label={[inner sep=1pt]right:$ \mathit{B}_{3} $}]{};
	
	\begin{pgfonlayer}{bg}
	\path[->, gray!70]
	(inf1)	edge[pre]	(I1)
			edge[pre and post] (Sys1)
			edge[pre and post] (Sys2)
			edge[pre and post] (Sys3)
	(inf2)	edge[pre]	(I2)
			edge[pre and post] (Sys1)
			edge[pre and post] (Sys2)
			edge[pre and post] (Sys3)
	(inf3)	edge[pre]	(I3)
			edge[pre and post] (Sys1)
			edge[pre and post] (Sys2)
			edge[pre and post] (Sys3)
	(a11) 	edge[pre]	(Sys1)
			edge[post] 	(A11)
	(a12) 	edge[pre]	(Sys1)
			edge[post] 	(A12)
	(a13) 	edge[pre]	(Sys1)
			edge[post] 	(A13)
	(a21) 	edge[pre]	(Sys2)
			edge[post] 	(A21)
	(a22) 	edge[pre]	(Sys2)
			edge[post] 	(A22)
	(a23) 	edge[pre]	(Sys2)
			edge[post] 	(A23)
	(a31) 	edge[pre]	(Sys3)
			edge[post] 	(A31)
	(a32) 	edge[pre]	(Sys3)
			edge[post] 	(A32)
	(a33) 	edge[pre]	(Sys3)
			edge[post] 	(A33)
	(b11) 	edge[pre]	(A11)
			edge[post] 	(B12)
	(b12) 	edge[pre]	(A12)
			edge[post] 	(B12)
	(b13) 	edge[pre]	(A13)
			edge[post] 	(B12)
	(b21) 	edge[pre]	(A21)
			edge[post] 	(B22)
	(b22) 	edge[pre]	(A22)
			edge[post] 	(B22)
	(b23) 	edge[pre]	(A23)
			edge[post] 	(B22)
	(b31) 	edge[pre]	(A31)
			edge[post] 	(B32)
	(b32) 	edge[pre]	(A32)
			edge[post] 	(B32)
	(b33) 	edge[pre]	(A33)
			edge[post] 	(B32)
	;
	\end{pgfonlayer}
	\path[->]
	(h1)	edge[pre]	(R1)
			edge[pre]	(A11)
			edge[pre,in=210]	(A21)
			edge[pre,in=210,out=30]	(A31)
			edge[post]	(H1)
	(h2)	edge[pre]	(R2)
			edge[pre]	(A12)
			edge[pre]	(A22)
			edge[pre,in=225,out=25]	(A32)
			edge[post]	(H2)
	(h3)	edge[pre]	(R3)
			edge[pre]	(A13)
			edge[pre]	(A23)
			edge[pre]	(A33)
			edge[post]	(H3)
	(i1)	edge[pre,]	(Env)
			edge[post]	(I1)
	(i2)	edge[pre]	(Env)
			edge[post]	(I2)
	(i3)	edge[pre]	(Env)
			edge[post]	(I3)
	;
	\node at (R1|-inf1)	(H1){};
	\node at (R2|-inf2)	(H2){};
	\node at (R3|-inf3)	(H3){};
	\path[draw,->,rounded corners]
	(inf1)--(H1.center)--(R1);
	\path[draw,->,rounded corners]
	(inf2)--(H2.center)--(R2);
	\path[draw,->,rounded corners]
	(inf3)--(H3.center)--(R3);
\end{tikzpicture}%
		\subcaption{The represented P/T Petri game~$ \PG = \HLtoLL(\SG) $.}%
		\label{fig:motivation_update}
	\end{subfigure}%
	\caption{
		\label{fig:symmetricGameExample}		
		\emph{Client/Server}: The environment decides on one out of three computers
		to host a server.
		The system players (computers)
		can win the game by getting informed
		on the decision of the environment
		and connecting correctly.
	}
\end{figure}
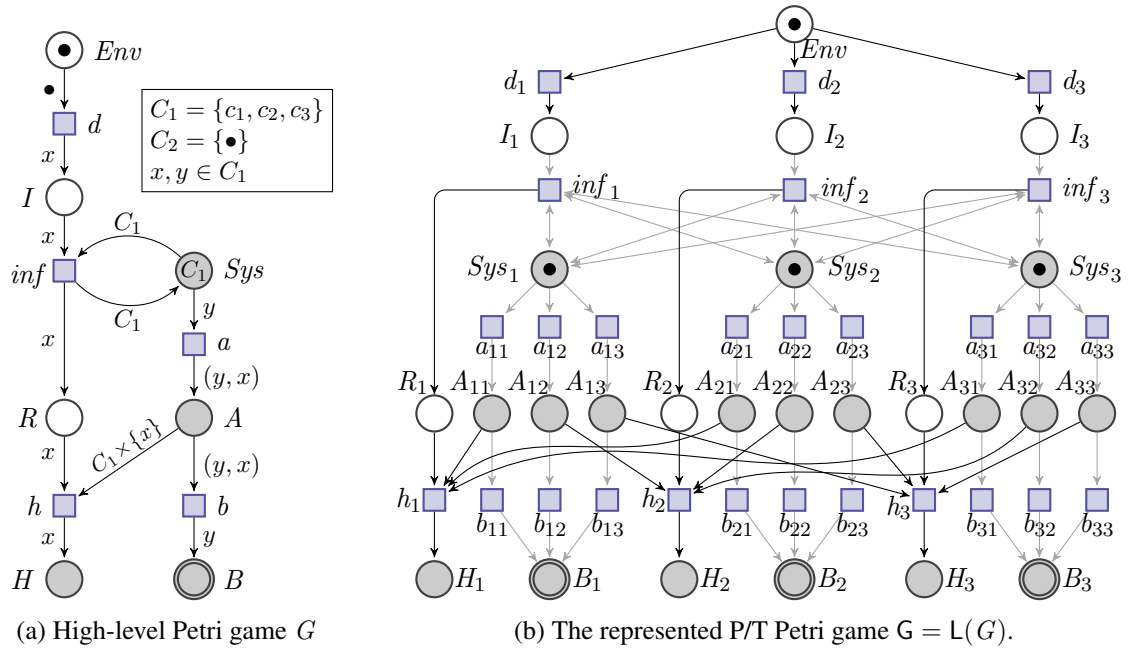
Another advantage is the complexity for constructing the symbolic B\"uchi game.
Even though the calculation of the canonical representation comes with a fixed cost, 
less comparisons can be necessary,
depending on the input system.
We implemented the new algorithm and applied our tool on the benchmark families
used in \cite{DBLP:journals/acta/GiesekingOW20} and Example~\ref{ex:example}.
The results show in general a performance increase with an
increasing number of states for almost all benchmark families.

We now introduce the example
on which we demonstrate the successive development stages of the 
presented techniques throughout the paper.
\begin{example}
	\label{ex:example}
	The high-level Petri game $ \SG $ depicted in \refFig{symmetricGameExampleHLPG} 
	models a simplified scenario
	where one out of three computers 
	must host a server for the others to connect to. 
	The environment nondeterministically decides which computer 
	must be the host.
	The places in the net are partitioned into \emph{system places} (gray)
	and \emph{environment places} (white). 
	An object on a place is a player in the corresponding team.
	\emph{Bad places} are double-bordered.
	The variables $ x,y $ on arcs are bound only locally to the transitions,
	and an assignment of objects to these variables is called a \emph{mode}
	of the transition.
	
	The environment player $ \bullet $,
	initially residing on place $ \mathit{Env} $,
	decides via transition~$ d $ in mode $ x=\tilde{c} $
	on a computer~$ \tilde{c} $
	that should host the server.
	The system players
	(computers~$ c_1,c_2,c_3\in\basecl_1 $),
	initially residing on place~$ \mathit{Sys} $,
	can either directly, individually connect themselves 
	to another computer $ c' $ via transition~$ \mathit{a} $ in mode~$ [y=c_i,x=\tilde{c}] $, 
	or wait for transition~$ \mathit{inf} $
	to be enabled.
	When they choose to connect themselves directly,
	after firing transition~$ \mathit{a} $ in different modes,
	the corresponding pairs of computers 
	reside on place~$ \mathit{A} $. 
	Since the players always have to give the possibility to proceed in the game, 
	and transition $ \mathit{h} $ cannot get enabled any more,
	they must take transition~$ \mathit{b} $
	to the bad place~$ \mathit{B} $.
	So instead, all players should initially
	only allow transition~$ \mathit{inf} $
	(in every possible mode).
	After the decision of the environment, 
	transition~$ \mathit{inf}$ can be fired in mode $x=\tilde{c} $, 
	placing $ \tilde{c} $ on $ \mathit{R} $. 
	In this firing, the system players get informed
	on the environment's decision.
	Back on place~$ \mathit{Sys} $ they can,
	equipped with this knowledge, 
	each connect to the 
	computer~$ \tilde{c} $ via transition~$ \mathit{a} $,
	putting the three objects
	$ (c_1,\tilde{c}),(c_2,\tilde{c}),(c_3,\tilde{c}) $ 
	on place~$ \mathit{A} $.
	Thus, transition~$ \mathit{h} $
	can be fired in mode~$ x=\tilde{c} $, and the 
	game terminates with $ \tilde{c} $ in $ \mathit{H} $. Since the
	system players avoided
	reaching the bad place~$ \mathit{B} $,
	they win the play.
	This scenario is highly symmetric,
	since it does not matter which computer is chosen to be the host,
	as long as the others connect themselves~correctly.
\end{example}
The remainder of this paper is structured as follows:
In \refSection{PetriNetsAndPetriGames} we recall 
the definitions of 
(high-level) Petri nets and (high-level) Petri games.
In \refSection{CanonRepsOfSymbDecsets} 
we present the idea, formalization,
and construction of canonical representations.
In \refSection{DirStratGen} we show the application 
of these canonical representations in
the symbolic two-player B\"uchi game,
and how to directly generate a Petri game strategy.
In \refSection{ExperimentalResults}, experimental results of the presented techniques are shown.
Section~\ref{sec:RelatedWork} presents the related work and
\refSection{Conclusions} concludes the paper.

\section{Petri Nets and Petri Games}\label{sec:PetriNetsAndPetriGames}

This section recalls (high-level) Petri nets and (high-level) Petri games,
and the associated concept of strategies
established in \cite{DBLP:journals/iandc/FinkbeinerO17,DBLP:journals/corr/abs-1904-05621,DBLP:journals/acta/GiesekingOW20}.
To provide an easier and more intuitive approach to the subject,
the full formal definition of the strategy in a Petri game
can be found in App.~\ref{app:strategyFormal}.
Figure~\ref{fig:symmetricGameExample} serves as an illustration.

\subsection{P/T Petri Nets}
A (marked P/T) \emph{Petri net} is a tuple 
$ \PN=(\places,\transitions,\flowfunc,\marking_\mathsf{0}) $,
with the disjoint sets of \emph{places}~$ \places $
and \emph{transitions}~$ \transitions $,
a \emph{flow function} $ \flowfunc:(\places\times\transitions)\cup(\transitions\times\places)\to \N $,
and an \emph{initial marking} $ \marking_\mathsf{0} $,
where a \emph{marking} is a multi-set $ \marking:\places\to\N $ 
that indicates the number of tokens on each place.
$ \flowfunc(\mathsf{x},\mathsf{y})=n>0 $ means there is an \emph{arc} of \emph{weight} $ n $
from node $ \mathsf{x} $ to $ \mathsf{y} $ describing the flow of tokens in the net.
A transition $ \mathsf{t}\in\transitions $ is \emph{enabled} in a marking $ \marking $
if $ \forall \place\in\places: \flowfunc(\place,\transition)\leq\marking(\place) $.
If~$ \transition $ is enabled then $ \transition $ can \emph{fire} in $ \marking $,
leading to a new marking $ \marking' $ calculated by
$\forall \place\in\places: \marking'(\place)=\marking(\place)-\flowfunc(\place,\transition)+\flowfunc(\transition,\place) $.
This is denoted by $ \marking[\transition\rangle \marking' $.
$ \PN $ is called \emph{safe} if for all markings~$ \marking $ that can be reached from $ \marking_0 $
by firing a sequence of transitions we have $ \forall \place\in\places: \marking(\place)\leq 1 $.
For each transition $ \transition\in\transitions $ we define the \emph{pre-} and \emph{postset} of $ \mathsf{t} $ as the multi-sets $ \preset{\mathsf{t}}  = \flowfunc (\cdot, \mathsf{t})$
and $ \postset{\mathsf{t}}  = \flowfunc (\mathsf{t}, \cdot) $ over~$ \places $.

An example for a Petri net can be seen in \refFig{motivation_update}.
Ignoring the different shades and potential double borders for now, 
the net's places are depicted as circles and its transitions as squares.
Dots represent the number of tokens on each place in the initial marking of the net.
The flow is depicted as weighted arcs between places and transitions. 
Missing weights are interpreted as arcs of weight~$ 1 $.
In the initial marking of the Petri game depicted in \refFig{motivation_update},
all transitions $ \mathit{a}_{ij} $ and $ \mathit{d}_i $ are enabled.
Firing, e.g., $ \mathit{d}_1 $ results in the marking with one token on 
$ \mathit{I}_1 $, $ \mathit{Sys}_1 $, $ \mathit{Sys}_2 $, and $ \mathit{Sys}_3 $, each.

\subsection{P/T Petri Games}
Petri games are an extension of Petri nets to incomplete information games between two teams of players:
the controllable system vs. the uncontrollable environment.
The tokens on places in a Petri net represent the individual players.
The place a player resides on determines their team membership.
Particularly, a player can switch teams.
For that, the places are divided into system places and environment places.
A play of the game is a concurrent execution of transitions in the net.
During a play, the \emph{knowledge} of each player is represented by their \emph{causal history},
i.e., all visited places and used transitions to reach to current place.
Players enrich this local knowledge when synchronizing in a joint transition.
There, the complete knowledge of all participating players are exchanged.
Based on this, players allow or forbid transitions in their postset.
A transition can only fire if every player in its preset allows the execution.
The system players in a Petri game win a play if they satisfy a safety-condition,
given by a designated set of bad places they must not reach.

Formally, a (P/T) \emph{Petri game} is a tuple
$ \PG=(\sysplaces,\envplaces,\transitions,\flowfunc,\marking_\mathsf{0},\badplaces) $,
with a set of \emph{system places} $ \sysplaces $,
a set of \emph{environment places} $ \envplaces $,
and a set of \emph{bad places} $ \badplaces\subseteq\sysplaces $.
The set of all places is denoted by $ \places=\sysplaces\dot{\cup}\envplaces $,
and $ \transitions,\flowfunc,\marking_0 $ are the remaining components of a
Petri net $ \PN=(\places,\transitions,\flowfunc,\marking_\mathsf{0}) $,
called the \emph{underlying net} of $ \PG $.
We consider only Petri games with finitely many places and transitions.

In \refFig{motivation_update}, a Petri game is depicted.
We just introduced the underlying net of the game.
The system places are shaded gray, the environment places are white. 
Bad places are marked by a double border.
This Petri game is the P/T-version of the high-level Petri game described in the introduction.
The three tokens/system players residing on $ \mathit{Sys}_i $
represent the computers.
The environment player residing on $ \mathit{Env} $ makes their decision 
which computer should host a server by taking a transition $ \mathit{d}_i $.
The system players can then get informed of the decision and react accordingly 
as described above.

A \emph{strategy} for the system players 
in a Petri game $ \PG $
can be formally expressed 
as a \emph{subprocess} of the
\emph{unfolding} \cite{DBLP:series/eatcs/EsparzaH08}.
Formal definitions of the concepts are presented in App.~\ref{app:strategyFormal}.
In the unfolding of a Petri net,
every loop is unrolled and every backward
branching place is expanded by 
duplicating the place, 
so that every transition
represents the unique occurrence 
of a transition during an execution of the net.
The causal dependencies in $ \PG $ 
(and thus, the knowledge of the players)
are naturally represented in its unfolding,
which is the unfolding of the 
underlying net with system-,
environment-, and bad places 
marked correspondingly.

A strategy is obtained from an unfolding 
by deleting some of the branches 
that are under control of the system players. 
This subprocess has to meet three
conditions: 
(i) The strategy must be \emph{deadlock-free}, 
to avoid trivial solutions;
it must allow the possibility to continue, 
whenever the system can proceed.
Otherwise the system players could win with the respect to 
the safety objective (bad places) if they decide to do nothing.
(ii) The system players must act in a \emph{deterministic} way,
i.e., in no reachable marking of the strategy two
transitions involving the same system player are enabled.
(iii) \emph{Justified refusal:}
if a transition is not in the strategy,
then the reason is that a system player 
in its preset forbids all occurrences 
of this transition in the strategy.
Thus, no pure environment decisions are restricted,
and system players can only allow or forbid a
transition of the original net, based on only their knowledge. 
In a \emph{winning} strategy, the system players cannot reach bad places.

\begin{figure}[!h]
	\centering
	\input{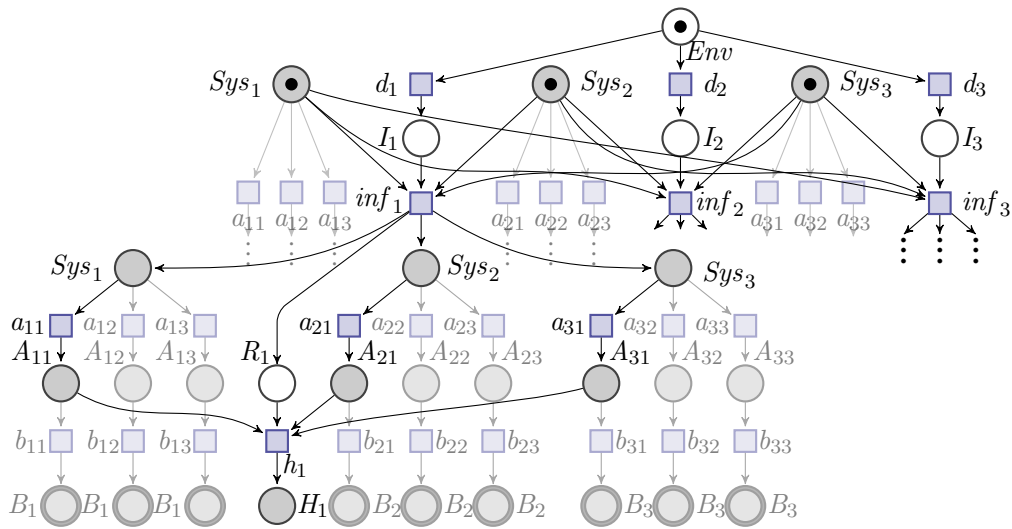}%
	\caption{\label{fig:UnfAndStrat} 
		Part of a winning strategy for the system players in $ \PG $ (solid),
		obtained by deleting some of the branches of the unfolding (solid and greyed out).
	}
\end{figure}

In \refFig{UnfAndStrat}, we see the 
already informally described
winning strategy for the system players
in the Petri game $ \PG $.
For clarity, we only show the case in which the environment chose 
the first computer to be the host completely. 
All computers, \emph{after} getting informed of the environment's
decision, act correspondingly and connect to the first computer.
The remaining branches in the unfolding are cut off in the strategy.
The other two cases (after firing $ \mathit{inf}_2 $ or $ \mathit{inf}_3 $) are analogous.
The formal
definitions of unfoldings and strategies can be found in App.~\ref{app:strategyFormal}.

\subsection{High-Level Petri Nets}
While in P/T Petri nets only tokens can reside on places,
in high-level Petri nets each place is equipped with a \emph{type}
that describes the form of data (also called \emph{colors}) the place can hold.
Instead of weights, each arc between a place $ p $ and a transition $ t $ is
equipped with an \emph{expression}, indicating which of these colors are taken from
or laid on $ p $ when firing $ t $.
Additionally, each transition $ t $ is equipped with a \emph{guard}
that restricts when $ t $ can fire.

Formally, a \emph{high-level} Petri net is a tuple
$ \SN=(\hlplaces,\hltransitions,\hlflowfunc,\type,\expression,\guard,\hlmarking_0) $,
with a set of \emph{places} $ \hlplaces $,
a set of \emph{transitions} $ \hltransitions $ satisfying $ \hlplaces\cap\hltransitions=\emptyset $,
a \emph{flow relation} $ \hlflowfunc\subseteq(\hlplaces\times\hltransitions)\cup(\hltransitions\times\hlplaces) $,
a \emph{type function} $ \type $ from $ \hlplaces $
such that for each place $ p $, $ \type(p) $ is the set of colors that can lie on $ p $,
a mapping~$ \expression $ that, for every transition~$ t $, assigns to each arc~$ (p,t) $ (or $ (t,p) $)
in $ \hlflowfunc $ an expression $ \expression(p,t) $
(or $ \expression(t,p) $) indicating which colors are
withdrawn from~$ p $ (or laid on $ p $) when $ t $ is fired,
a \emph{guard function} $ \guard $ that equips each transition~$ t $
with a Boolean expression $ \guard(t) $,
an \emph{initial marking}~$ \hlmarking_0 $,
where a \emph{marking} in $ \SN $ is a function~$ \hlmarking $ with domain $ \hlplaces $ 
indicating what colors reside on each place,
i.e., $ \forall p\in\hlplaces:
\hlmarking(p)\in[\type(p)\to\N] $.

\refFig{symmetricGameExampleHLPG} a high-level Petri net is depicted.
As in the P/T case, we ignore the different shadings and borders of places for now. 
The types of the places can be deducted from the surrounding arcs. 
For example, the place $ \mathit{E} $ has the type $ \type(\mathit{E}) = \basecl_2 = \{ \bullet \} $,
and the place $ \mathit{A} $ has the type $ \type(\mathit{A}) = \basecl_1\times\basecl_1 $.
Each arc is equipped with an expression, e.g., $ \expression(\mathit{Sys},\mathit{a})=y $,
and $ \expression(\mathit{a},\mathit{A})=(y,x) $.
In the given net, all guards of transitions are \emph{true} and therefore not depicted.

Typically, expressions and guards will contain variables.
A \emph{mode} (or valuation) $ v $ of a transition~$ t\in\hltransitions $ 
assigns to each variable $ x $
occurring in $ \guard(t) $, or in
any expression $ \expression(p,t) $ or $ \expression(t,p) $,
a value $ v(x) $.
The set $ \valuations(t) $ contains all modes of $ t $.
Each $ v\in\valuations(t) $ assigns a Boolean value, denoted by $ v(t) $, to~$ g(t) $,
and to each arc expression $ e(p,t) $ or $ e(t,p) $ a multi-set over $ \type(p) $,
denoted by $ v(p,t) $ or~$ v(t,p) $.
A transition $ t $ is \emph{enabled in a mode} $ v\in\valuations(t) $
in a marking~$ \hlmarking $
if $ v(t)=\true $ and for each arc $ (p,t)\in\hlflowfunc $ and every $ c\in\type(p) $
we have $ v(p,t)(c)\leq \hlmarking(p)(c) $.
The marking $ \hlmarking' $ reached by firing~$ t $ in mode $ v $ from $ \hlmarking $
(denoted by $ \hlmarking[t.v\rangle\hlmarking' $) is calculated by
$ \forall p\in\places\ \forall c\in\type(p) : 
\hlmarking'(p)(c)=\hlmarking(p)(c)-v(p,t)(c)+v(t,p)(c) $.

A high-level Petri net $ \SN $ can be \emph{transformed} 
into a P/T Petri net $ \HLtoLL(\SN) $ with
$ \places=\{ p.c\with p\in\hlplaces, c\in\type(p) \} $,
$ \transitions=\{ t.v\with t\in\hltransitions, v\in\valuations(t), v(t)=\true \} $,
the flow~$ \flowfunc $ defined by 
$ \forall p.c\in\places\ \forall t.v\in\transitions :
\flowfunc(p.c,t.v)=v(p,t)(c)\land \flowfunc(t.v,p.c)=v(t,p)(c) $,
and initial marking $ \marking_\mathsf{0} $ defined by 
$ \forall p.c\in\places: \marking_\mathsf{0}(p.c)=\hlmarking_0(p)(c) $.
The two nets then have the same semantics:
the number of tokens on a place~$ p.c $ in a marking in~$ \HLtoLL(\SN) $ indicates
the number of colors~$ c $ on place~$ p $ 
in the corresponding marking in $ \SN $.
Firing a transition~$ t.v $ in $ \HLtoLL(\SN) $
corresponds to firing transition~$ t $ in mode~$ v $ in $ \SN $.
We say a high-level Petri net $ \SN $ \emph{represents} the P/T Petri net~$ \HLtoLL(\SN) $.

\subsection{High-Level Petri Games}
Just as P/T Petri games are structurally based on P/T Petri nets,
high-level Petri games are based on high-level Petri nets.
Again, in a \emph{high-level Petri game}
$ \SG=(\hlsysplaces,\hlenvplaces,\hltransitions,\hlflowfunc,\type,\expression,\guard,\hlmarking_0,\hlbadplaces) $
with an \emph{underlying high-level net} 
$ \SN=(\hlplaces,\hltransitions,\hlflowfunc,\type,\expression,\guard,\hlmarking_0) $,
the places~$ \hlplaces $ are divided into 
\emph{system places}~$ \hlsysplaces $ and
\emph{environment places}~$ \hlenvplaces $.
The set $ \hlbadplaces\subseteq\hlsysplaces $ indicates the bad places.
High-level Petri games represent P/T Petri games: 
a high-level Petri game $ \SG $ 
(with underlying high-level net $ \SN $)
\emph{represents} a
P/T Petri game $ \HLtoLL(\SG) $ with underlying P/T~Petri net $ \HLtoLL(\SN) $.
The classification of places $ p.c $ in $ \HLtoLL(\SG) $
into system-, environment-, and bad places
corresponds to the places $ p $ in the high-level game.

In \refFig{symmetricGameExample}, a high-level Petri game $ \SG $
and its represented Petri game $ \PG=\HLtoLL(\SG) $ are depicted.
For the sake of clarity, we abbreviated the nodes in $ \HLtoLL(\SG) $.
Thus, e.g., the transition $ \mathit{a}.[x=c_1, y=c_2] $
is renamed to $ \mathit{a}_{12} $.
We often use notation from the represented P/T Petri game to express situations in a high-level game.

\hypertarget{target:CanonRepsOfSymbDecsets}{}%
\section{Canonical Representations of Symbolic Decision Sets}\label{sec:CanonRepsOfSymbDecsets}
In this paper, we investigate for a given high-level Petri game $ \SG $ with one environment player
whether the system players in the corresponding P/T Petri game $ \HLtoLL(\SG) $ have a winning strategy (and possibly generate one).
This problem is solved via a reduction to a symbolic two-player B\"uchi game \(\STPGc\).
The general idea of this reduction is, as in~\cite{DBLP:journals/iandc/FinkbeinerO17},
to equip the markings of the Petri game with a set of transitions for each system player (called \emph{commitment sets})
which allows the players to fix their next move.
In the generated B\"uchi game, only a subset of all interleavings is taken into account.
The selection arises from delaying 
the moves of the environment player until no system player can progress
without interacting with the environment.
By that, each system player gets informed about the environment's last position
during their next move.
This means that in every state, every system player 
knows the current position of the environment or learns it in the next step,
before determining their next move.
Thus, the system players can be considered to be completely informed about the game.
This is only possible due to the existence of only one environment player.
For more environment players such interleavings would not
ensure that each system player is informed (or gets informed in their next move)
about all environment positions.
The nodes of the game are called \emph{decision sets}.
In \cite{DBLP:journals/acta/GiesekingOW20}, 
symmetries in the Petri net are exploited to define
equivalence classes of decision sets, called \emph{symbolic decision sets}.
These are used to create an equivalent, but significantly smaller, B\"uchi~game.

In this section we introduce the new canonical representations of symbolic decision sets
which serve as nodes for the new B\"uchi game. 
We transfer relations between and properties of (symbolic) decision sets to the established canonical representations.
We start by recalling the definitions of symmetries in Petri nets~\cite{10.5555/115220.115224}
and of (symbolic) decision sets \cite{DBLP:journals/acta/GiesekingOW20}.

From now on we consider high-level Petri games $ \SG $
representing a safe P/T Petri game \(\HLtoLL(\SG)\) 
that has one environment player,
a bounded number of system players with a safety objective,
and where the system cannot proceed infinitely without the environment.
We denote the class of all such high-level Petri games by $ \SGclass $.

\subsection{Symmetric Nets}
High-level representations are often created using \emph{symmetries}~\cite{10.5555/115220.115224} in a Petri net.
Conversely,
in some high-level nets, symmetries can be read directly from the given specification.
A class of nets which allow this are the so called \emph{symmetric nets} (SN) \cite{ChiolaDFH93,Chiola1991}.\footnote{Symmetric Nets were formerly known as Well-Formed Nets (WNs). The renaming was part of the ISO standardization~\cite{DBLP:conf/forte/HillahKPT06}.}
In symmetric nets, the types of places
are selected from given (finite) \emph{basic color classes} $ \basecl_1,\dots,\basecl_n $.
For every place $ p\in\hlplaces $, we have 
$ \type(p)=\basecl_1^{\iota_p(1)}\!\times\!\cdots\!\times\!\basecl_n^{\iota_p(n)} $
for natural numbers $ \iota_p(1),\dots,\iota_p(n)\in\N $,
where $ \basecl_i^{x} $ denotes the $ x $-fold Cartesian product of $ \basecl_i $.\footnote{
	In the Cartesian products $ \type(p) $ and $ \valuations(t) $, we omit all $ \basecl_i^{x} $ with $ x=0 $ (empty sets).}
The possible values of variables contained in guards and arc expressions
are also basic classes. 
Thus, the modes of each transition~\mbox{$ t\in\hltransitions $}
are also given by a Cartesian product 
$ \valuations(t)=\basecl_1^{\iota_t(1)}\!\times\!\cdots\!\times\!\basecl_n^{\iota_t(n)} $.
A basic color class may be \emph{ordered}, i.e., 
equipped with a successor function $ \succfunc $
satisfying $ \forall c\in \basecl_{i}: \succfunc^{|\basecl_i|}(c)=c \land \forall 1\leq k< |\basecl_i|:  \succfunc^{k}(c)\neq c $,
where $ \succfunc^{k} $ describes the $ k $-fold successive execution of $ \succfunc $.
Additionally, each basic color class $ \basecl_i $ is possibly
partitioned into \emph{static subclasses} $ \basecl_i=\bigcup_{q=1}^{n_i}\basecl_{i,q} $.
Guards and arc expressions treat all elements in a static subclass equally.

\begin{example}
	The underlying high-level net $ \SN $ in Fig.~\ref{fig:symmetricGameExample}
	is a symmetric net with basic color classes $ \basecl_1=\{ c_1,c_2,c_3 \} $
	and $ \basecl_2=\{ \bullet \} $.
	We have, e.g., $ \type(A)=\basecl_1\!\times\basecl_1$ (i.e., $ \iota_A(1)=2,\iota_A(2)=0 $) and $\valuations(a)=\basecl_1\!\times\basecl_1 $
	(the two coordinates representing $ y $ and $ x $),
	and therefore, $ \iota_a(1)=2,\iota_a(2)=0 $.
	Both basic color classes $ \basecl_1 $ and $ \basecl_2 $ are not ordered and not further partitioned into static subclasses, i.e., 
	$ \basecl_1=\bigcup_{q=1}^{1}\basecl_{1,q}=\basecl_{1,1} $ and $ \basecl_2=\bigcup_{q=1}^{1}\basecl_{2,q}=\basecl_{2,1} $, and therefore $ n_1=n_2=1 $.
\end{example}

\begin{proposition}[\cite{ChiolaDFH93}]\label{prop:wlogSN}
	Any high-level Petri net can be transformed into a SN with the same basic structure,
	same place types,
	and equivalent arc labeling.
\end{proposition}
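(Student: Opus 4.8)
The plan is to argue by a transformation that normalizes the structure of an arbitrary high-level net into the restricted shape required by the definition of symmetric nets, while preserving the external behaviour. First I would recall what the target class allows: place types are Cartesian products of given basic color classes, variables in guards and arc expressions range over basic classes, guards and arc expressions are built only from the operations sanctioned for symmetric nets (projections, successor on ordered classes, constants from static subclasses, and so on). Starting from a high-level net $\SN=(\hlplaces,\hltransitions,\hlflowfunc,\type,\expression,\guard,\hlmarking_0)$, the idea is to leave $\hlplaces$, $\hltransitions$, $\hlflowfunc$ and the ``basic structure'' untouched --- this is what is meant by \emph{same basic structure} --- and to re-encode the colour domains and the labelling.

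The key steps I would carry out, in order: (i) For every place $p$, express its type $\type(p)$ as a product of finitely many finite sets; if these sets are not already among the basic colour classes, introduce them as new basic classes (or, if a place type is some arbitrary finite set, take that set itself as a basic class), so that afterwards $\type(p)=\basecl_1^{\iota_p(1)}\times\cdots\times\basecl_n^{\iota_p(n)}$ for suitable exponents. This gives \emph{same place types} in the sense that the sets of colours that may lie on each place are literally unchanged. (ii) Do the analogous step for the variable domains appearing in $\guard(t)$ and in the arc expressions $\expression(p,t),\expression(t,p)$, so that $\valuations(t)$ also becomes a Cartesian product of basic classes. (iii) Re-express every arc expression and guard in the symmetric-net expression language: an arc expression picking out colours as a function of the mode is replaced by the corresponding tuple of projections (and successor functions / constants where a class must be ordered or split into static subclasses); a guard is rewritten as a Boolean combination of the admissible atomic predicates. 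Whenever a genuine function on colours cannot be captured directly, one absorbs it by refining the static subclass partition or by adding an ordering, which is always possible for finite classes. (iv) Finally verify that firing rules coincide: a transition $t$ is enabled in mode $v$ in the new net iff it was in the old one, and the successor marking is the same, so $\HLtoLL$ of the two nets are isomorphic P/T nets --- this is the precise content of \emph{equivalent arc labeling}.

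The main obstacle --- and the step I expect to need the most care --- is step (iii): showing that \emph{every} guard and arc expression over finite colour domains can be re-expressed inside the deliberately restricted symmetric-net syntax, possibly at the cost of refining the static subclass partitions and imposing orderings. The point is that with finitely many colours one can always make the partition fine enough (in the extreme, singletons) that any function or predicate becomes definable by cases, but one must check that this does not destroy the place-type and basic-structure requirements, and that the resulting net is still a legal symmetric net (e.g.\ that the exponents $\iota_p,\iota_t$ are well defined and the arc expressions are of the required form). Everything else --- the bookkeeping of colour domains in steps (i)--(ii) and the behavioural equivalence check in step (iv) --- is routine once the expression-rewriting in (iii) is in place. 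Since Proposition~\ref{prop:wlogSN} is quoted from~\cite{ChiolaDFH93}, I would at this point simply refer to that source for the detailed construction and content myself with the sketch above.
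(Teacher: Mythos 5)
The paper does not prove this proposition at all: it is imported verbatim from \cite{ChiolaDFH93} and used as a black box, so there is no in-paper argument to compare yours against. Your sketch is a faithful reconstruction of the construction in that reference --- keep the net structure and place types, introduce the needed basic colour classes, and re-express guards and arc functions in the restricted symmetric-net syntax by refining static subclasses (in the extreme, down to singletons) so that arbitrary functions on finite colour domains become definable by cases --- and you correctly identify step~(iii) as the only non-routine part before deferring to the source, exactly as the paper does. The one caveat worth adding, since it matters for how the proposition is used later: refining static subclasses to singletons makes the transformation always possible but can collapse the symmetry group $\symmetries_\SN$ to the identity (symmetries must preserve static subclasses), so the proposition guarantees expressibility, not that the resulting SN retains any useful symmetry.
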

The \emph{symmetries} $ \symmetries_\SN $ in a symmetric net $ \SN $
are all tuples $ s=(s_1,\dots,s_n) $ such that each $ s_i $
is a permutation on $ \basecl_i $ satisfying $ \forall q=1,\dots,n_i : s_i(\basecl_{i,q})=\basecl_{i,q} $,
i.e., it respects the partition into static subclasses. In case of an ordered basic class we only consider rotations w.r.t.\ the successor function. This means that if $ \basecl_{i} $ is ordered and divided into two or more static subclasses then the only possibility for $ s_i $ is the identity $ \identity_{\basecl_i} $. 
A symmetry $ s $ can be applied to a single color $ c\in\basecl_i $
by $ s(c)=s_i(c) $.
The application to tuples, e.g., colors on places or transition modes,
is defined by the application in each entry.
The set $ \symmetries_\SN $, 
together with the function composition $ \circ $, 
forms a group
with identity $ (\identity_{\basecl_i})_{i=1}^n $.
In the represented P/T Petri net $ \HLtoLL(\SN) $, 
symmetries can be applied to places 
$ \place=p.c\in\places $ and transitions~$ \transition=t.v\in\transitions $
by defining $ s(p.c)=p.s(c) $ and $ s(t.v)=t.s(v)$.
The structure of symmetric nets ensures
$ \forall s\in\symmetries_\SN\ \forall \transition\in\transitions:
\preset{s(\transition)}=s(\preset{\transition}) \land \postset{s(\transition)}=s(\postset{\transition}) $, and
that symmetries are compatible with 
the firing relation, i.e., $ \forall s\in\symmetries_\SN : 
\marking[\transition\rangle\marking' \Leftrightarrow
s(\marking)[s(\transition)\rangle s(\marking') $.
In a symmetric net, we can w.l.o.g. assume the initial marking $ \hlmarking_0 $
to be \emph{symmetric}, i.e., $ \forall s\in\symmetries_\SN :s(\hlmarking_0)=\hlmarking_0 $.

\subsection{Symbolic Decision Sets}
Consider a high-level Petri game $ \SG $ with its corresponding P/T Petri game representation
$ \HLtoLL(\SG)=(\sysplaces,\envplaces,\transitions,\flowfunc,\marking_\mathsf{0},\badplaces) $.
A \emph{decision set} is a set
$ \decset\subseteq \places\!\times\!(\powerset(\transitions)\cup\top) $.
An element $ (\place,\comset)\in\decset $
with $ \comset\subseteq\postset{\place} $
indicates there is a player on place~$ \place $
who allows all transitions in $ \comset $ to fire.
$ \comset $ is then called a \emph{commitment set}.
An element $ (\place,\top)\in\decset $ indicates the player on place~$ \place $
has to 
choose a commitment set in the next step.
The step of this decision is called \emph{$\top $-resolution}.

In a $ \top $-resolution, each $ \top $-symbol in a decision set $ \decset $ is replaced with a suitable commitment set.
This relation is denoted by $ \decset[\top\rangle \decset' $.
If there are no $ \top $-symbols in $ \decset $,
a transition $ \transition $ is \emph{enabled}, if
$ \forall \place\in\preset{\transition}\ \exists (\place,\comset)\in\decset : \transition\in\comset $,
i.e., there is a token on every place in $ \preset{\transition} $ (as for markings) and additionally, 
$ \transition $ is in every commitment set of such a token.
In the process of firing an enabled transition, the tokens are moved according to the flow~$ \flowfunc $.
The moved or generated tokens on system places are then equipped with a $ \top $-symbol,
while the tokens on environment places allow \emph{all} transitions 
in their postset.
This relation is denoted by $ \decset[\transition\rangle\decset' $.
The \emph{initial decision set} of the game is given by
$ 
\decset_0=
\{ (\place,\{\transition\in\transitions\with \place\in\preset{\transition} \}) \with \place\in\envplaces\cap\marking_{\mathsf{0}} \}
\cup
\{ (\place,\top) \with \place\in\sysplaces\cap\marking_{\mathsf{0}} \}
$,
i.e., the environment in the initial marking allows all possible transitions, 
the system players have to choose a commitment set.
\begin{example}\label{ex:DecisionFirings}
	Assume in the Petri game in Fig.~\ref{fig:symmetricGameExampleHLPG} that
	the computers initially allow transition~$ \mathit{inf} $ in every mode.
	The environment player on $ \mathit{Env} $ fires transition~$ \mathit{d} $ in mode~$ c_1 $.
	After that, the system gets informed of the environment's decision 
	via transition~$ \mathit{inf}$ in mode~$c_1 $.
	The system players, now back on $ \mathit{Sys} $,
	decide that they all want to assign themselves to $ c_1 $ via a $ \top $-resolution.
	This corresponds to the following sequence of decision sets, where we abbreviate $ \mathit{Sys} $ by $ \mathit{S} $.
	\begin{center}
		\begin{tikzpicture}[scale = 0.75, transform shape]\renewcommand{\xdis}{7cm}
		\tikzstyle{decisionset} = [rectangle,fill=lightgray,
		draw,align=center,minimum size=7mm]
		\tikzstyle{env} = [fill=white]
		\tikzstyle{sys} = [rounded corners]
		\node[decisionset,env]	(D0)	{
			$ \decset_0  $};
		\node[decisionset,env] at (D0) [xshift=3.5cm]	(D1)	{
			$ \left( \mathit{Env}.\bullet, \{ \mathit{d}.c_1,\mathit{d}.c_2,\mathit{d}.c_3 \} \right) $\\
			$ \left( \mathit{S}.c_1, \{ \mathit{inf}\!.c_1,\mathit{inf}\!.c_2,\mathit{inf}\!.c_3 \} \right)  $\\
			$ \left( \mathit{S}.c_2, \{ \mathit{inf}\!.c_1,\mathit{inf}\!.c_2,\mathit{inf}\!.c_3 \} \right)  $\\
			$ \left( \mathit{S}.c_3, \{ \mathit{inf}\!.c_1,\mathit{inf}\!.c_2,\mathit{inf}\!.c_3 \} \right)  $
		};
		\node[decisionset,env] at (D1) [xshift=5.8cm]	(D2)	{
			$ \left( \mathit{I}.c_1, \{ \mathit{inf}\!.c_1\} \right) $\\
			$ \left( \mathit{S}.c_1, \{ \mathit{inf}\!.c_1,\mathit{inf}\!.c_2,\mathit{inf}\!.c_3 \} \right)  $\\
			$ \left( \mathit{S}.c_2, \{ \mathit{inf}\!.c_1,\mathit{inf}\!.c_2,\mathit{inf}\!.c_3 \} \right)  $\\
			$ \left( \mathit{S}.c_3, \{ \mathit{inf}\!.c_1,\mathit{inf}\!.c_2,\mathit{inf}\!.c_3 \} \right)  $
		};
		\node[decisionset,env] at (D2) [xshift=4.9cm]	(D3)	{
			$ \left( \mathit{R}.c_1, \{ g.c_1\} \right) $\\
			$ \left( \mathit{S}.c_1, \top \right)  $\\
			$ \left( \mathit{S}.c_2, \top \right)  $\\
			$ \left( \mathit{S}.c_3, \top \right)  $
		};
		\node[decisionset,env] at (D3) [xshift=3.7cm]	(D4)	{
			$ \left( \mathit{R}.c_1, \{ g.c_1\} \right) $\\
			$ \left( \mathit{S}.c_1, \{a.(c_1,c_1)\} \right)  $\\
			$ \left( \mathit{S}.c_2, \{a.(c_2,c_1)\} \right)  $\\
			$ \left( \mathit{S}.c_3, \{a.(c_3,c_1)\} \right)  $
		};
		\draw[->]
		(D0) edge node[above]{$ \top $} (D1)
		(D1) edge node[above]{$ \mathit{d}.c_1 $} (D2)
		(D2) edge node[above]{$ \mathit{inf}.c_1 $} (D3)
		(D3) edge node[above]{$ \top $} (D4)
		;
		\end{tikzpicture}
	\end{center}
\end{example} 

Each decision set describes a situation in the Petri game. 
In the two-player B\"uchi game used to solve the Petri game \cite{DBLP:journals/iandc/FinkbeinerO17},
the decision sets constitute the nodes.
A high-level Petri game $ \SG $ has the same symmetries $ \symmetries_\SN $
as its underlying symmetric net $ \SN $.
They can be applied to decision sets
by applying them to every occurring color $ c $ or mode $ v $.
For a decision set $ \decset $, an equivalence class
$ \{ s(\decset)\with s\in\symmetries_\SN \} $
is called the \emph{symbolic decision set} of $ \decset $,
and contains symmetric situations in the Petri game.
In \cite{DBLP:journals/acta/GiesekingOW20}, these equivalence classes replace 
individual decision sets in the two-player B\"uchi game
to achieve a substantial state space reduction.

\begin{example}
	Consider the second to last decision set in the sequence above.
	This situation is symmetric to the cases where the environment 
	chose computer $ c_2 $ or $ c_3 $ as the host.
	In the example $ \SG $, we have the two color classes $ \basecl_1 $ and $ \basecl_2 $.
	Since $ |\basecl_2|=1 $, the only permutation on $ \basecl_2 $ is $ \identity_{\basecl_2} $.
	Thus, the symmetries in $ \SG $ are the permutations on $ \basecl_1 $.
	Symmetries transform the elements in the symbolic decision set into each other.
	The corresponding symbolic decision set contains the following three elements $ \decset$, $\decset' $, and $ \decset'' $:
	\begin{center}
		\begin{tikzpicture}[scale = 0.75, transform shape]\renewcommand{\xdis}{5cm}
		\tikzstyle{decisionset} = [rectangle,fill=lightgray,
		draw,align=center,minimum size=7mm]
		\tikzstyle{env} = [fill=white]
		\tikzstyle{sys} = [rounded corners]
		\node[decisionset,env](D4)	{
			$ \left( \mathit{R}.c_1, \{ g.c_1\} \right) $\\
			$ \left( \mathit{S}.c_1, \top \right)  $\\
			$ \left( \mathit{S}.c_2, \top \right)  $\\
			$ \left( \mathit{S}.c_3, \top \right)  $
		};
		\node[decisionset,env] at (D4) [xshift=-7cm] (D4')	{
			$ \left( \mathit{R}.c_2, \{ g.c_2\} \right) $\\
			$ \left( \mathit{S}.c_1, \top \right)  $\\
			$ \left( \mathit{S}.c_2, \top \right)  $\\
			$ \left( \mathit{S}.c_3, \top \right)  $
		};
		\node[decisionset,env] at (D4) [xshift=7cm] (D4'')	{
			$ \left( \mathit{R}.c_3, \{ g.c_3\} \right) $\\
			$ \left( \mathit{S}.c_1, \top \right)  $\\
			$ \left( \mathit{S}.c_2, \top \right)  $\\
			$ \left( \mathit{S}.c_3, \top \right)  $
		};
		\path[->]
		(D4) edge[in=20,out=10,loop] node[right] {$ \mathit{id}_{\basecl_1} $} (D4)
		(D4) edge[in=160,out=170,loop] node[left] {$ c_2\leftrightarrow c_3 $} (D4)
		(D4) edge node[above,pos=0.6] {$ c_1\leftrightarrow c_2 $} (D4')
		([yshift=-3mm]D4.west) edge node[below] {$ c_1\mapsto c_2\mapsto c_3\mapsto c_1 $} ([yshift=-3mm]D4'.east)
		(D4) edge node[above,pos=0.6] {$ c_1\leftrightarrow c_3 $} (D4'')
		([yshift=-3mm]D4.east) edge node[below] {$ c_1\mapsto c_3\mapsto c_2\mapsto c_1 $} ([yshift=-3mm]D4''.west)
		;
		
		\node at (D4) [xshift = 1.45cm, yshift=-0.65cm](D4name) {\large$ \decset $};
		\node at (D4') [xshift = 1.5cm, yshift=0.65cm](D4'name) {\large$ \decset' $};
		\node at (D4'') [xshift = -1.55cm, yshift=0.65cm](D4''name) {\large$ \decset'' $};
		\end{tikzpicture}
	\end{center}
Each edge between two decision sets corresponds to the application of a symmetry. 
The abbreviated notation $ c\mapsto c' \mapsto c'' \mapsto c $ means that 
each element is mapped to the next in line. Analogously, $ c \leftrightarrow c' $
means that $ c $ and $ c' $ are switched.
\end{example}

\subsection{Canonical Representations}
In order to exploit symmetries to reduce the size of the state space, 
one aims to consider only one representative of each of the equivalence classes
induced by the symmetries.
This can be done either by checking whether a newly generated state
is equivalent to any already generated one, or
by transforming each newly generated state into an equivalent,
canonical representative.
In \cite{DBLP:journals/acta/GiesekingOW20} we consider the former approach. 
The nodes of the symbolic B\"uchi game are symbolic decision sets.
In the construction, an arbitrary representative $ \overline{\decset} $
is chosen for each of these equivalence classes. 
This means, when reaching a new node~$ \decset' $, 
we must apply every symmetry $ s\in\symmetries_\SN  $ to test whether there 
already is a representative $ \overline{\decset'}=s(\decset') $,
or whether $ \decset' $ is in a new symbolic decision set.

In this section, we now aim at the second approach and
define the new \emph{canonical representations} of symbolic decision sets.
For that, we first define \emph{dynamic representations}, 
and then show how to construct a canonical one.
We use these instead of (arbitrary representatives of) 
symbolic decision sets in the construction of 
the symbolic B\"uchi game in \refSection{DirStratGen}.

\subsubsection{Dynamic Representations}
A dynamic representation is an abstract description of a symbolic decision set.
It consists of 
dynamic subclasses of variables, a function linking these dynamic subclasses to static subclasses of the net, and a dynamic decision set
where the dynamic subclasses replace explicit colors.
Any (valid) assignment of values to the variables in the dynamic 
subclasses results in a decision set in the equivalence class.

Formally, a \emph{dynamic representation}
is a tuple $ \rep=(\dynsubclasses,\assdynsubcl,\decsetfct) $,
with the set of \emph{dynamic subclasses} 
$ \dynsubclasses=\{ \dynsubcl_{i}^j\with 1\leq i \leq n, 1\leq j \leq m_{i} \} $
for natural numbers $ m_{i} $,
a function $ \assdynsubcl: \dynsubclasses\to\N $, and 
a \emph{dynamic decision set}~$ \decsetfct $.
A dynamic subclass $ \dynsubcl_{i}^j $ contains a finite number $ |\dynsubcl_{i}^j| $ of variables with values in $ \basecl_{i,q} $ where
$ q=\assdynsubcl(\dynsubcl_{i}^j)\in\{1,\dots,n_i \} $, i.e., $ \assdynsubcl $ describes from which dynamic subclass these variables are. 
The function satisfies $ j< k\Rightarrow \assdynsubcl(\dynsubcl_{i}^j)< \assdynsubcl(\dynsubcl_{i}^k)$,
i.e., the dynamic subclasses are grouped by assigned static subclass.
In total, there are as many variables with values in $ \basecl_{i,q} $ as there are colors, i.e., $ \sum_{\assdynsubcl(\dynsubcl_{i}^j)=q} |\dynsubcl_{i}^j|=|\basecl_{i,q}| $.
A dynamic decision set is the same as a decision set,
with dynamic subclasses replacing explicit colors.

The components $ \dynsubclasses $ and $ \assdynsubcl $ described above are schematically depicted in Fig.~\ref{fig-schematic-depiction}
for a generic example. We see two basic color classes $ \basecl_{1} $ (not divided into static subclasses, i.e., $ \basecl_{1}=\basecl_{1,1} $) and $ \basecl_{2} $ (divided into the three static subclasses $ \basecl_{2,1},\basecl_{2,2},\basecl_{2,3} $).
The dynamic subclasses in the top level are given by $ \dynsubclasses=\{ \dynsubcl_1^1,\dynsubcl_1^2,\dynsubcl_2^1,\dots, \dynsubcl_2^6 \} $.
The function $ \assdynsubcl $ is indicated by the wave arrows,
e.g., $ \assdynsubcl(\dynsubcl_2^1)=\assdynsubcl(\dynsubcl_2^2)=1 $ or $ \assdynsubcl(\dynsubcl_2^4)=2 $.

An \emph{assignment} $ \validassignment:\bigcup_{i=1}^n \basecl_i\to \dynsubclasses $ is \emph{valid} if it respects the cardinality of dynamic subclasses, i.e., 
$ {| \{ c\in\basecl_{i} \with \validassignment(c)=\dynsubcl_{i}^j \} |}=|\dynsubcl_{i}^j| $,
as well as the function $ \assdynsubcl $, i.e., $ \validassignment(c)=\dynsubcl_i^j\land \assdynsubcl(\dynsubcl_i^j)=q\Rightarrow c\in C_{i,q}  $.
In case of an ordered static subclass $ \basecl_i $,
consecutive elements (w.r.t.\ $ \succfunc $) must be mapped to consecutive dynamic subclasses,
i.e., $ \validassignment(c) = \dynsubcl_{i}^j \Rightarrow \validassignment(\succfunc(c))\in\{\dynsubcl_{i}^j, \dynsubcl_{i}^{j+1}\} $.
Every valid assignment of colors $ c\in\basecl_{i} $ to the $ \dynsubcl_{i}^j$, \mbox{$1\leq j\leq m_{i} $}, gives a partition of~$ \basecl_{i} $.

For every decision set $ \decset $ in a symbolic decision set 
with dynamic representation $ \rep=(\dynsubclasses,\assdynsubcl,\decsetfct) $,
there is a valid assignment $ \validassignment_\decset $
such that $ \decset=\validassignment_\decset^{-1}(\decsetfct) $.
In general, there are several dynamic representations of a symbolic decision~set.
On the other hand, every decision set reached from the dynamic decision set in such a way 
is in the same symbolic decision set.
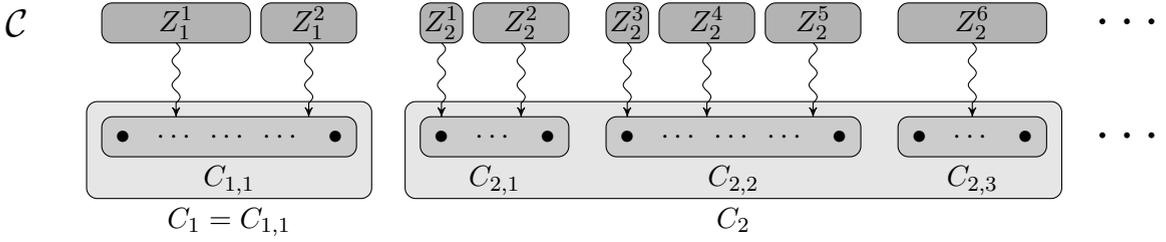
\begin{figure}[t]
	\centering
	\begin{tikzpicture}
	\renewcommand{\xdis}{7mm}
	\renewcommand{\ydis}{15mm}
		\node[] (c11) {\phantom{\scalebox{1.5}{$ \bullet $}}};
		\node[right=of c11] (c12) {$\cdots$};
		\node[right=of c12,xshift=\xdis] (c1n-1) {$\cdots$};
		\node[right=of c1n-1] (c1n) {\phantom{\scalebox{1.5}{$ \bullet $}}};
		\node at ($ (c12)!0.5!(c1n-1) $) (dotsC1) {$ \cdots $};
		
		\node[right=of c1n,xshift=\xdis] (c21) {\phantom{\scalebox{1.5}{$ \bullet $}}};
		\node[right=of c21,xshift=\xdis] (c22) {\phantom{\scalebox{1.5}{$ \bullet $}}};
		\node at ($ (c21)!0.5!(c22) $) (dotsC21) {$ \cdots $};
		\node[right=of c22,xshift=0.5*\xdis] (c23) {\phantom{\scalebox{1.5}{$ \bullet $}}};
		\node[right=of c23] (c231) {$ \cdots $};
		\node[right=of c231,xshift=\xdis] (c24) {$ \cdots $};
		\node[right=of c24] (c241) {\phantom{\scalebox{1.5}{$ \bullet $}}};
		\node at ($ (c231)!0.5!(c24) $) (dotsC22) {$ \cdots $};
		\node[right=of c241,xshift=0.5*\xdis] (c25) {\phantom{\scalebox{1.5}{$ \bullet $}}};
		\node[right=of c25,xshift=\xdis] (c26) {\phantom{\scalebox{1.5}{$ \bullet $}}};
		\node at ($ (c25)!0.5!(c26) $) (dotsC23) {$ \cdots $};
		
		\node at (c11) {$ \bullet $};
		\node at (c1n) {$ \bullet $};
		\node at (c21) {$ \bullet $};
		\node at (c22) {$ \bullet $};
		\node at (c23) {$ \bullet $};
		\node at (c241) {$ \bullet $};
		\node at (c25) {$ \bullet $};
		\node at (c26) {$ \bullet $};
		
		\node[yshift=-2.5mm,anchor=north] at (dotsC1) (C1) {$ \basecl_{1,1} $};
		\node[yshift=-2.5mm,anchor=north] at (dotsC21) (C21) {$ \basecl_{2,1} $};
		\node[yshift=-2.5mm,anchor=north] at (dotsC22) (C22) {$ \basecl_{2,2} $};
		\node[yshift=-2.5mm,anchor=north] at (dotsC23) (C23) {$ \basecl_{2,3} $};

		\node[yshift=-8mm,anchor=north] at (dotsC1) (C1) {$ \basecl_1=\basecl_{1,1} $};
		\node[yshift=-8mm,anchor=north] at (dotsC22) (C22) {$ \basecl_{2} $};
		
		\begin{pgfonlayer}{bg}
		\draw[rounded corners, fill=gray!20] 
			([xshift=-2mm,yshift=2mm]c11.north west) rectangle ([xshift=2mm,yshift=-5.5mm]c1n.south east)
			([xshift=-2mm,yshift=2mm]c21.north west) rectangle ([xshift=2mm,yshift=-5.5mm]c26.south east)
		;
		\draw[rounded corners, fill=gray!40] 
			(c11.north west) rectangle (c1n.south east)
			(c21.north west) rectangle (c22.south east)
			(c23.north west) rectangle (c241.south east)
			(c25.north west) rectangle (c26.south east)
		;
		\end{pgfonlayer}

		\node[above=of c11] (h11) {\phantom{\scalebox{1.5}{$ \bullet $}}};
		\node[above=of dotsC1] (h12) {\phantom{\scalebox{1.5}{$ \bullet $}}};
		\node[above=of c1n-1] (h13) {\phantom{\scalebox{1.5}{$ \bullet $}}};
		\node[above=of c1n] (h14) {\phantom{\scalebox{1.5}{$ \bullet $}}};
		
		\node[above=of c21] (h211) {\phantom{\scalebox{1.5}{$ \bullet $}}};
		\node[above=of dotsC21] (h212) {\phantom{\scalebox{1.5}{$ \bullet $}}};
		\node[above=of c22] (h213) {\phantom{\scalebox{1.5}{$ \bullet $}}};
		
		\node[above=of c23] (h221) {\phantom{\scalebox{1.5}{$ \bullet $}}};
		\node[above=of c231] (h222) {\phantom{\scalebox{1.5}{$ \bullet $}}};
		\node[above=of dotsC22] (h223) {\phantom{\scalebox{1.5}{$ \bullet $}}};
		\node[above=of c24] (h224) {\phantom{\scalebox{1.5}{$ \bullet $}}};
		\node[above=of c241] (h225) {\phantom{\scalebox{1.5}{$ \bullet $}}};
		
		\node[above=of c25] (h231) {\phantom{\scalebox{1.5}{$ \bullet $}}};
		\node[above=of c26] (h232) {\phantom{\scalebox{1.5}{$ \bullet $}}};
		
		\node[xshift=-0.5*\xdis] at (h11) (s1) {};
		\node[xshift=0.5*\xdis] at (h12) (s2) {};
		\node[xshift=0.5*\xdis] at (h14) (s3) {};
		
		\node[xshift=-0.5*\xdis] at (h211) (s4) {};
		\node[xshift=-0.5*\xdis] at (h212) (s5) {};
		\node[xshift=0.5*\xdis] at (h213) (s6) {};
		
		\node[xshift=-0.5*\xdis] at (h221) (s7) {};
		\node[xshift=-0.5*\xdis] at (h222) (s8) {};
		\node[xshift=0.5*\xdis] at (h223) (s9) {};
		\node[xshift=0.5*\xdis] at (h225) (s10) {};
		
		\node[xshift=-0.5*\xdis] at (h231) (s11) {};
		\node[xshift=0.5*\xdis] at (h232) (s12) {};
		
		\begin{pgfonlayer}{bg}
		\draw[rounded corners, fill=gray!60] 
		(h11.north west) rectangle (h12.south east)
		(h13.north west) rectangle (h14.south east)
		(h211.north west) rectangle (h211.south east)
		(h212.north west) rectangle (h213.south east)
		(h221.north west) rectangle (h221.south east)
		(h222.north west) rectangle (h223.south east)
		(h224.north west) rectangle (h225.south east)
		(h231.north west) rectangle (h232.south east)		
		;
		
		\end{pgfonlayer}

		\node[left=of h11,xshift=-\xdis] (dyncl) {\Large $ \dynsubclasses $};
		\node[right=of c26,xshift=\xdis] (baseclDots) {\huge $ \cdots $};
		\node[right=of h232,xshift=\xdis] (dynclDots) {\huge $ \cdots $};
				
		\node at ($ (h11)!0.5!(h12) $) (Z11) {$ \dynsubcl_1^1 $};
		\node at ($ (h13)!0.5!(h14) $) (Z12) {$ \dynsubcl_1^2 $};
		\node at (h211) (Z21) {$ \dynsubcl_2^1 $};
		\node at ($ (h212)!0.5!(h213) $) (Z22) {$ \dynsubcl_2^2 $};
		\node at (h221) (Z23) {$ \dynsubcl_2^3 $};
		\node at ($ (h222)!0.5!(h223) $) (Z24) {$ \dynsubcl_2^4 $};
		\node at ($ (h224)!0.5!(h225) $) (Z25) {$ \dynsubcl_2^5 $};
		\node at ($ (h231)!0.5!(h232) $) (Z26) {$ \dynsubcl_2^6 $};
		
		\path[segment aspect=0,->,snake=coil,segment amplitude=1.5pt,segment length=9pt,every path/.style={snake=coil}]
		($ (h11.south)!0.5!(h12.south) $) edge ($ (c11.north)!1/4!(c1n.north) $)
		($ (h13.south)!0.5!(h14.south) $) edge ($ (c11.north)!7/8!(c1n.north) $)
		(h211.south) edge (c21.north)
		($ (h212.south)!0.5!(h213.south) $) edge ($ (c21.north)!3/4!(c22.north) $)
		(h221.south) edge (c23.north)
		($ (h222.south)!0.5!(h223.south) $) edge ($ (c23.north)!3/8!(c241.north) $)
		($ (h224.south)!0.5!(h225.south) $) edge ($ (c23.north)!7/8!(c241.north) $)
		($ (h231.south)!0.5!(h232.south) $) edge ($ (c25.north)!1/2!(c26.north) $)
		;
		
	\end{tikzpicture}
	\caption{A schematic depiction of dynamic subclasses $ \dynsubclasses $ (top level) and the function $ \assdynsubcl $ (illustrated by wave arrows) in a dynamic representation
for a generic example.
	In the bottom level, basic color classes $ \basecl_{i} $ of a symmetric net, divided into static subclasses $ \basecl_{i,q} $, are depicted. 
	The dots $ \bullet $ are arbitrary colors belonging to the respective static subclass.
	For every dynamic subclass $ \dynsubcl_i^j $, the value $ q=\assdynsubcl(\dynsubcl_i^j) $ is indicated by a wave arrow to
	$ \basecl_{i,q} $, meaning that it represents $ |\dynsubcl_i^j| $ variables with values in $ \basecl_{i,q} $.
	The cardinality of each static subclass is the sum of the cardinalities of the corresponding dynamic subclasses:
	$ \sum_{\assdynsubcl(\dynsubcl_{i}^j)=q} |\dynsubcl_{i}^j|=|\basecl_{i,q}| $ for all $ i $.
	\label{fig-schematic-depiction}}
\end{figure}

\begin{example}\label{ex:Representation}
	Consider the symbolic decision set from the last example.
	We can naively build a dynamic representation by  taking one of the decision sets, and replacing each color by a dynamic subclass of cardinality $ 1 $.
	This results in as many dynamic subclasses as there are colors, i.e.,
	$ \dynsubclasses=\{ \dynsubcl_1^1,\dynsubcl_1^2,\dynsubcl_1^3,\dynsubcl_2^1 \} $
	with $ |\dynsubcl_i^j|=1 $ for all $ i,j $. 
	Since we have no partition into static subclasses, 
	i.e., $ \basecl_{1}=\basecl_{1,1} $ and $ \basecl_{2}=\basecl_{2,1} $,
	the function $ \assdynsubcl $ is trivially defined by $ \forall i,j: \assdynsubcl(\dynsubcl_i^j)=1 $.
	Below, the resulting dynamic decision set $ \decsetfct $ is depicted,
	with valid assignments that lead to elements $ \decset$, $\decset' $, and~$ \decset'' $.
	\begin{center}
		\begin{tikzpicture}[scale = 0.75, transform shape]\renewcommand{\xdis}{5cm}
		\tikzstyle{decisionset} = [rectangle,fill=lightgray,
		draw,align=center,minimum size=7mm]
		\tikzstyle{env} = [fill=white]
		\tikzstyle{sys} = [rounded corners]
		\node[decisionset,env,very thick,inner sep=2mm](Rep)	{
			$ \left( \mathit{R}.\dynsubcl_1^1, \{ g.\dynsubcl_1^1\} \right) $\\
			$ \left( \mathit{S}.\dynsubcl_1^1, \top \right)  $\\
			$ \left( \mathit{S}.\dynsubcl_1^2, \top \right)  $\\
			$ \left( \mathit{S}.\dynsubcl_1^3, \top \right)  $
		};
		\node[decisionset,env] at (Rep) [xshift=4.5cm] (D4)	{
			$ \left( \mathit{R}.c_1, \{ g.c_1\} \right) $\\
			$ \left( \mathit{S}.c_1, \top \right)  $\\
			$ \left( \mathit{S}.c_2, \top \right)  $\\
			$ \left( \mathit{S}.c_3, \top \right)  $
		};
		\node[decisionset,env] at (Rep) [xshift=10.9cm] (D4')	{
			$ \left( \mathit{R}.c_2, \{ g.c_2\} \right) $\\
			$ \left( \mathit{S}.c_1, \top \right)  $\\
			$ \left( \mathit{S}.c_2, \top \right)  $\\
			$ \left( \mathit{S}.c_3, \top \right)  $
		};
		\node[decisionset,env] at (Rep) [xshift=17.0cm] (D4'')	{
			$ \left( \mathit{R}.c_3, \{ g.c_3\} \right) $\\
			$ \left( \mathit{S}.c_1, \top \right)  $\\
			$ \left( \mathit{S}.c_2, \top \right)  $\\
			$ \left( \mathit{S}.c_3, \top \right)  $
		};
		\draw[->,thick]
		(Rep) edge node[above] {$ \dynsubcl_1^j \mapsfrom c_j $} (D4);
		\draw[->,thick]
		([yshift=-0.2mm]Rep.north east) -- node[below, pos=0.82, align=center] {$ \dynsubcl_1^2 \mapsfrom c_1,\dynsubcl_1^1\mapsfrom c_2,$ \\ $\dynsubcl_1^3\mapsfrom c_3 $} ([yshift=-0.2mm,xshift=7.5cm]Rep.north east) -- (D4');
		\draw[->,thick]
		([yshift=0.2mm]Rep.south east) -- node[above, pos=0.92, align=center] {$ \dynsubcl_1^3 \mapsfrom c_1,\dynsubcl_1^2\mapsfrom c_2,$ \\ $\dynsubcl_1^1\mapsfrom c_3 $} ([yshift=0.2mm,xshift=13.5cm]Rep.south east) -- (D4'')
		;
		\node at (Rep) [xshift = 1.75cm, yshift=-0.65cm](D4name) {\large$ \decsetfct $};
		\node at (D4) [xshift = 1.55cm, yshift=-0.65cm](D4name) {\large$ \decset $};
		\node at (D4') [xshift = 1.6cm, yshift=0.65cm](D4'name) {\large$ \decset' $};
		\node at (D4'') [xshift = -1.65cm, yshift=0.65cm](D4''name) {\large$ \decset'' $};
		\end{tikzpicture}
	\end{center}
	For example, the element $ (\mathit{R}.\dynsubcl_1^1, \{ g.\dynsubcl_1^1\}) $
	represents \emph{one} fixed arbitrary color $ c\in\basecl_{1}=\basecl_{1,1} $ (since $ |\dynsubcl_1^1|=1 $ and $ \assdynsubcl(\dynsubcl_1^1)=1 $) on place $ \mathit{R} $
	with $ g.c $ in its commitment set. The same color $ c $ resides on place $ \mathit{S} $, equipped with a $ \top $-symbol. 
\end{example}
As already mentioned, there are possibly many dynamic representations of the same symbolic decision set. 
In the next two subsections, we define minimal and ordered representations to the aim of determine a unique (canonical) one.
\subsubsection{Minimality}
We notice in the example above that $ \dynsubcl_1^2 $ and $ \dynsubcl_1^3 $
appear in the same context in $ \decsetfct $.
The \emph{context}~$ \decsetcon_\rep(\dynsubcl_{i}^j) $ 
of a dynamic subclass $ \dynsubcl_{i}^j $ is defined as the set of tuples
in $ \decsetfct $ where exactly one appearance of~$ \dynsubcl_{i}^j $
is replaced by a symbol $ \insertion $. In our example, $ \decsetcon_\rep(\dynsubcl_1^1) = \{
(\mathit{S}.\insertion, \top ),
(\mathit{R}.\insertion, \mathit{g}.c_1 ),
(\mathit{R}.c_1, \mathit{g}.\insertion )
\} $ and 
$ \decsetcon_\rep(\dynsubcl_1^2)=\{ (\mathit{S}.\insertion, \top ) \}=\decsetcon_\rep(\dynsubcl_1^3) $.
This means $ \dynsubcl_1^2 $ and $ \dynsubcl_1^3 $ can be \emph{merged}
into a new dynamic subclass of cardinality $ 2 $.
The resulting new dynamic representation is given by 
$ \dynsubclasses_{\mathit{min}}=\{ \dynsubcl_1^1,\dynsubcl_1^2,\dynsubcl_2^1 \} $
with $ |\dynsubcl_1^1|=|\dynsubcl_2^1|=1 $ and $ |\dynsubcl_1^2|=2 $,
and $ \decsetfct_{\mathit{min}}=\left\{ ( \mathit{R}.\dynsubcl_1^1, \{ g.\dynsubcl_1^1\} ),
( \mathit{S}.\dynsubcl_1^1, \top ),
( \mathit{S}.\dynsubcl_1^2, \top ) \right\} $.
As before, we have $ \assdynsubcl_{\mathit{min}}\equiv 1 $.
\begin{center}
	\begin{tikzpicture}[scale = 0.75, transform shape]\renewcommand{\xdis}{5cm}
	\tikzstyle{decisionset} = [rectangle,fill=lightgray,
	draw,align=center,minimum size=7mm]
	\tikzstyle{env} = [fill=white]
	\tikzstyle{sys} = [rounded corners]
	\node[decisionset,env,very thick](Rep)	{
		$ \left( \mathit{R}.\dynsubcl_1^1, \{ g.\dynsubcl_1^1\} \right) $\\
		$ \left( \mathit{S}.\dynsubcl_1^1, \top \right)  $\\
		$ \left( \mathit{S}.\dynsubcl_1^2, \top \right)  $
	};
	\node[decisionset,env] at (Rep) [xshift=5.5cm] (D4)	{
		$ \left( \mathit{R}.c_1, \{ g.c_1\} \right) $\\
		$ \left( \mathit{S}.c_1, \top \right)  $\\
		$ \left( \mathit{S}.c_2, \top \right)  $\\
		$ \left( \mathit{S}.c_3, \top \right)  $
	};
	\node[decisionset,env] at (Rep) [xshift=10.8cm] (D4')	{
		$ \left( \mathit{R}.c_2, \{ g.c_2\} \right) $\\
		$ \left( \mathit{S}.c_1, \top \right)  $\\
		$ \left( \mathit{S}.c_2, \top \right)  $\\
		$ \left( \mathit{S}.c_3, \top \right)  $
	};
	\node[decisionset,env] at (Rep) [xshift=16.5cm] (D4'')	{
		$ \left( \mathit{R}.c_3, \{ g.c_3\} \right) $\\
		$ \left( \mathit{S}.c_1, \top \right)  $\\
		$ \left( \mathit{S}.c_2, \top \right)  $\\
		$ \left( \mathit{S}.c_3, \top \right)  $
	};
	\draw[->,thick]
	(Rep) edge node[align=center] {$ \dynsubcl_1^j \mapsfrom c_1 $\\$ \dynsubcl_1^2 \mapsfrom c_2,c_3 $} (D4);
	\draw[->,thick]
	(Rep.north) -- ([yshift=4mm]Rep.north) -- node[below, pos=0.90, align=center] {$\dynsubcl_1^1\mapsfrom c_2$ \\$ \dynsubcl_1^2 \mapsfrom c_1,c_3,$  } ([xshift=7.5cm,yshift=4mm]Rep.north east) -- (D4');
	\draw[->,thick]
	(Rep.south) -- ([yshift=-4mm]Rep.south) -- node[above, pos=0.93, align=center] { $\dynsubcl_1^1\mapsfrom c_3 $\\$ \dynsubcl_1^2 \mapsfrom c_1, c_2,$ } ([xshift=13cm,yshift=-4mm]Rep.south east) -- (D4'')
	;
	\node at (Rep) [xshift = 1.9cm, yshift=-0.9cm](D4name) {\large$ \decsetfct_{\mathit{min}} $};
	\node at (D4) [xshift = 1.45cm, yshift=-0.65cm](D4name) {\large$ \decset $};
	\node at (D4') [xshift = 1.5cm, yshift=0.65cm](D4'name) {\large$ \decset' $};
	\node at (D4'') [xshift = -1.55cm, yshift=0.65cm](D4''name) {\large$ \decset'' $};
	\end{tikzpicture}
\end{center}

\emph{Minimal representations} $ \rep $ do not contain 
any two dynamic subclasses $ \dynsubcl_{i}^j,\dynsubcl_{i}^k $ satisfying both
$ \assdynsubcl(\dynsubcl_i^j)= \assdynsubcl(\dynsubcl_i^j)
$ and $ \decsetcon_\rep(\dynsubcl_i^j)= \decsetcon_\rep(\dynsubcl_i^k) $.
Such two dynamic subclasses represent two sets of colors in the same static subclass
that appear in exactly the same context, and can therefore be merged.
For ordered classes $ \basecl_i $ we additionally require $ k=\succfunc(j) $.
Given a dynamic representation, it is algorithmically simple to construct a 
minimal representation of the same symbolic decision set 
by iteratively merging all such pairs of dynamic subclasses.
The dynamic representation above that resulted from merging the subclasses
is therefore minimal.
Still, minimality is not enough to obtain a unique canonical representation,
since we can permute the indices $ j $ of the dynamic subclasses~$ \dynsubcl_{i}^j $.

A \emph{permutation} of the dynamic subclasses is 
a tuple $ \rho=(\rho_1,\dots,\rho_n) $
of permutations on $ \{ \dynsubcl_{i}^j\with 1\leq j\leq m_i \} $
that respects the function $ \assdynsubcl $,
i.e., $ \assdynsubcl(\dynsubcl_i^j)=\assdynsubcl(\rho_i(\dynsubcl_i^j)) $.
In the case of an ordered class $ \basecl_{i} $ we again only consider rotations $ \rho_i $.
We can apply a permutation to a representation $ \rep=(\dynsubclasses,\assdynsubcl,\decsetfct) $
by keeping~$ \dynsubclasses $ and replacing every occurrence of $ \dynsubcl_i^j $
in $ \decsetfct $ by $ \rho_i(\dynsubcl_i^j) $,
and changing the cardinality $ |\dynsubcl_i^j| $ accordingly.
The function $ \assdynsubcl $ does not change.
We denote the new representation by~$ \rho(\rep) $.

We now prove that minimal representations are unique up to a permutation.
This result is obtained using the observation that
every minimal representation can be reached 
from a dynamic representation that contains only 
dynamic subclasses of cardinality $ 1 $ (as the one in \refEx{Representation}) by merging.
\begin{lemma}\label{lem:minimalAlmostUnique}
	The minimal representations of a symbolic decision set are unique up to a permutation of the dynamic subclasses.
\end{lemma}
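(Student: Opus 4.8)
The plan is to build on the stated observation that every minimal representation is obtained from a \emph{finest} representation --- one all of whose dynamic subclasses have cardinality $1$, such as the representation $\decsetfct$ of \refEx{Representation} --- by a sequence of merging steps. Three ingredients are needed: (i) the finest representations of a fixed symbolic decision set coincide up to a permutation of the dynamic subclasses; (ii) every minimal representation is reachable from a finest one by \emph{legal} merges, i.e.\ merges of two dynamic subclasses belonging to the same basic class with the same assigned static subclass (and, for ordered classes, consecutive indices) whose contexts in the current representation coincide; and (iii) merging is confluent up to permutation, so that from any representation every maximal sequence of legal merges ends in a representation determined up to a permutation. Granting these, let $\rep_1,\rep_2$ be minimal representations of a single symbolic decision set. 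By (ii), each $\rep_a$ is reachable by legal merges from a finest representation $\rep_a^\ast$, and by (i) there is a permutation $\rho$ with $\rep_2^\ast=\rho(\rep_1^\ast)$; after applying $\rho$ we may assume $\rep_1^\ast=\rep_2^\ast=:\rep^\ast$. Since a representation is minimal exactly when no legal merge applies to it, both $\rep_1$ and $\rep_2$ are end points of maximal merge sequences starting at $\rep^\ast$, so (iii) furnishes a permutation $\rho'$ with $\rep_2=\rho'(\rep_1)$, as claimed.

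For ingredient (i): fix a symbolic decision set, pick any decision set $\decset$ in it, and let $\rep^\ast(\decset)$ be the representation with one singleton dynamic subclass per color whose dynamic decision set is $\decset$ with each color replaced by ``its'' subclass. Every decision set in the class is $s(\decset)$ for some $s\in\symmetries_\SN$, and then $\rep^\ast(s(\decset))$ arises from $\rep^\ast(\decset)$ by the permutation of dynamic subclasses induced by $s$, composed with a reindexing; this permutation respects the assignment to static subclasses because $s$ does, and it is a rotation whenever $\basecl_i$ is ordered because $s$ then is, so it is an admissible permutation. Conversely, every finest representation of the class equals $\rep^\ast(\decset)$ up to permutation for some $\decset$ in it, since a finest representation is realized by a bijective valid assignment from the decision set it encodes. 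Hence all finest representations of the class are permutation-equivalent.

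For ingredients (ii) and (iii): a merge strictly decreases the number of dynamic subclasses, so every sequence of merges terminates, and by a Newman-type argument carried out modulo the permutation equivalence, (iii) reduces to local confluence up to permutation. Suppose a representation admits two distinct legal merges. If these act on four distinct dynamic subclasses, they are realized by uniform renamings of the dynamic decision set with disjoint supports; such a renaming sends the two equal contexts witnessing a merge to two equal contexts, so each merge remains legal after the other, and since disjoint renamings commute the two merges close to a common representation (up to the reindexing needed to renumber the surviving subclasses). If the two merges share a dynamic subclass --- necessarily within one basic class and one static subclass, and on consecutive indices in the ordered case --- then the contexts of the (three) subclasses involved are all equal, and one shows that each of them may be merged with the union of the other two, both orders yielding the same representation up to renaming. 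For (ii) --- the observation cited above --- one takes $\rep^\ast(\decset)$ for a decision set $\decset$ realizing the given minimal representation $\rep$ and merges the singletons back into the blocks prescribed by $\rep$ in a suitable order, so that each successive merge is legal; the availability of such an order, hence the reachability of $\rep$, is where the hypothesis that $\rep$ is a valid (and minimal) representation of the symbolic decision set, together with the expansion semantics of dynamic decision sets, is used.

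I expect the main obstacle to be the local-confluence step in the case where two available legal merges share a dynamic subclass: one must track precisely how holing and the uniform renamings interact inside tuples containing several of the affected subclasses, and then state the confluence conclusion ``up to permutation'' carefully enough for the Newman-type closure to go through. A pervasive secondary point is the handling of ordered basic classes, where only rotations count as admissible permutations and only consecutive dynamic subclasses may be merged, so each of the steps --- the induced-permutation claim in (i), the order in which blocks are re-merged in (ii), and the case analysis in (iii) --- must be verified to respect the interval structure of the blocks.
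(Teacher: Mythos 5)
Your proposal is correct in substance and shares the paper's skeleton: both arguments split each minimal representation into a finest one (all dynamic subclasses of cardinality $1$), use that all finest representations of a fixed symbolic decision set are permutation-equivalent, and view minimal representations as the result of merging. The divergence is in how the argument is closed. The paper makes the merge \emph{deterministic}: it asserts that the blocks of a minimal representation $\rep$, read as a partition of the singletons of its splitting $\rep_s$, are exactly the equivalence classes of ``same assigned static subclass and same context in $\rep_s$,'' and then chooses the permutation $\rho$ between $\rep_s$ and $\rep_s'$ to respect these classes --- which is possible because any permutation carrying $\rep_s$ to $\rep_s'$ preserves contexts. This induces the desired permutation between $\rep$ and $\rep'$ in one step, with no confluence analysis. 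You instead treat pairwise merging as a nondeterministic rewriting system and invoke termination plus local confluence modulo permutation (Newman's lemma modulo an equivalence) to identify the normal forms up to permutation. Your route spares you the paper's (unproved) claim that the blocks of an arbitrary minimal representation coincide with the same-context classes of the finest splitting, but it buys this at the cost of exactly the two delicate points you flag yourself: the shared-subclass case of local confluence (where one must check that after merging two of three mutually same-context subclasses the third can still be legally absorbed, since contexts change under merging) and the interval/rotation constraints for ordered basic classes, which must be threaded through the critical-pair analysis. Neither of these is worked out in your sketch, but the paper's own proof operates at a comparable level of detail, so I would not count them as gaps relative to the standard set by the paper --- only note that the paper's direct characterization of the blocks avoids them entirely and is the shorter path.
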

\begin{proof}
	It follows directly from the definition that 
	if $ \rep $ is minimal, so is $ \rho(\rep) $ for any permutation $ \rho $.
	Let now $ \rep $ and $ \rep' $ be two minimal representations
	of the same symbolic decision set. 
	We show there is a permutation between $ \rep $ and $ \rep' $.
	There exist two corresponding representations 
	$ \rep_s $ and $ \rep_s' $,
	where all dynamic subclasses are split into 
	dynamic subclasses of cardinality $ 1 $.
	This means that $ \rep $ can be reached from 
	$ \rep_s $ by merging all subclasses 
	$ \dynsubcl_i^{j_1},\dots,\dynsubcl_i^{j_N} $
	such that each pair 
	$ \dynsubcl_i^{j_k},\dynsubcl_i^{j_\ell} $
	satisfies 
	$ \assdynsubcl(\dynsubcl_i^{j_k})= \assdynsubcl(\dynsubcl_i^{j_\ell})
	\land \decsetcon_{\rep_s}(\dynsubcl_i^{j_k})= \decsetcon_{\rep_s}(\dynsubcl_i^{j_\ell}) $,
	and analogously for $ \rep' $ and $ \rep_s' $.
	All representations of a fixed symbolic decision set
	that only contain dynamic subclasses of cardinality~$ 1 $
	can trivially transformed into each other by a permutation.
	Furthermore, we can pick a permutation $ \rho $
	from~$ \rep_s $ to~$ \rep_s' $ such that 
	if $ \dynsubcl_i^{j_k} $ and $ \dynsubcl_i^{j_\ell} $
	are merged in $ \rep_s $, then $ \rho(\dynsubcl_i^{j_k}) $
	and $ \rho(\dynsubcl_i^{j_\ell}) $ are merged in $ \rep_s' $.
	This implies a corresponding permutation between 
	$ \rep $ and $ \rep' $.
\end{proof}

\subsubsection{Ordering}
We can choose one of the minimal representations by \emph{ordering}
the dynamic subclasses. 
In the following, we present one possible way to do so.
We display the dynamic decision set $ \decsetfct $ as a matrix,
with rows and columns indicating (tuples of) dynamic subclasses $ \dynsubcl $.
An element of the matrix at entry~$ (Z,Z') $ returns all tuples $ (p,t) $
satisfying $ (p.Z,\comset)\in\decsetfct $  and $ t.\dynsubcl'\in\comset $
for a commitment set $ \comset $.
Also, all tuples $ (p,\top) $ satisfying $ {(p.\dynsubcl,\top)\in\decsetfct} $
and all tuples $ (p,\emptyset) $ satisfying $ (p.\dynsubcl,\emptyset)\in\decsetfct $
are returned (in these cases, $ \dynsubcl' $ is neglected). 

The elements of the matrix therefore are in $ \powerset( \places\!\times\!(\transitions\cup\{\top,\emptyset\})) $.
Since this set is finite, we can give an arbitrary, but fixed, total order $ < $ on it.
This order can be extended to the matrices over the set
(the lexicographic order by row-wise traversion through a matrix).
Then we can determine a permutation
such that, when applied to the dynamic subclasses,
the matrix is \emph{minimal with respect to the lexicographic order}.
The corresponding dynamic representation is called \emph{ordered}.
\begin{example}
	On the left we see the matrix of the dynamic decision set $ \decsetfct $
	in the minimal representation given above.
	The first entry, at $ (\dynsubcl_1^1, \dynsubcl_1^1) $, e.g., is 
	$ \{ (\mathit{S},\top), (\mathit{R},\mathit{g}) \} $ since
	$ (\mathit{S}.\dynsubcl_1^1,\top) $ and $ (\mathit{R}.\dynsubcl_1^1,\mathit{g}.\dynsubcl_1^1) $
	are in $ \decsetfct $.
	Assume $ \{(\mathit{S},\top)\}<\{ (S,\top),(\mathit{R},g) \} $.
	When the permutation switching $ \dynsubcl_1^1 $ and $ \dynsubcl_1^2 $
	is applied, we get the right matrix, which is lexicographically smaller
	(the first entry is smaller).
	\begin{center}
		\begin{tabular}{c|cc}
			& $ \dynsubcl_1^1 $                                    & $ \dynsubcl_1^2 $           \\ \hline
			$ \dynsubcl_1^1 $ & $ \{ (\mathit{S},\top), (\mathit{R},\mathit{g}) \} $ & $ \{ (\mathit{S},\top) \} $ \\
			$ \dynsubcl_1^2 $ & $ \{ (\mathit{S},\top) \} $                          & $ \{ (\mathit{S},\top) \} $
		\end{tabular}
		$ \overset{\dynsubcl_1^1\leftrightarrow \dynsubcl_1^2}{\quad\scalebox{1.5}{\begin{tikzpicture}[]
			\draw[<->,>=To]
			(0,0)--(1,0)
			;
			\end{tikzpicture}}\quad} $
		\begin{tabular}{c|cc}
			& $ \dynsubcl_1^1 $                                    & $ \dynsubcl_1^2 $           \\ \hline
			$ \dynsubcl_1^1 $ & $ \{ (\mathit{S},\top) \} $ & $ \{ (\mathit{S},\top) \} $ \\
			$ \dynsubcl_1^2 $ & $ \{ (\mathit{S},\top) \} $                          & $ \{ (\mathit{S},\top), (\mathit{R},\mathit{g}) \} $
		\end{tabular}
	\end{center}
	Thus, the minimal representation from above 
	is transformed into a \emph{minimal and ordered} representation 
	$ (\dynsubclasses_{\mathit{ord}},\assdynsubcl_{\mathit{ord}}\decsetfct_{\mathit{ord}}) $
	by the permutation $ \dynsubcl_1^1\leftrightarrow \dynsubcl_1^2 $.
\end{example}

We have defined what it means for a dynamic representation to be minimal and to be ordered.
To the aim of showing that this leads to canonical representations of symbolic decision sets,
we formulate the following lemma.

\begin{lemma}\label{lem:symmetricOrderedRepsSameDynDecset}
	Let $ \rep_1=(\dynsubclasses_1,\assdynsubcl_1,\decsetfct_1) $ and $ \rep_2=(\dynsubclasses_2,\assdynsubcl_2,\decsetfct_2) $ 
	be two ordered representations of the 
	same symbolic decision set, and 
	$ \rho $ be a permutation 
	such that $ \rep_2=\rho(\rep_1) $. Then
	$ \decsetfct_1=\decsetfct_2 $.
\end{lemma}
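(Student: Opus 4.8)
The plan is to reduce the statement to antisymmetry of the lexicographic order on matrices, via the two matrices associated with $\decsetfct_1$ and $\decsetfct_2$. First I would record what $\rep_2=\rho(\rep_1)$ gives for free: applying a permutation $\rho$ to a dynamic representation leaves the set $\dynsubclasses$ of dynamic subclasses and the function $\assdynsubcl$ untouched, so $\dynsubclasses_1=\dynsubclasses_2=:\dynsubclasses$ and $\assdynsubcl_1=\assdynsubcl_2$. Consequently the matrix $M_1$ of $\decsetfct_1$ and the matrix $M_2$ of $\decsetfct_2$ are indexed, for both rows and columns, by the same (tuples of) dynamic subclasses, so they live in the same totally ordered set of matrices and may be compared. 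Moreover, $\rho$ acts componentwise on tuples of dynamic subclasses and merely relabels rows and columns: each occurrence of a tuple $Z$ in $\decsetfct_1$ is replaced by $\rho(Z)$ in $\decsetfct_2$, while the matrix entries themselves — finite subsets of $\places\times(\transitions\cup\{\top,\emptyset\})$ — are left unchanged; hence $M_2$ is $M_1$ with rows and columns permuted by $\rho$.

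Next I would exploit invertibility of $\rho$. Its componentwise inverse $\rho^{-1}=(\rho_1^{-1},\dots,\rho_n^{-1})$ is again an admissible permutation of the dynamic subclasses: it still respects $\assdynsubcl$, and for an ordered basic class the inverse of a rotation is a rotation. A short computation with the substitution rule for applying permutations yields $\rho^{-1}(\rep_2)=\rho^{-1}(\rho(\rep_1))=\rep_1$. Now I invoke the definition of \emph{ordered}. Since $\rep_1$ is ordered, $M_1$ is minimal with respect to the lexicographic order among the matrices of all $\rho'(\rep_1)$; taking $\rho':=\rho$ gives $M_1\leq M_2$. Since $\rep_2$ is ordered, $M_2$ is minimal among the matrices of all $\rho'(\rep_2)$; taking $\rho':=\rho^{-1}$ and using $\rho^{-1}(\rep_2)=\rep_1$ gives $M_2\leq M_1$. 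As the fixed total order $<$ on entries extends to a total order on matrices of a given shape, it is antisymmetric, so $M_1=M_2$.

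Finally I would translate equality of matrices back to equality of dynamic decision sets. By construction, $\decsetfct$ can be reconstructed from its matrix together with $\dynsubclasses$: for each place pattern $p.Z$, row $Z$ of the matrix reveals whether $(p.Z,\top)$, or $(p.Z,\emptyset)$, or $(p.Z,\comset)$ with non-empty $\comset=\{\, t.Z' \mid (p,t)\text{ occurs in entry }(Z,Z') \,\}$ lies in $\decsetfct$, or whether $p.Z$ does not occur; this reconstruction is well defined because in a dynamic representation over a safe net each place pattern occurs at most once in the dynamic decision set. Applying this to $M_1=M_2$ over the common index set $\dynsubclasses$ yields $\decsetfct_1=\decsetfct_2$.

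I expect the only real care is bookkeeping rather than a genuine obstacle: one must check in the first paragraph that $M_1$ and $M_2$ genuinely lie in the same ordered set — i.e.\ that $\rho$ preserves the row/column index structure — so that the minimization underlying ``ordered'' and the comparison $M_1\leq M_2\leq M_1$ in the second paragraph are meaningful, together with the routine verifications that $\rho^{-1}$ is admissible, that $\rho^{-1}(\rho(\rep_1))=\rep_1$, and that the matrix faithfully encodes $\decsetfct$.
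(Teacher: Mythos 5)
Your proof is correct and follows essentially the same route as the paper: the paper also compares $\mathit{mat}_1$ with $\rho(\mathit{mat}_1)=\mathit{mat}_2$, uses that $\rho^{-1}$ would otherwise make $\mathit{mat}_2$ smaller (contradicting that $\rep_2$ is ordered), and concludes $\mathit{mat}_1=\mathit{mat}_2$, hence $\decsetfct_1=\decsetfct_2$. You merely phrase the paper's contradiction argument directly as $M_1\leq M_2\leq M_1$ plus antisymmetry, and you spell out some bookkeeping (admissibility of $\rho^{-1}$, comparability of the matrices, faithfulness of the matrix encoding) that the paper leaves implicit.
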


\begin{proof}
	We denote the matrix of a representation $ \rep_i $ by $ \mathit{mat}_i $.
	Since $ \rep_1 $ is ordered, 
	$ \rho $ cannot transform $ \mathit{mat}_1 $
	into a (lexicographically) smaller matrix. 
	Aiming a contradiction, assume that $ \rho $ transforms $ \mathit{mat}_1 $ into a bigger matrix $ \rho(\mathit{mat}_1) $.
	Since $ \rho(\mathit{mat}_1)=\mathit{mat}_2 $,
	this would mean that $ \rho^{-1} $ transforms $ \mathit{mat}_2 $
	into a smaller matrix, implying that $ \rep_2 $ is not ordered. Contradiction.
	Therefore, $ \mathit{mat}_1=\rho(\mathit{mat}_1)=\mathit{mat}_2 $.
	Since $ \mathit{mat}_i $ is just another presentation of $ \decsetfct_i $,
	we follow $ \decsetfct_1=\decsetfct_2 $.
\end{proof}

Now we are equipped with all tools to prove
that there is exactly one minimal and ordered representation
for each symbolic decision set.
	We call this representation the \emph{canonical} (dynamic) representation.
\begin{theorem}\label{theo:CanonicalRepresentationUnique}
	For every symbolic decision set there is 
	exactly one minimal and ordered dynamic representation. 
\end{theorem}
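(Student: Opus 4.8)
The plan is to get \emph{existence} from a construction and \emph{uniqueness} by chaining Lemmas~\ref{lem:minimalAlmostUnique} and~\ref{lem:symmetricOrderedRepsSameDynDecset}.

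For existence, I would start from \emph{any} dynamic representation of the given symbolic decision set — concretely the ``naive'' one obtained from a representative decision set $\decset$ by making every color its own singleton dynamic subclass, exactly as in Example~\ref{ex:Representation}. Repeatedly merging pairs $\dynsubcl_i^j,\dynsubcl_i^k$ with $\assdynsubcl(\dynsubcl_i^j)=\assdynsubcl(\dynsubcl_i^k)$ and $\decsetcon_\rep(\dynsubcl_i^j)=\decsetcon_\rep(\dynsubcl_i^k)$ (and consecutive indices in the ordered case) terminates, since each merge strictly decreases the number of dynamic subclasses, and produces a \emph{minimal} representation $\rep_{\mathit{min}}$ of the same symbolic decision set. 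Applying to $\rep_{\mathit{min}}$ a permutation $\rho$ that minimizes the associated matrix with respect to the fixed lexicographic order yields $\rho(\rep_{\mathit{min}})$, which is \emph{ordered} by construction and still minimal, since — as noted in the proof of Lemma~\ref{lem:minimalAlmostUnique} — applying a permutation preserves minimality. Hence a minimal and ordered representation exists.

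For uniqueness, let $\rep_1=(\dynsubclasses_1,\assdynsubcl_1,\decsetfct_1)$ and $\rep_2=(\dynsubclasses_2,\assdynsubcl_2,\decsetfct_2)$ be minimal and ordered representations of the same symbolic decision set. By Lemma~\ref{lem:minimalAlmostUnique} there is a permutation $\rho$ with $\rep_2=\rho(\rep_1)$, so by Lemma~\ref{lem:symmetricOrderedRepsSameDynDecset} we get $\decsetfct_1=\decsetfct_2$. A permutation changes neither the index set $\dynsubclasses$ nor the function $\assdynsubcl$, so $\dynsubclasses_1$ and $\dynsubclasses_2$ consist of the same symbols $\dynsubcl_i^j$ and $\assdynsubcl_1=\assdynsubcl_2$; only the cardinalities $|\dynsubcl_i^j|$ could still differ. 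To rule this out, fix a representative $\decset$ and the valid assignment $\validassignment_\decset$ with $\decset=\validassignment_\decset^{-1}(\decsetfct_1)=\validassignment_\decset^{-1}(\decsetfct_2)$. By minimality of $\rep_1$, no two dynamic subclasses sharing a static subclass share a context, so each color $c\in\basecl_{i,q}$ must be assigned to the \emph{unique} $\dynsubcl_i^j$ with $\assdynsubcl(\dynsubcl_i^j)=q$ whose context matches the context of $c$ in $\decset$; thus the preimages $\validassignment_\decset^{-1}(\dynsubcl_i^j)$ are already pinned down by $\decset$ and $\decsetfct_1=\decsetfct_2$. Hence $|\dynsubcl_i^j|_{\rep_1}=|\validassignment_\decset^{-1}(\dynsubcl_i^j)|=|\dynsubcl_i^j|_{\rep_2}$ for all $i,j$, so $\rep_1=\rep_2$.

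The two invocations of the lemmas and the termination of the merging are routine; the step that needs the most care is the last one, namely showing that two minimal and ordered representations of the \emph{same} symbolic decision set cannot differ merely in the cardinalities of their dynamic subclasses. This is exactly where minimality is used in a second, subtler way: it forces the valid assignment from any representative to be unique up to permuting colors within a single dynamic subclass, which fixes the cardinalities once $\decsetfct$ is fixed.
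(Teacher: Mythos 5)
Your skeleton for uniqueness is the paper's: Lemma~\ref{lem:minimalAlmostUnique} yields a permutation $\rho$ with $\rep_2=\rho(\rep_1)$, and Lemma~\ref{lem:symmetricOrderedRepsSameDynDecset} then yields $\decsetfct_1=\decsetfct_2$; your existence argument (merge until minimal, then order, noting that permutations preserve minimality) also matches what the paper sketches around the theorem. The divergence is in the final step. The paper does not try to recover the cardinalities from a representative decision set: it argues that $\rho$ itself must be the identity, because $\decsetfct_1=\decsetfct_2$ forces $\dynsubcl_i^j$ and $\rho_i(\dynsubcl_i^j)$ to have the same context and the same image under $\assdynsubcl$, so by minimality they could be merged unless they coincide; from $\rho$ being the identity one gets $\rep_1=\rho(\rep_1)=\rep_2$ with the cardinalities transported for free.

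Your substitute for this last step has a genuine gap. The claim that each color $c$ goes to ``the unique $\dynsubcl_i^j$ whose context matches the context of $c$ in $\decset$'' is circular: $\decsetcon_{\rep}(\dynsubcl_i^j)$ is a set of tuples over dynamic subclasses, the context of $c$ in $\decset$ is a set of tuples over colors, and deciding whether the two ``match'' already presupposes the assignment you are trying to determine. The circularity is not harmless, and it bites exactly in the situation the two lemmas leave you in, namely a possibly nontrivial $\rho$ with $\rho(\decsetfct)=\decsetfct$: the same representative can then arise from $\decsetfct$ under valid assignments with different preimage partitions. For instance, for $\decsetfct=\{(p.\dynsubcl_1^1,\{t.\dynsubcl_1^1\}),(p.\dynsubcl_1^2,\{t.\dynsubcl_1^2\})\}$ the decision set $\{(p.c_1,\{t.c_1\}),(p.c_2,\{t.c_2,t.c_3\}),(p.c_3,\{t.c_2,t.c_3\})\}$ is obtained both by sending $c_1$ to $\dynsubcl_1^1$ (cardinalities $1,2$) and by sending $c_1$ to $\dynsubcl_1^2$ (cardinalities $2,1$), even though the two subclasses have distinct contexts. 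So the preimages, and with them the cardinalities, are \emph{not} pinned down by $\decset$ and $\decsetfct_1=\decsetfct_2$, and your concluding equality $|\dynsubcl_i^j|_{\rep_1}=|\dynsubcl_i^j|_{\rep_2}$ does not follow from what precedes it. (A smaller issue: you fix a single $\validassignment_\decset$ that is valid for both $\rep_1$ and $\rep_2$, which already presupposes that the cardinalities agree.) To close the argument you should follow the paper and work on $\rho$ directly: show that minimality forces $\rho$ to be the identity, since that is the step that actually carries the cardinalities from $\rep_1$ to $\rep_2$.
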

\begin{proof}
	Let $ \rep_1=(\dynsubclasses_1,\decsetfct_1) $ 
	and $ \rep_2=(\dynsubclasses_2,\decsetfct_2) $
	be two minimal and ordered representations 
	of the same symbolic decision set.
	\refLemma{minimalAlmostUnique} gives us that there
	is a permutation~$ \rho $
	such that $ \rep_2=\rho(\rep_1) $.
	Then we have by
	\refLemma{symmetricOrderedRepsSameDynDecset}
	that $ \decsetfct_1=\decsetfct_2 $.
	This means for all 
	$ \dynsubcl_i^j $ that 
	$ \decsetcon_{\rep_1}(\dynsubcl_i^j)
	=\decsetcon_{\rep_2}(\rho_i(\dynsubcl_i^j)) $.
	$ \rho $ is a permutation, so by definition
	$ \assdynsubcl(\rho_i(\dynsubcl_i^j))
	= \assdynsubcl(\dynsubcl_i^j) $.
	$ \rep_1 $ is minimal, 
	therefore it must hold that $ \rho_i(\dynsubcl_i^j)=\dynsubcl_i^j $,
	else we could merge the subclasses 
	$ \rho_i(\dynsubcl_i^j)$ and~$\dynsubcl_i^j $.
	This holds for all $ i,j $,
	such that we finally have 
	$ \forall i:\rho_i=\identity_{\{ \dynsubcl_{i}^j\with 1\leq j\leq m_i \}} $
	and therefore, $ \rep_1=\rho(\rep_1)=\rep_2 $.
\end{proof}

We can algorithmically order a minimal representation by calculating all symmetric representations
and by finding the one with the lexicographically smallest matrix.
These are maximally $ |\symmetries_\SN| $ comparisons of dynamic representations.
So by first making a dynamic representation minimal, and then ordering it, we get the respective canonical representation.

\begin{corollary}\label{cor:CostOfCanon}
For a given symbolic decision set, we can construct the canonical representation in~$ O(|\symmetries_\SN|) $.
\end{corollary}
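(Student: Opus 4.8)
The plan is to make explicit the two-phase construction sketched just before the statement and to bound its cost. By \refTheorem{CanonicalRepresentationUnique} the canonical representation is the unique minimal and ordered dynamic representation, so it suffices to describe how to (i) produce some minimal representation of the given symbolic decision set and (ii) turn it into the ordered one.

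For phase (i) I would start from any decision set $\decset$ in the symbolic decision set and build the dynamic representation in which each color is replaced by its own dynamic subclass of cardinality $1$, exactly as in \refEx{Representation}; this is a single pass over $\decset$. Then I would repeatedly merge any two dynamic subclasses that lie in the same static subclass and have equal context (and, for ordered classes, are consecutive), as described before \refLemma{minimalAlmostUnique}. Each merge strictly decreases the number of dynamic subclasses, so the process terminates after at most one merge per color, and the work done per merge is polynomial in the size of the (fixed) underlying net. Hence phase (i) incurs only a fixed preprocessing cost, independent of the symmetry group.

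For phase (ii) I would invoke \refLemma{minimalAlmostUnique}: the minimal representations of the symbolic decision set are precisely the images $\rho(\rep)$ of the one just produced under permutations $\rho$ of its dynamic subclasses. So I compute, for every admissible $\rho$, the matrix of $\rho(\rep)$ and keep the lexicographically smallest; by \refLemma{symmetricOrderedRepsSameDynDecset} the dynamic decision set attaining this minimum is unique, and by \refTheorem{CanonicalRepresentationUnique} it is the canonical representation. It remains to count the admissible $\rho$. Such a $\rho$ respects $\assdynsubcl$, so it acts separately on each block of dynamic subclasses assigned to a fixed static subclass $\basecl_{i,q}$ (an arbitrary permutation there, but only a rotation if $\basecl_i$ is ordered). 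Since the cardinalities of the dynamic subclasses in such a block sum to $|\basecl_{i,q}|$ (respectively $|\basecl_i|$), that block has at most $|\basecl_{i,q}|$ (respectively $|\basecl_i|$) members and thus contributes at most $|\basecl_{i,q}|!$ permutations (respectively $|\basecl_i|$ rotations). Multiplying over all blocks bounds the number of admissible $\rho$ by $|\symmetries_\SN|$, so phase (ii) needs at most $|\symmetries_\SN|$ comparisons of dynamic representations.

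Combining the fixed cost of phase (i) with the $O(|\symmetries_\SN|)$ comparisons of phase (ii) gives the bound. The step I expect to be the main obstacle is the count in phase (ii): one has to argue precisely that the constraint imposed by $\assdynsubcl$, together with the restriction to rotations on ordered classes, forces the group of dynamic-subclass permutations to inject into $\symmetries_\SN$, so that the lex-minimization never examines more than $|\symmetries_\SN|$ candidates; everything else is routine bookkeeping over the size of the net.
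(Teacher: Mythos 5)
Your proposal follows essentially the same two-phase route as the paper: first make the representation minimal by iterative merging (a fixed preprocessing cost), then order it by enumerating the admissible permutations of dynamic subclasses and keeping the lexicographically smallest matrix, with the number of candidates bounded by $|\symmetries_\SN|$. Your block-wise counting argument for that bound is a correct elaboration of the paper's bare assertion that ``these are maximally $|\symmetries_\SN|$ comparisons,'' so the proposal matches the intended proof.
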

\subsection{Relations between Canonical Representations}
In the previous section, we introduced canonical representations of equivalence classes of decision sets.
Decision sets describe situations in a Petri game and are employed as 
nodes in the two-player game used to solve said Petri game.
The edges in the game are build from relations between the decision sets,
which are demonstrated in \refEx{DecisionFirings}.
Since the goal is to replace (symbolic) decision sets by the canonical representations in the two-player game,
we now define respective relations on this level.

Between decision sets, we have the two relations
$ \decset[t.v\rangle\decset' $
(with $ v\in\valuations(t)=\basecl_1^{t_1}\!\times\!\cdots\!\times\!\basecl_n^{t_n} $) and $ \decset[\top\rangle\decset' $.
In canonical representations, we abstract from specific colors $ c\in\basecl_i $ and replace
them by dynamic subclasses $ \dynsubcl_i^j $ of variables.
However, in the process of $ \top $-resolution or transition firing,
two objects represented by the same dynamic subclass can act differently.
This means we \emph{instantiate} special objects in the classes
that are relevant in the $ \top $-resolution or transition firing.

For this, each dynamic subclass $ \dynsubcl_{i}^j $ in a canonical representation $ \rep $ is \emph{split}
into finitely many \emph{instantiated} $ \dynsubcl_{i}^{j,k} $ of cardinality $|\dynsubcl_{i}^{j,k}|= 1 $ with $ k>0 $,
and a subclass~$ \dynsubcl_{i}^{j,0} $, containing the possibly remaining, non-instantiated, variables.
Then, a $ \top $ is resolved, or a transition is fired, 
with the dynamic subclasses $ \dynsubcl_{i}^{j,k} $ replacing
explicit data entries $ c\in\basecl_{i} $.
Finally, the canonical representation $ \rep' $ of the reached dynamic representation is found.
These relations are denoted by $ \rep[\top\rangle\rep' $
and $ \rep[t.\dynsubcl^{[\instancesubcl,\instanceelement]}\rangle\rep' $,
where $ \dynsubcl^{[\instancesubcl,\instanceelement]} $ is a tuple of instantiated $ \dynsubcl_i^{j,k} $.

Below we see an example that corresponds to the last two steps in \refEx{DecisionFirings}.
We calculated the second canonical representation in the last section.
It is reached from the first canonical representation by firing $ \mathit{inf}\!.\dynsubcl_1^{2,1} $.
In this process one (the only) element in $ \dynsubcl_1^2 $ is instantiated by a dynamic subclass~$ \dynsubcl_1^{2,1} $
of cardinality $ 1 $. After the actual firing, the reached representation is made canonical.
Then, a $ \top $ is resolved. Here, $ \dynsubcl_1^1 $
is split into $ \dynsubcl_1^{1,1} $ and $ \dynsubcl_1^{1,2} $ with
$ | \dynsubcl_1^{1,1} |=| \dynsubcl_1^{1,2} | =1$. 
In the reached dynamic representation, no two subclasses have the same context, 
so it is already minimal. After ordering we get the canonical representation.
\begin{center}
	\begin{tikzpicture}[scale = 0.75, transform shape]\renewcommand{\xdis}{5cm}
	\tikzstyle{decisionset} = [rectangle,fill=lightgray,
	draw,align=center,minimum size=7mm]
	\tikzstyle{env} = [fill=white]
	\tikzstyle{sys} = [rounded corners]
	\node[decisionset,env] (RepPre)	{
		$ \left( \mathit{I}.\dynsubcl_1^2, \{ \mathit{inf}\!.\dynsubcl_1^2\} \right) $\\
		$ \left( \mathit{S}.\dynsubcl_1^1, \{ \mathit{inf}\!.\dynsubcl_1^1,\mathit{inf}\!.\dynsubcl_1^2 \} \right)  $\\
		$ \left( \mathit{S}.\dynsubcl_1^2, \{ \mathit{inf}\!.\dynsubcl_1^1,\mathit{inf}\!.\dynsubcl_1^2 \} \right)  $
	};
	\node at (RepPre) (dynsubRepPre) [align=right,xshift=-3cm,yshift=0mm]{$ |\dynsubcl_1^1|=2 $\\$ |\dynsubcl_1^2|=1 $};
	\node[decisionset,env] at (RepPre) [xshift=7cm]	(Rep)	{
		$ \left( \mathit{R}.\dynsubcl_1^1, \{ g.\dynsubcl_1^2\} \right) $\\
		$ \left( \mathit{S}.\dynsubcl_1^1, \top \right)  $\\
		$ \left( \mathit{S}.\dynsubcl_1^2, \top \right)  $
	};
	\node at (Rep) (dynsubRep) [align=right,xshift=-2.25cm,yshift=-4mm]{$ |\dynsubcl_1^1|=2 $\\$ |\dynsubcl_1^2|=1 $};
	\node[decisionset,env] at (Rep) [xshift=7cm]	(RepPost)	{
		$ \left( \mathit{R}.\dynsubcl_1^3, \{ g.\dynsubcl_1^3\} \right) $\\
		$ \left( \mathit{S}.\dynsubcl_1^1, \{a.(\dynsubcl_1^1,\dynsubcl_1^3)\} \right)  $\\
		$ \left( \mathit{S}.\dynsubcl_1^2, \{a.(\dynsubcl_1^2,\dynsubcl_1^3)\} \right)  $\\
		$ \left( \mathit{S}.\dynsubcl_1^3, \{a.(\dynsubcl_1^3,\dynsubcl_1^3)\} \right)  $
	};
	\node at (RepPost) (dynsubRep) [align=right,xshift=-3.2cm,yshift=-5mm]{$ \forall j\in\{1,2,3\}: $\\$ |\dynsubcl_1^j|=2  $};
	\path[->]
	([yshift=3mm]RepPre.east) edge node[above,yshift=-0.5mm]{$ \mathit{inf}.(\dynsubcl_1^{2,1}) $} ([yshift=3mm]Rep.west)
	([yshift=3mm]Rep.east) edge node[above]{$ \top $} ([yshift=3mm]RepPost.west)
	;
	\end{tikzpicture}
\end{center}

We now go into detail on the definitions of the 
symbolic versions of these symbolic relations.
In the \emph{symbolic transition firing},
we consider \emph{symbolic instances}
of a transition~$ t $, where
the parameters are assigned 
elements in
dynamic subclasses
instead of specific colors
in basic color classes.
For a representation $ \rep $
we group the dynamic subclasses by
$ \forall 1\leq i\leq n: 
\dynsubclasses_i=\{ \dynsubcl_i^j\with 1\leq j\leq m_i \} $.
For a transition $ t\in\hltransitions $
we then define $ \valuations_\rep(t)=\dynsubclasses_1^{t_1}\!\times\!\cdots\!\times\!\dynsubclasses_n^{t_n} $
(analogously to $ \valuations(t)=\basecl_1^{t_1}\!\times\!\cdots\!\times\!\basecl_n^{t_n} $).
The instance of the 
$ x $-th parameter 
of type $ \dynsubclasses_i $ 
in $ \valuations_\rep(t) $ (i.e., $ 1\leq x\leq t_i $)
can be specified by a pair 
$ (\instancesubcl_i(x),
\instanceelement_i(x))=(j,k) $,
meaning that the parameter represents
the \mbox{$ k $-th} (arbitrarily chosen)
element of~$ \dynsubcl_i^j $.
Since we maximally have 
$ |\dynsubcl_i^j| $
different instances of $ \dynsubcl_i^j $,
$ k < | \dynsubcl_i^j | $ must hold.
Furthermore, $ k $ must be
less than the number of parameters
instanced to $ \dynsubcl_i^j $.
A \emph{symbolic mode}
$ [\instancesubcl,\instanceelement] $ of~$ t $ 
is defined by two families of functions 
$\instancesubcl=
\{ \instancesubcl_i:
\{ 1,\dots,t_i \}\to\N^+
\} $
and 
$ \instanceelement=
\{ \instanceelement_i:
\{ 1,\dots,t_i \}\to\N^+
\} $
satisfying the conditions above.
We denote $ \dynsubcl^{[\instancesubcl,\instanceelement]}=
((\dynsubcl_i^{\instancesubcl_i(x),
	\instanceelement_i(x)})_{x=1}^{t_i})_{i=1}^n $,
and $ t.\dynsubcl^{[\instancesubcl,\instanceelement]} $
is called a \emph{symbolic instance of}~$ t $.

To fire a symbolic instance~$ 
t.\dynsubcl^{[\instancesubcl,\instanceelement]} $
from a representation~$ \rep=(\dynsubclasses,\assdynsubcl,\decsetfct) $,
we have to \emph{split} the dynamic subclasses~$ \dynsubcl_i^j $
such that the 
(by $ \instancesubcl_i $ and $ \instanceelement_i $) 
instantiated elements are represented by
new subclasses~$ \dynsubcl_i^{j,k} $ of 
cardinality~$ 1 $.
The possibly remaining (non-instantiated) elements
are collected in the additional subclass~$ 
\dynsubcl_i^{j,0} $.
The new dynamic subclasses $ \dynsubcl_i^{j,k} $ are assigned to the
same static subclasses as $ \dynsubcl_i^j $ by $ \assdynsubcl $.
The new dynamic decision set
can be naturally derived from $
\decsetfct $: since the subclasses~$ \dynsubcl_i^j $ 
are split into $ \{\dynsubcl_i^{j,k}\}_k $, 
every tuple~$ (p.\dynsubcl,\comset)\in\decsetfct $ 
in which the original subclasses appeared
is replaced by all possible tuples containing the new 
subclasses instead. 
In the case of an ordered class $ \basecl_i $,
every~$ \dynsubcl_i^j $ is split into $ |\dynsubcl_i^j| $ dynamic subclasses
of cardinality $ 1. $
This \emph{split representation} is denoted by~$ \rep_{[ \instancesubcl,\instanceelement ]} $.

A symbolic instance $ t.\dynsubcl^{[\instancesubcl,\instanceelement]} $
can be \emph{fired} from the representation
$ \rep_{[ \instancesubcl,\instanceelement ]}=:\rep_s=(\dynsubclasses_s,\assdynsubcl_s,\decsetfct_s) $
analogously to the ordinary case $ \decset[t.c\rangle $ 
with the dynamic subclasses $ \dynsubcl_{i}^{j,k} $, $ k>0 $ (i.e., $ |\dynsubcl_{i}^{j,k}|=1 $) with $ \assdynsubcl( \dynsubcl_{i}^{j,k} )=q $ replacing 
the explicit colors $ c\in\basecl_{i,q} $ in guards and arc expressions.
This procedure is sound since we consider symmetric nets, 
where guards and arc expressions handle all colors in the same static subclass equally.
We arrive at a new representation $ \rep_s' $ with same 
dynamic subclasses~$ \dynsubclasses_s'=\dynsubclasses_s $
and $ \assdynsubcl_s'=\assdynsubcl_s $,
and the dynamic decision set $ \decsetfct_s' $ that results from firing $ t.\dynsubcl^{[\instancesubcl,\instanceelement]} $
from~$ \decsetfct_s $ (w.r.t.\ the flow function $ \hlflowfunc $).
Finally, the canonical representation $ \rep' $ of $ \rep_s' $ is found.
The symbolic firing is denoted by $ \rep[t.\dynsubcl^{[\instancesubcl,\instanceelement]}\rangle\rep' $ and
can be described by
\begin{equation*}
\rep\xrightarrow{\text{splitting w.r.t } [\instancesubcl,\instanceelement]} 
\rep_s \xrightarrow{\text{firing } t.\dynsubcl^{[\instancesubcl,\instanceelement]}}
\rep_s'\xrightarrow{\text{canon. rep.}}\rep'.
\end{equation*}
The symbolic instances of a transition $ t $
form a partition of the regular instances $ t.c $, $ c\in\valuations(t) $.

The idea of \emph{symbolic $ \top $-resolution}
is similar to the
symbolic firing.
A canonical representation which 
contains a tuple $ (p.\dynsubcl,\top) $
is split into a finer 
representation.
The symbol $ \top $
gets resolved as before, 
and finally, the new canonical representation is built.
While in $ \rep[\instancesubcl,\instanceelement] $, only some
dynamic subclasses of cardinality $ 1 $
are split off and the rest is collected in a subclass $ \dynsubcl_i^{j,0} $, 
in the symbolic $ \top $-resolution
every dynamic subclass $ \dynsubcl_{i}^j $ in $ \rep $ is split into $ |\dynsubcl_{i}^j| $ dynamic subclasses of cardinality $ 1 $,
just as in the case of an ordered class for the symbolic firing.

This gives, for a canonical representation $ \rep $,
that $ \rep[\top\rangle $ iff
$ \exists (p.\dynsubcl,\top)\in\decsetfct $,
and $ \rep[\top\rangle\rep' $ 
iff from the splitted representation $ \rep_s $,
a representation $ \rep_s' $ can be reached by copying the dynamic subclasses and in $ \decsetfct_s' $
every $ (p.\dynsubcl,\top)\in\decsetfct_s $ chooses
a new commitment set $ \comset $,
and $ \rep' $ is 
the canonical representation of $ \rep_s' $.
The symbolic $ \top $-resolution $ \rep[\top\rangle\rep' $ is therefore described by
\begin{equation*}
\rep\xrightarrow{\text{splitting}} 
\rep_s \xrightarrow{\top\text{-resolution}}
\rep_s'\xrightarrow{\text{canon. rep.}}\rep'.
\end{equation*}  

The next theorem states that each relation
between two decision sets is represented by a relation between the 
corresponding canonical representations of the symbolic decision sets (and the other way round).
The proof for the case $ \decset [t.v\rangle \decset' $
follows by applying a valid assignment $ \widehat{\validassignment_{\decset}} $
to~$ v $ and splitting~$ \rep $ correspondingly. 
The case $ \decset [\top\rangle \decset' $ works analogously.

\begin{theorem}\label{theo:relBetweenCanonReps}
	Let	$ \decset $ and $ \decset' $ be decision sets 
	and let $ \rep $ and $ \rep' $ be the canonical representations of the respective symbolic decision set.
	\begin{enumerate}
		\item \begin{enumerate}[i)]
			\item $ \decset [t.v\rangle \decset' 
			\Rightarrow 
			\exists \instancesubcl,\instanceelement : \rep [t.\dynsubcl^{[\instancesubcl,\instanceelement]}\rangle \rep' \land \exists \validassignment_{\decset}: \validassignment_{\decset}(\decset)=\symbdecset_{[\instancesubcl,\instanceelement]} \land \validassignment_{\decset}^{-1}(\dynsubcl^{[\instancesubcl,\instanceelement]})=v $ 
			\item $  \rep [t.\dynsubcl^{[\instancesubcl,\instanceelement]}\rangle \rep'
			\Rightarrow 
			\exists \validassignment_{\decset}: \validassignment_{\decset}(\decset)=\symbdecset_{[\instancesubcl,\instanceelement]} 
			\land \exists s\in\symmetries_\SN: \decset[t.\validassignment_{\decset}^{-1}(\dynsubcl^{[\instancesubcl,\instanceelement]})\rangle s(\decset') $
		\end{enumerate}
		\item \begin{enumerate}[i)]
			\item $ \decset [\top \rangle \decset' \Rightarrow \rep [\top \rangle \rep' $
			\item $ \rep [\top \rangle \rep' \Rightarrow \exists s\in\symmetries_\SN: \decset [\top \rangle s(\decset') $
		\end{enumerate}
	\end{enumerate}
\end{theorem}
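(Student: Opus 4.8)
The plan is to anchor everything on two facts already available. First, the characterization recalled before Figure~\ref{fig-schematic-depiction}: a decision set lies in the symbolic decision set with dynamic representation $\rep=(\dynsubclasses,\assdynsubcl,\decsetfct)$ precisely when it equals $\widehat{\validassignment_\decset}^{-1}(\decsetfct)$ for some valid assignment $\validassignment_\decset$, and conversely every decision set obtained from $\decsetfct$ via a valid assignment lies in that symbolic decision set. Second, the uniqueness of canonical representations (Theorem~\ref{theo:CanonicalRepresentationUnique}), which lets us identify whatever representation a symbolic step produces after its final canonicalization step with \emph{the} canonical representation of the symbolic decision set its dynamic decision set describes. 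The recurring technical ingredient, justified by symmetry of the net (guards and arc expressions act uniformly on a static subclass, and symmetries are compatible with firing), is a \emph{commutation statement}: if $\rep_s=(\dynsubclasses_s,\assdynsubcl_s,\decsetfct_s)$ is a split representation, $\validassignment$ a valid assignment with $\widehat{\validassignment}^{-1}(\decsetfct_s)=\decset$, and $v=\validassignment^{-1}(\dynsubcl^{[\instancesubcl,\instanceelement]})$, then firing $t.\dynsubcl^{[\instancesubcl,\instanceelement]}$ from $\rep_s$ and then applying $\widehat{\validassignment}^{-1}$ yields the same decision set as first taking $\decset$ and then firing $t.v$; the analogue holds for $\top$-resolution.

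For part 1(i): given $\decset[t.v\rangle\decset'$, fix the valid assignment $\validassignment_\decset$ witnessing that $\decset$ lies in the symbolic decision set of $\rep$. Apply $\widehat{\validassignment_\decset}$ to the mode $v$: the parameter of type $\basecl_i$ carrying colour $c$ is routed to $\validassignment_\decset(c)=\dynsubcl_i^j$, and we hand equal colours equal instance indices and distinct colours distinct ones, yielding a symbolic mode $[\instancesubcl,\instanceelement]$ of $t$. Splitting $\rep$ along $[\instancesubcl,\instanceelement]$ separates off exactly the colours used by $v$ as singleton subclasses; refining $\validassignment_\decset$ accordingly keeps $\decset$ inside the symbolic decision set of $\rep_s:=\rep_{[\instancesubcl,\instanceelement]}$, so that $\validassignment_\decset(\decset)=\symbdecset_{[\instancesubcl,\instanceelement]}$ and $\validassignment_\decset^{-1}(\dynsubcl^{[\instancesubcl,\instanceelement]})=v$. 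By the commutation statement, firing $t.\dynsubcl^{[\instancesubcl,\instanceelement]}$ from $\rep_s$ produces $\rep_s'$ whose dynamic decision set describes the symbolic decision set of $\decset'$; by the definition of the symbolic firing relation and Theorem~\ref{theo:CanonicalRepresentationUnique}, the representation reached after canonicalization is the canonical one of that symbolic decision set, that is, the given $\rep'$.

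For part 1(ii): unfold $\rep[t.\dynsubcl^{[\instancesubcl,\instanceelement]}\rangle\rep'$ into split $\to\rep_s$, symbolic firing $\to\rep_s'$, canonicalization $\to\rep'$. Since splitting does not change the described symbolic decision set, pick a valid assignment $\validassignment_\decset$ with $\widehat{\validassignment_\decset}^{-1}(\decsetfct_s)=\decset$, hence $\validassignment_\decset(\decset)=\symbdecset_{[\instancesubcl,\instanceelement]}$; then $v:=\validassignment_\decset^{-1}(\dynsubcl^{[\instancesubcl,\instanceelement]})$ is a well-defined mode of $t$ that is enabled in $\decset$ because $t.\dynsubcl^{[\instancesubcl,\instanceelement]}$ is enabled in $\rep_s$. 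Firing $t.v$ from $\decset$ gives some $\decset''$ which, by the commutation statement, lies in the symbolic decision set described by $\rep_s'$; thus $\rep'$ is the canonical representation of both the symbolic decision set of $\decset''$ and, by hypothesis, that of $\decset'$, so Theorem~\ref{theo:CanonicalRepresentationUnique} forces $\decset''=s(\decset')$ for some $s\in\symmetries_\SN$, giving $\decset[t.v\rangle s(\decset')$.

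Parts 2(i) and 2(ii) are handled the same way with $v$ and $[\instancesubcl,\instanceelement]$ suppressed: the split is total, so a witnessing valid assignment for $\decset$ becomes a bijection onto the (all singleton) dynamic subclasses of $\rep_s$, and $\top$-resolution commutes with it directly. For 2(i) one picks in $\rep_s'$ the commitment sets that are the images under $\validassignment_\decset$ of those chosen in $\decset'$, obtaining $\rep[\top\rangle\rep'$; for 2(ii) the $\top$-resolution performed in $\rep_s$ pulls back along $\validassignment_\decset$ to $\decset[\top\rangle\decset''$, and uniqueness of canonical representations again yields $\decset''=s(\decset')$. I expect the main obstacle to be making the commutation statement precise: one must check that the way a split representation's dynamic decision set is formed (each symbolic tuple replaced by all tuples over the new subclasses) together with the firing and $\top$-resolution rules on dynamic decision sets agrees, after $\widehat{\validassignment_\decset}^{-1}$, with the ordinary rules on $\decset$, with no loss or duplication when a dynamic subclass of cardinality greater than $1$ is involved --- exactly the point where the symmetric-net hypotheses are used.
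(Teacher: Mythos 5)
Your proposal is correct and follows essentially the same route as the paper's proof: fix a valid assignment witnessing $\decset$ in the symbolic decision set of $\rep$, push the mode $v$ through it to obtain the symbolic mode and the split, observe that symbolic firing/$\top$-resolution with dynamic subclasses replacing colors lands in the symbolic decision set of $\decset'$, and invoke uniqueness of the canonical representation. The paper only writes out case 1(i) in detail and dismisses the rest as following from the construction or by analogy, so your explicit treatment of 1(ii) and part 2 (including isolating the commutation step as the point where the symmetric-net hypotheses are used) merely fills in what the paper leaves implicit.
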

\begin{proof}
	Case 1.i): For a relation $ \decset [t.v\rangle \decset' $ consider the valid assignment 
	$ \widehat{\validassignment_{\decset}}: \bigcup_{i=1}^n \basecl_i\to \dynsubclasses $
	with $ \rep=(\dynsubclasses,\assdynsubcl,\decsetfct) $ and $ \widehat{\validassignment_{\decset}}(\decset)=\decsetfct $.
	Applying $ \widehat{\validassignment_{\decset}} $ to a valuation $ v $ gives a 
	tuple of dynamic subclasses in $ \dynsubclasses $.
	Splitting the contained elements in this tuple into dynamic subclasses~$ \dynsubcl_i^{j,k} $ of cardinality $ 1 $
	with respect to the different colors in $ v $, gives a tuple~$ \dynsubcl^{[\instancesubcl,\instanceelement]} $ of instantiated subclasses.
	This splitting is possible since~$ \validassignment_\decset $ is valid.
	This also implies $ \validassignment_{\decset}(\decset)=\symbdecset_{[\instancesubcl,\instanceelement]}  \land \validassignment_{\decset}^{-1}(\dynsubcl^{[\instancesubcl,\instanceelement]})=v$
	for the respectively ``split'' assignment $ \validassignment_{\decset} $.
	In the firing of $ t.\dynsubcl^{[\instancesubcl,\instanceelement]} $, the dynamic subclasses replace colors, which means that
	the reached dynamic representation represents the symbolic decision set of $ \decset' $.
	Making it canonical therefore results in $ \rep' $. 
	Case 1.ii) follows directly from the construction.
	The proof for the $ \top $-resolution works analogously.
\end{proof}

\subsection{Properties of Canonical Representations}\label{sec:RepProps}
The goal is to use canonical representations instead of 
individual decision sets or (arbitrary representatives of) symbolic decision sets as nodes in a two-player game.
The edges $ (\rep,\rep') $ in this game are built from relations 
$ \rep [t.\dynsubcl\rangle \rep' $ and $ \rep [\top\rangle \rep' $,
depending on the properties of $ \rep $.
For example, if $ \rep $ describes nondeterministic situations
in the Petri game, then the edges from $ \rep $ are built in such a way that
player~0 (representing the system) cannot win the game from there. 
In this section, we define the relevant properties of canonical representations.

For a decision set $ \decset $, let $ \marking(\decset)=\{ \place\in\places \with (\place,\top)\in\decset \lor \exists \comset\subseteq\transitions : (\place,\comset)\in\decset \} $ be the \emph{underlying marking}.
Decision sets can have the following properties.
\begin{definition}[Properties of Decision Sets \cite{DBLP:journals/iandc/FinkbeinerO17}]\label{def:propsOfDecsets} Let $ \decset $ be a decision set.
	\begin{itemize}
		\item \(\decset\) is \emph{en\-vi\-ron\-ment-de\-pen\-dent} iff
		$ \neg \decset[\top\rangle $, i.e.,
		there is no $ \top $ symbol in~$ \decset $,
		and $ \forall \transition\in\transitions : \decset[\transition\rangle \Rightarrow \preset{\transition}\cap\envplaces\neq\emptyset $,
		i.e., all enabled transitions have an environment place in their preset.
		\item \(\decset\) \emph{contains a bad place} iff $ \badplaces\cap\marking(\decset)\neq\emptyset $.
		\item \(\decset\)~is a \emph{deadlock} iff 
		$ \neg \decset[\top\rangle $,
		and $\exists \transition' \in \transitions : \marking(\decset)[\transition'\rangle \land  \forall \transition \in \transitions: \neg\decset[\transition\rangle $, i.e.,
		there is a transition that is enabled in the underlying marking, but the system forbids all enabled transitions.
		\item \(\decset\) is \emph{terminating} iff \(\forall \transition\in\transitions:\neg \marking(\decset)[\transition \rangle\).
		\item \(\decset\) is \emph{nondeterministic} iff 
		\(\exists \transition_1, \transition_2 : \transition_1\neq \transition_2
		\land \sysplaces\cap\preset{\transition_1}\cap\preset{\transition_2}\neq \emptyset \land 
		\decset[\transition_1\rangle \land \decset[\transition_2\rangle  \),
		i.e., two separate transitions sharing a system place in their presets
		both are~enabled.
		\end{itemize}
\end{definition}

In \cite{DBLP:journals/acta/GiesekingOW20}, 
we showed that all
decision sets in one equivalence class
share the same of the properties defined above.
Thus, we say a symbolic decision set has one of the above properties
iff its individual members have the respective property.

We now define these properties for canonical representations.
Since we do not want to consider individual decision sets, 
we do that on the level of dynamic representations.
The properties ``en\-vi\-ron\-ment-de\-pen\-dent'' and ``containing a bad place'' are rather analogous to the respective property of decision sets.
For termination and deadlocks we introduce 
for a given $ \rep $ 
the representation $ \rep_\mathit{all} $
with the same dynamic subclasses and assignment, and the dynamic decision set 
where every player has all possible transitions $ t.\dynsubcl $
in their commitment set.
Since then all transitions that could fire in the underlying marking are enabled,
$ \rep_\mathit{all} $ substitutes for $ \marking(\decset) $ in Def.~\ref{def:propsOfDecsets}. 
For nondeterminism we have to consider two cases. 
The first one ($ \mathit{ndet}_1(\rep) $) is analogous to the property for individual decision sets.
The second case ($ \mathit{ndet}_2(\rep) $) considers the situation that \emph{two instances} 
of one $ t.\dynsubcl $ can both fire with a shared system place in their preset.

\begin{definition}[Properties of Canonical Representations of Symbolic Decision Sets]
	Let $ \rep=(\dynsubclasses,\assdynsubcl,\decsetfct) $ be a canonical representation.
	\begin{itemize}
		\item 	$ \rep $ is \emph{en\-vi\-ron\-ment-de\-pen\-dent} iff
		$ \neg\rep[\top\rangle $, i.e.,
		there is no $ \top $ symbol in any tuple in $ \decsetfct $,
		and $ \forall t.\dynsubcl^{[\instancesubcl,\instanceelement]} : \rep[t.\dynsubcl^{[\instancesubcl,\instanceelement]}\rangle \Rightarrow 
		\exists p\in\hlenvplaces: (p,t)\in\hlflowfunc $.
		\item $ \rep $ \emph{contains a bad place} iff 
		$ \exists p\in\hlbadplaces\,\exists X: 
		(p.X,\top)\in\rep \lor\exists \comset : (p.X,\comset)\in\decsetfct  $.
		\item $ \rep $ is a \emph{deadlock} iff 
		$ \neg\rep[\top\rangle $,
		and $\exists t'.\dynsubcl' : \rep_\mathit{all}[t'.\dynsubcl'\rangle \land  \forall t.\dynsubcl : \lnot\rep[t.\dynsubcl\rangle $.
		\item $ \rep $ is \emph{terminating} 
		iff \(\forall t.\dynsubcl :\neg \rep_\mathit{all}[t.\dynsubcl \rangle\).
		\item $ \rep $ is \emph{nondeterministic} iff $ \mathit{ndet}_1(\rep)\lor \mathit{ndet}_2(\rep)$,
		where 
		\begin{itemize}
			\item 
			$ \mathit{ndet}_1(\rep)= \exists t.\dynsubcl, t'.\dynsubcl' : t.\dynsubcl\neq t'.\dynsubcl'
			\land \exists p\in\hlsysplaces\,\exists X,\comset:\\ (p.X, \comset)\in\decsetfct\land t.\dynsubcl, t'.\dynsubcl'\in\comset \land 
			\rep[t,\dynsubcl\rangle \land \rep[t'.\dynsubcl'\rangle$.
			\item $ \mathit{ndet}_2(\rep)=\exists t.\dynsubcl\, \exists p\in\hlsysplaces\,\exists X,\comset:\\ (p.X, \comset) \in\decsetfct\land t.\dynsubcl\in\comset \land \exists \dynsubcl_i^{j,k}\in\dynsubcl : |\dynsubcl_i^{j}|>1 \land 
			\rep[t,\dynsubcl\rangle  $.
		\end{itemize}
	\end{itemize}
\end{definition}
By applying \refTheo{relBetweenCanonReps}
to the properties of individual decision sets we get the following result. 
\begin{corollary}\label{cor:PropsOfSDAndCR}
	The properties of a symbolic decision set and 
	its canonical representation coincide.	
\end{corollary}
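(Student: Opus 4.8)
\emph{Proof plan.} The plan is to fix a symbolic decision set $\symbdecset$ with canonical representation $\rep=(\dynsubclasses,\assdynsubcl,\decsetfct)$, pick an arbitrary member $\decset\in\symbdecset$, and use the valid assignment $\validassignment_{\decset}$ with $\decset=\validassignment_{\decset}^{-1}(\decsetfct)$ that exists by definition of a dynamic representation. Since it is shown in~\cite{DBLP:journals/acta/GiesekingOW20} that every decision set in one equivalence class shares each of the five properties of Def.~\ref{def:propsOfDecsets}, it suffices to prove, separately for each property, that $\decset$ has it iff $\rep$ has the corresponding property. In every case I would translate a predicate on $\decset$ into one on $\decsetfct$ along $\validassignment_{\decset}$ and translate each firing- or $\top$-resolution statement via Theorem~\ref{theo:relBetweenCanonReps}; the point of that theorem is exactly that these relations are mirrored between the two levels, so after the translation the two property definitions become term by term the same.

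First the routine clauses. The clause ``$\neg\decset[\top\rangle$'' that appears in \emph{environment-dependent} and in \emph{deadlock} matches ``$\neg\rep[\top\rangle$'' directly by reading Theorem~\ref{theo:relBetweenCanonReps}(2) in both directions, giving $\decset[\top\rangle$ iff $\rep[\top\rangle$. \emph{Containing a bad place} is a literal transcription through $\validassignment_{\decset}$: a place $p.c$ lies in $\marking(\decset)$ iff $\decsetfct$ contains a tuple over $p.\dynsubcl$ with $\dynsubcl$ the image of $c$ under $\validassignment_{\decset}$, and $p.c$ is a bad place of $\HLtoLL(\SG)$ iff $p\in\hlbadplaces$. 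For the transition clause of \emph{environment-dependent}, Theorem~\ref{theo:relBetweenCanonReps}(1) gives a correspondence, up to a symmetry, between transitions $t.v$ enabled in $\decset$ and symbolic instances of the same underlying $t$ firable from $\rep$; since whether the preset meets the environment places depends only on $t$, namely $\preset{t.v}\cap\envplaces\neq\emptyset$ iff $\exists p\in\hlenvplaces:(p,t)\in\hlflowfunc$, the two transition clauses coincide.

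For \emph{terminating} and \emph{deadlock}, $\rep_\mathit{all}$ plays the role of $\marking(\decset)$. The key auxiliary observation is that the decision set $\decset_\mathit{all}$ obtained from $\decset$ by giving every player its full postset as commitment set satisfies $\marking(\decset_\mathit{all})=\marking(\decset)$, has exactly the transitions of $\marking(\decset)$ enabled, and equals the image of the dynamic decision set of $\rep_\mathit{all}$ under $\validassignment_{\decset}$; hence the correspondence of Theorem~\ref{theo:relBetweenCanonReps}(1), applied to the canonicalisation of $\rep_\mathit{all}$ (which does not change firability), carries over. Thus $\exists\transition:\marking(\decset)[\transition\rangle$ iff $\exists t.\dynsubcl:\rep_\mathit{all}[t.\dynsubcl\rangle$, which gives the termination equivalence, and adding ``$\forall\transition:\neg\decset[\transition\rangle$ iff $\forall t.\dynsubcl:\neg\rep[t.\dynsubcl\rangle$'', again from Theorem~\ref{theo:relBetweenCanonReps}(1), gives the deadlock equivalence.

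The main obstacle is \emph{nondeterminism}, because a single transition occurring in a decision set can correspond to several P/T transitions precisely when they fall into one symbolic instance, which is what the extra disjunct $\mathit{ndet}_2$ is designed to catch. Given two distinct enabled P/T transitions $t_1.v_1\neq t_2.v_2$ in $\decset$ sharing a system place $p.c$, I would look at the symbolic instances $t_1.\dynsubcl_1$, $t_2.\dynsubcl_2$ they belong to (obtained by applying $\validassignment_{\decset}$ component-wise to $v_1,v_2$). If $t_1.\dynsubcl_1\neq t_2.\dynsubcl_2$, then in $\decsetfct$ the shared token becomes $p.\validassignment_{\decset}(c)$, lying in a commitment set that contains both $t_1.\dynsubcl_1$ and $t_2.\dynsubcl_2$, and both are firable from $\rep$ by Theorem~\ref{theo:relBetweenCanonReps}(1), so $\mathit{ndet}_1(\rep)$ holds. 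Otherwise $t_1=t_2=:t$, and $v_1,v_2$ differ in some parameter whose two colours land in a common dynamic subclass $\dynsubcl_i^j$, forcing $|\dynsubcl_i^j|>1$; together with firability of $t.\dynsubcl$ this is exactly $\mathit{ndet}_2(\rep)$. The converse runs the other way: from $\mathit{ndet}_1(\rep)$ one picks colours for the dynamic subclasses in the witnessing tuple and in the two symbolic instances — disjointness of the preimages under $\validassignment_{\decset}$ keeps the two resulting P/T transitions distinct while the shared token stays fixed — and transfers firability back to $\decset$ by Theorem~\ref{theo:relBetweenCanonReps}(1)(ii); and from $\mathit{ndet}_2(\rep)$, the condition $|\dynsubcl_i^j|>1$ supplies two distinct colours for the varying parameter, hence two distinct modes of $t$ both enabled in $\decset$ and both keeping the shared system place. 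The delicate part throughout is the bookkeeping of which P/T transitions collapse into the same symbolic instance, and of exhibiting the two explicit conflicting witnesses in each direction; everything else is a mechanical application of Theorem~\ref{theo:relBetweenCanonReps} and of the assignment $\validassignment_{\decset}$.
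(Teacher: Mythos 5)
Your proposal is correct and follows essentially the same route as the paper, which justifies the corollary only by the remark that it follows from applying Theorem~\ref{theo:relBetweenCanonReps} to the properties of individual decision sets; you simply carry out that application property by property, and your case split for nondeterminism (distinct symbolic instances giving $\mathit{ndet}_1$ versus two modes of one symbolic instance giving $\mathit{ndet}_2$) is exactly the distinction the two disjuncts in the definition are designed to capture. No gap.
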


\hypertarget{target:DirStratGen}{}%
\section{Applying Canonical Representations}\label{sec:DirStratGen}
In this section, we define for a high-level Petri game \(\SG\) the two-player B\"uchi game~$ \STPGc $ with 
canonical representations~$ \rep $ of 
symbolic decision sets,
rather than arbitrary representative $ \overline{\decset}$ as in \cite{DBLP:journals/acta/GiesekingOW20}.
The edges between nodes
are directly implied by their properties and the relations 
$ \rep[t.\dynsubcl^{[\instancesubcl,\instanceelement]}\rangle\rep' $ and $ \rep[\top\rangle\rep' $ 
between canonical representations.
This allows to
\emph{directly} generate a winning strategy
for the system players in $ \PG $ from a winning strategy for player~0
in $ \STPGc $ (cf.~\refFig{HLandLLDiagram}). 

Recall that we consider high-level Petri games $ \SG\in\SGclass $, i.e.,
the represented safe P/T Petri game~\(\HLtoLL(\SG)\) 
has at most one environment player,
a bounded number of system players with a safety objective,
and the system cannot proceed infinitely without the environment.

\subsection{The Symbolic Two-Player Game}
We reduce a Petri game $ \PG $
with high-level representation $ \SG $
to a two-player B\"uchi game $ \STPGc $.
The goal is to directly create a strategy $ \PGStrat $
for the system players in $ \PG $
from a strategy $ \STPGStrat $ for player 0 in~$ \STPGc $.
Recall that a Petri game strategy must be deadlock-free, deterministic, and 
must satisfy the justified refusal condition.
Additionally, to be winning, it must not contain bad places.

The nodes in $ \STPGc $ are canonical representations of symbolic decision sets, i.e.,
they represent equivalence classes of situations in the Petri game.
The properties of canonical representations defined in \refSection{RepProps}
characterize these situations.
These properties are used in the construction of the game.
As in \cite{DBLP:journals/iandc/FinkbeinerO17,DBLP:journals/acta/GiesekingOW20}, 
the environment in the game $ \STPGc $ only moves when 
the system players cannot continue alone.
Thus, they get informed 
of the environment's decisions in their next steps 
and the system can therefore be considered as completely informed.
Bad situations (nondeterminism, deadlocks, tokens on bad places)
result in player~0 not winning.
If player~0 can avoid these situations and always win the game,
this strategy can be translated into a winning strategy for the system players in the Petri~game.
\begin{definition}[Symbolic Two-Player B\"uchi Game with Canonical Representations]
	For a high-level Petri game $ \SG $,
	$ \STPGc=( \sysnodes,\envnodes,\bgedges,\acceptingstates,\rep_0 ) $ 
	has the following components.
	\begin{itemize}
		\item The \emph{nodes} $ \bgnodes=\sysnodes\dot\cup\envnodes $
		are all possible canonical representations $ \rep $ of 
		symbolic decision sets in $ \SG $.
		The partition into \emph{player~$ 0 $'s nodes} $ \sysnodes $ 
		and \emph{player~$ 1 $'s nodes}~$ \envnodes $
		is given by $ \envnodes=\{
		\rep\with\rep $  is environment dependent$ \} $
		and $ \sysnodes=\bgnodes\setminus\envnodes $.
		\item The \emph{edges}~$ \bgedges $ are constructed as follows.
		If $ \rep\in\bgnodes $ contains a bad place, 
		is a deadlock, is terminating, 
		or is nondeterministic, there is only
		a self-loop originating from $ \rep $. 
		If $ \rep\in\sysnodes $ 
		then $ (\rep,\rep')\in\bgedges $ if either 
		$ \rep[\top\rangle\rep' $, or,
		if no $ \top $ can be resolved,
		$ \rep[t.\dynsubcl^{[\instancesubcl,\instanceelement]} \rangle \rep' $
		with only system players participating in~$ t.\dynsubcl^{[\instancesubcl,\instanceelement]} $.
		If $ \rep\in\envnodes $, then 
		$ (\rep,\rep')\in\bgedges $ for every $ \rep' $
		such that
		$ \rep[t.\dynsubcl^{[\instancesubcl,\instanceelement]} \rangle \rep' $,
		i.e., transitions involving environment players
		can only fire if nothing else is possible.
		\item The set~$ \acceptingstates $ of \emph{accepting nodes} 
		contains all representations~$ \rep $ that are 
		terminating or en\-vi\-ron\-ment-dependent, 
		but are not a deadlock, nondeterministic, or contain a bad place.
		\item The \emph{initial state} $ \rep_0 $
		is the canonical representation of the 
		symbolic decision set containing $ \decset_0 $.
	\end{itemize}
\end{definition}

A function
$ \STPGStrat:{\bgnodes}^*\sysnodes\to\bgnodes  $
s.t.
$ \forall \rep_0'\cdots\rep_k'\in
{\bgnodes}^*\sysnodes: (\rep_k',f(\rep_0'\cdots\rep_k'))\in\bgedges $
is called a \emph{strategy} for player~0.
A strategy $ \STPGStrat $ is called \emph{winning} iff
every \emph{run}
$ \rho=\rep_0 \rep_1 \rep_2\cdots $
from $ \rep_0 $ in~$ \STPGc $ (i.e., $ \forall k:
(\rep_k,\rep_{k+1})\in\bgedges $)
that is \emph{consistent with} $ \STPGStrat $
(i.e., $ \rep_k\in\sysnodes
\Rightarrow \rep_{k+1}=
\STPGStrat(\rep_0\cdots\rep_k) $)
\emph{satisfies the B\"uchi condition
	w.r.t.~$ \acceptingstates $} 
(i.e., $ \forall k\ \exists k'\geq k:
\rep_{k'}\in\acceptingstates $).

In the game~$ \STPG $ in \cite{DBLP:journals/acta/GiesekingOW20},
player~0 has a winning strategy if and only if the system players in $ \HLtoLL(\SG) $
have a winning strategy. 
As described above, it is built from the relations $ \overline{\decset}[t.c\rangle\decset' $
and $ \overline{\decset}[\top\rangle\decset' $ from representatives $ \overline{\decset} $
of symbolic decision sets.
The introduced game~$ \STPGc $ is built analogously, with the difference that 
the nodes are now canonical representations
instead of arbitrary representatives of symbolic decision sets,
and the edges are built from the relations $ \rep [t.\dynsubcl^{[\instancesubcl,\instanceelement]}\rangle \rep' $ and 
$ \rep [\top \rangle \rep' $ (cf. \refTheo{relBetweenCanonReps} and \refCor{PropsOfSDAndCR}).
The two games are isomorphic, as depicted in \refFig{HLandLLDiagram}.
Thus, we get the following result.
\begin{theorem}\label{theo:StrategyEquivalence}
	Given a high-level Petri game~$ \SG\in\SGclass $,
	there is a winning strategy for the system players in 
	the P/T Petri game
	$ \HLtoLL(\SG) $
	if and only if 
	there is a winning strategy for player~$ 0 $ in~$ \STPGc $.
\end{theorem}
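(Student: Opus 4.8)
The plan is to obtain this equivalence from the corresponding one already established for the symbolic game $\STPG$ in~\cite{DBLP:journals/acta/GiesekingOW20} --- namely that player~$0$ has a winning strategy in $\STPG$ if and only if the system players have a winning strategy in $\HLtoLL(\SG)$ --- by exhibiting an isomorphism of two-player B\"uchi games $\STPG\cong\STPGc$. Since the winner of a B\"uchi game is invariant under game isomorphism (a winning strategy is transported verbatim along the node bijection), this immediately yields \refTheo{StrategyEquivalence}.

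First I would fix the node bijection $\Phi$. By \refTheo{CanonicalRepresentationUnique} every symbolic decision set has exactly one canonical representation, so sending a symbolic decision set --- equivalently the arbitrary representative $\overline{\decset}$ used as a node in $\STPG$ --- to its canonical representation $\rep$ is a bijection between the node sets of $\STPG$ and $\STPGc$. Next I would check that $\Phi$ respects all remaining structure of the games. The partition into player~$1$'s nodes and player~$0$'s nodes is matched, because player~$1$'s nodes are the environment-dependent ones and ``environment-dependent'' coincides for a symbolic decision set and its canonical representation by \refCor{PropsOfSDAndCR}. The self-loops on nodes that contain a bad place, are a deadlock, are terminating, or are nondeterministic are matched, since these four properties likewise coincide by \refCor{PropsOfSDAndCR}. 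The accepting set $\acceptingstates$ is matched, since on both sides it is defined purely in terms of exactly these properties. Finally the initial node is matched, since on both sides it is (the representative, resp.\ canonical representation, of) the symbolic decision set containing $\decset_0$.

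The main and most delicate step is to show that $\Phi$ is an isomorphism on the edge relations. The edges of $\STPG$ are built from $\overline{\decset}[t.c\rangle\decset'$ and $\overline{\decset}[\top\rangle\decset'$ (lifted to symbolic decision sets), while those of $\STPGc$ are built from $\rep[t.\dynsubcl^{[\instancesubcl,\instanceelement]}\rangle\rep'$ and $\rep[\top\rangle\rep'$; the bridge between them is exactly \refTheo{relBetweenCanonReps}. Its parts~1 and~2 translate each decision-set relation into a relation of the corresponding canonical representations and back, and the ``$\exists s\in\symmetries_\SN$'' clauses appearing in the cases~1.ii) and 2.ii) are harmless, because $\decset'$ and $s(\decset')$ lie in the same symbolic decision set and are therefore identified by $\Phi$. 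What remains is to verify that the side conditions deciding which relations actually become edges agree under $\Phi$: in a player~$0$ node a $\top$-resolution is taken whenever possible, and otherwise only moves with no environment place in the preset of the fired (symbolic) transition are admitted, whereas in a player~$1$ node only environment-involving moves occur. These conditions are phrased through the same properties (``some $\top$ is resolvable'', ``environment-dependent'', and the preset of the fired transition, which is static), whose preservation is precisely what \refTheo{relBetweenCanonReps} and \refCor{PropsOfSDAndCR} provide. Assembling these facts shows that $\Phi$ is a B\"uchi-game isomorphism (cf.~\refFig{HLandLLDiagram}); composing it with the equivalence of~\cite{DBLP:journals/acta/GiesekingOW20} proves the theorem.
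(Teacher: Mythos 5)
Your proposal is correct and follows essentially the same route as the paper: the text immediately preceding the theorem argues exactly that $\STPGc$ is isomorphic to $\STPG$ (nodes identified via the unique canonical representation, edges and properties matched via \refTheo{relBetweenCanonReps} and \refCor{PropsOfSDAndCR}), and then invokes the known equivalence for $\STPG$ from \cite{DBLP:journals/acta/GiesekingOW20}. You merely spell out the details of the node bijection and the edge/property preservation more explicitly than the paper does.
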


The size of $ \STPGc $ is the same as of $ \STPG $ 
(exponential in the size of $ \PG $).
This means, using~$ \STPGc $, the question
whether a winning strategy in $ \PG $ exists
can still be answered in single exponential time~\cite{DBLP:journals/iandc/FinkbeinerO17}.
In $ \STPG $ we must, for a newly reached node $ \decset' $, test if 
it is equivalent to another, already existing, representative.
This means we check for all symmetries $ s\in\symmetries $
if $ s(\decset') $ is already a node in the game. 
In the best case, if we directly find the node, this is $ 1 $ comparison.
In the worst case, at step~$ i $ with currently $ |\bgnodes^i| $ nodes,
we must make $ |\symmetries_\SN||\bgnodes^i| $ comparisons 
(no symmetric node is in the game so far).
To get on the other hand the canonical representation
of a reached node in $ \STPGc $, we must make less than $ {|\symmetries_\SN|} $ 
comparisons to order the dynamic representation
(cf. \refCor{CostOfCanon}), 
and then compare it to all existing nodes.
Thus, $ |\symmetries_\SN|+1 $ comparisons in the best case
vs. $ |\symmetries_\SN|+|\bgnodes^i| $ in the $ i $-th step in the worst case.
We further investigate experimentally on this in \refSection{ExperimentalResults}.

\subsection{Direct Strategy Generation}\label{sec-directStrategyGeneration}
The solving algorithm in \cite{DBLP:journals/acta/GiesekingOW20} 
builds a strategy 
in the Petri game $ \PG=\HLtoLL(\SG) $ 
from a strategy in $ \STPG $ by 
first generating a strategy in the low-level equivalent~$ \TPG $.
Constituting the canonical representations as 
nodes 
allows us to directly generate a winning strategy~$ \PGS $ for the system players in $ \PG $
from a winning strategy $ \STPGStrat $ 
for player~$ 0 $ in $ \STPGc $ (cf. \refFig{HLandLLDiagram}).

\subsubsection{The Translation Algorithm}\label{sec:translationAlgorithm}
The key idea is the same as in \cite{DBLP:journals/iandc/FinkbeinerO17}.
The strategy $ \STPGStrat $ 
is interpreted 
as a tree~$ \strategyTree $
with labels in~$ \bgnodes $, and root~$ \treenode_0 $ 
labeled with~$ \rep_0 $.
The tree is traversed in breadth-first order,
while the strategy $ \PGS $ is extended
with every reached node.
To show that this procedure is correct,
we must show that the generated strategy $ \PGStrat $
is satisfying the conditions justified refusal,
determinism, and deadlock freedom.
Justified refusal is satisfied because of the delay of environment transitions.
Assuming nondeterminism or deadlocks in the generated strategy $ \sigma $ leads to 
the contradiction that there are respective decision sets in $ \strategyTree $.
Finally, $ \PGStrat $ is \emph{winning},
since $ \STPGStrat $ also does not reach
representations that contain a bad place.
\begin{figure}[!tb]
	\centering
	\input{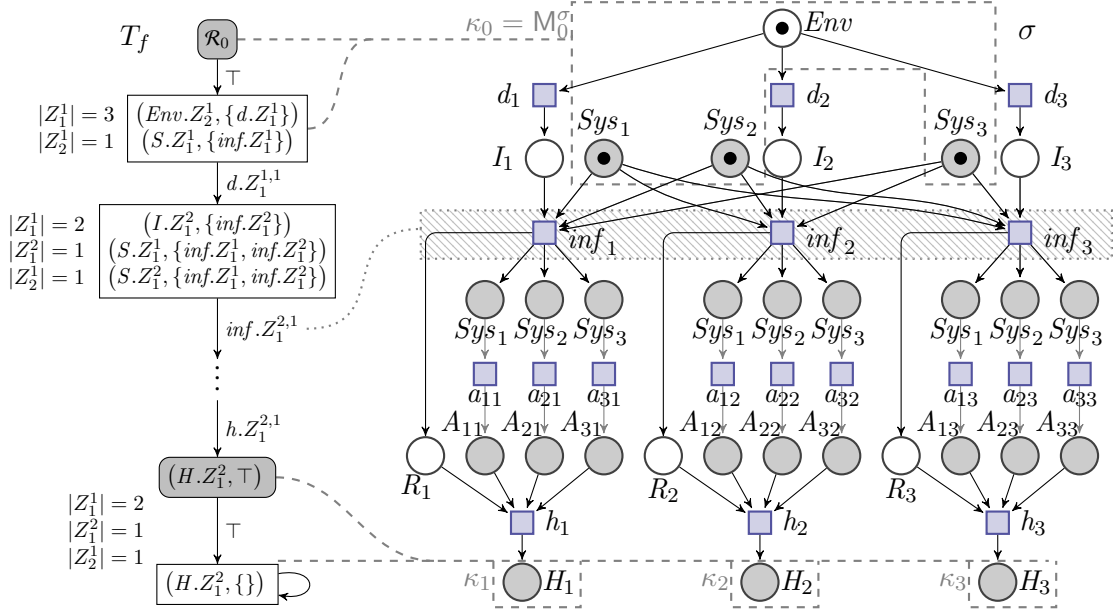}%
	\caption{\label{fig:FromFtoSigmaSmall}
		Parts of a winning strategy 
		for player~$ 0 $ in
		$ \STPGc $ (a tree with gray nodes for player~$ 0 $),
		and the generated strategy
		for the system players in $ \HLtoLL(\SG) $.
	}
\end{figure}

Now we describe the algorithm in detail.
Initially, the strategy $ \PGS $
contains places corresponding to the 
initial marking $ \marking_0 $ 
in the Petri game, i.e.,
places labeled with $ p.c $
for every $ p.c\in\marking_0 $, 
each with a token on them.
They constitute the initial marking of $ \PGStrat $.
Every node $ \treenode $ in $ \strategyTree $,
labeled with $ \rep=(\dynsubclasses,\decsetfct) $,
is now associated to a set $ \cuts_{\treenode} $ of \emph{cuts} --
reachable markings in the strategy/unfolding.
The set $ \cuts_{\treenode_0} $,
associated to the root $ \treenode_0 $, contains only
the cut $ \cut_0 $ consisting of 
the places in the 
initial marking described above.
Every cut $ \cut $ is equipped with
a valid assignment~$ \validassignment_\cut $
that maps $ \cut $ to 
places in the dynamic decision set
$ \decsetfct $.
Every edge~$ (\treenode,\treenode') $ in $ \strategyTree $ 
corresponds to an edge~$ (\rep,\rep') $
in $ \STPGc $, and the edges there again can correspond to 
either relations $ \rep[t.\dynsubcl^{[\instancesubcl,
	\instanceelement]}\rangle \rep' $
or $ \rep[\top\rangle\rep' $.

Suppose in the breadth-first traversal of $ \strategyTree $
we reach a node~$ \treenode $ with label $ \rep $
and set of associated cuts $ \cuts_\treenode $. 
Further suppose there is an edge in $ \strategyTree $
from $ \treenode $ to a node $ \treenode' $ labeled with $ \rep' $.
From there, we distinguish three cases:

1) The edge $ (\treenode,\treenode') $ in $ \strategyTree $
corresponds to a relation $ \rep[\top\rangle\rep' $ in $ \SG $. 
As we know, this can be depicted as 
$ \rep\xrightarrow{\text{splitting}} 
\rep_s \xrightarrow{\top\text{-resolution}}
\rep_s'\xrightarrow{\text{canon. rep. } }\rep'$.
We now describe how $ \cuts_\treenode $ changes in these three steps.
In the first step, the partition $ \rep_s $, there is a
family $ \partitionfunc $ of functions 
$ \partitionfunc_i:({\dynsubclasses_s})_i
\to \dynsubclasses_i $
describing the splitting.
For every $ (\cut,\validassignment_{\cut})\in \cuts_{\treenode} $,
we change the assignment to a function
$ \validassignment_{\cut}':\bigcup_i\basecl_i\to \dynsubclasses_s $
such that 
$ \validassignment_{\cut}=\partitionfunc
\circ\validassignment_{\cut}' $.
In the second step, the resolution of $ \top $,
the set $ \cuts_\treenode $ does not change 
(and
we now denote $ \validassignment_{\cut}':
\bigcup_i\basecl_i\to\dynsubclasses_s' $) .
For the third step (the canonical representation)
there again is a partition function
$ \partitionfunc': \dynsubclasses_s'\to
\dynsubclasses' $. 
We again change the assignment
for every cut $ \cut $ to
$ \validassignment_\cut''=
\partitionfunc'\circ\validassignment_\cut' $.
Finally we assign to node $ \treenode' $ the set
$ \cuts_{\treenode'}=\{ (\cut,\validassignment_\cut'')\with
(\cut,\validassignment_{\cut})\in\cuts_\treenode \} $.

2) The edge $ (\treenode,\treenode') $ in $ \strategyTree $
corresponds to a relation $ \rep[
t.\dynsubcl^{[\instancesubcl,\instanceelement]} \rangle\rep' $ 
in $ \SG $ and $ t $ is an system transition. 
The relation can be described by 
$ \rep\xrightarrow{\text{splitting w.r.t } [\instancesubcl,\instanceelement]} 
\rep_s \xrightarrow{\text{firing } t.\dynsubcl^{[\instancesubcl,\instanceelement]}}
\rep_s'\xrightarrow{\text{canon. rep. } }\rep'$.
Since the first step (splitting) is a special case of partition,
we proceed as in the first case, and get the same cuts~$ \cut $
with altered assignments
$ \validassignment_\cut': 
\bigcup_i\basecl_i\to \dynsubclasses_s $.
Since after the process of splitting, 
the dynamic subclasses appearing in 
$ \dynsubcl^{[\instancesubcl,\instanceelement]} $
are all of cardinality $ 1 $, there is for each cut $ \cut $
only one mode $ v_\cut=((c_{i,j})_{j=1}^{t_i})_{i=1}^n\in\valuations(t) $
such that $ \validassignment_\cut(v_\cut)=
\dynsubcl^{[\instancesubcl,\instanceelement]} $.
For every cut $ \cut $ we add a transition 
with label $ t.d $ to the strategy, 
the places labeled with $ p.c $ from $ \cut $
in its preset such that $ c\in v(p,t) $,
and we add places labeled with~$ p'.c' $ to the 
strategy in the postset of the transition if 
$ c'\in v(t,p') $. 
We then delete the places in the preset 
of the new transition from $ \cut $ and add the places 
in the postset, to get the cut~$ \cut' $ with 
the same assignment $ \validassignment_{\cut'}
=\validassignment_\cut':
\bigcup_i\basecl_i\to \dynsubclasses_s' $.
In the third step (canonical representation) we can
again proceed as in the first case, which results in 
$ \cuts_{\treenode'}=\{ (\cut',\validassignment_\cut'')\with
(\cut,\validassignment_{\cut})\in\cuts_\treenode \} $.

3) The edge $ (\treenode,\treenode') $ in $ \strategyTree $
corresponds to a relation $ \rep[
t.\dynsubcl^{[\instancesubcl,\instanceelement]} \rangle\rep' $ 
in $ \SG $ and $ t $ is an environment transition. 
In this case, we proceed exactly as in the second case,
but with one crucial change in the first step 
(splitting w.r.t $ [\instancesubcl,\instanceelement] $):
instead of choosing \emph{one} function
$ \validassignment_{\cut}':\bigcup_i\basecl_i\to \bigcup_i\rep_p.\symbcl_i $
such that 
$ \validassignment_{\cut}=\partitionfunc
\circ\validassignment_{\cut}' $,
we add a pair $ (\cut,\validassignment_\cut') $ for 
\emph{all} such assignments and all corresponding 
$ t.\dynsubcl^{[\instancesubcl,\instanceelement]} $.
The rest of the procedure remains the same.

In \refFig{FromFtoSigmaSmall}, the strategy tree
$ \strategyTree $ (consisting of only one branch) in $ \STPGc $ and the generated strategy~$ \PGStrat $ in 
(the unfolding of) $ \HLtoLL(\SG) $ for the running example $ \SG $ are depicted.
The strategy $ \PGStrat $ was already informally described in \refSection{Introduction}
and partly shown in \refFig{UnfAndStrat}.
The initial canonical representation $ \rep_0 $
is associated to the cut representing the initial marking in the Petri game.
The $ \top $-resolution does not change the associated cuts.
Firing $ \mathit{d}.\dynsubcl_1^{2,1} $ corresponds to the three firings 
of $ \mathit{d}_1 $, $ \mathit{d}_2 $, and $ \mathit{d}_3 $ in the strategy.
Thus, the third canonical representation is associated to the three cuts
$ \{ \mathit{Sys}_1,\mathit{Sys}_2,\mathit{Sys}_3, \mathit{I}_j \} $, $ j=1,2,3 $.
The strategy $ \strategyTree $ terminates in the canonical representation at the bottom,
which corresponds to the three situations where all computers connected to the correct host.

\subsubsection{Correctness of the Algorithm}
Finally, we prove that
the algorithm 
is correct.
This proof works analogous to 
\cite{DBLP:journals/iandc/FinkbeinerO17},
where Finkbeiner and Olderog consider the case of P/T Petri games.
The crucial difference is that we associate multiple 
cuts in $ \PGStrat $ to a node in the tree~$ \strategyTree $.
We have to prove that the generated strategy $ \PGStrat $
is satisfying the conditions justified refusal,
determinism and deadlock freedom defined above.
Then, $ \PGStrat $ is \emph{winning},
since $ \STPGStrat $ does not reach
representations that contain a bad place.

\begin{theorem}
	Given a high-level Petri game $ \SG\in\SGclass $,
	the algorithm presented in \refSection{translationAlgorithm} translates a winning strategy for player~0 in $ \STPGc $ 
	into a winning strategy for the system players in~$ \HLtoLL(\SG) $.
\end{theorem}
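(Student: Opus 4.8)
The plan is to show that the object $\PGStrat$ produced by the algorithm is a winning strategy for the system players in $\PG=\HLtoLL(\SG)$ by verifying, in turn, that it is a subprocess of the unfolding of $\PG$ and that it satisfies justified refusal, determinism, deadlock-freedom, and the safety condition. The central device is an invariant maintained throughout the breadth-first traversal of the strategy tree $\strategyTree$: for every node $\treenode$ labeled with a canonical representation $\rep=(\dynsubclasses,\assdynsubcl,\decsetfct)$ and every pair $(\cut,\validassignment_\cut)\in\cuts_\treenode$, the cut $\cut$ is a reachable marking of $\PGStrat$, the assignment $\validassignment_\cut$ is valid for $\rep$, the decision set $\decset_\cut:=\validassignment_\cut^{-1}(\decsetfct)$ lies in the symbolic decision set represented by $\rep$ and describes the situation of $\PGStrat$ at $\cut$, and — collecting the conversely chosen modes over all $\cut\in\cuts_\treenode$ — the environment's behaviour left open in $\STPGc$ matches exactly the environment's options in $\PGStrat$. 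The theorem then follows by inspecting the places reached.

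First I would settle the base case: $\cuts_{\treenode_0}=\{(\cut_0,\validassignment_{\cut_0})\}$ with $\cut_0$ the places of $\marking_0$ and $\decset_{\cut_0}=\decset_0$, which by definition lies in the symbolic decision set of $\rep_0$. For the inductive step I would treat each of the three algorithmic cases separately, in each case using the decomposition $\rep\to\rep_s\to\rep_s'\to\rep'$. In step one (splitting, resp.\ splitting w.r.t.\ $[\instancesubcl,\instanceelement]$) the partition function $\partitionfunc$ only refines dynamic subclasses, so a valid $\validassignment_\cut'$ with $\validassignment_\cut=\partitionfunc\circ\validassignment_\cut'$ exists and $\decset_\cut$ is unchanged; for environment transitions one keeps \emph{all} such $\validassignment_\cut'$, which is precisely what makes the conversely chosen modes cover every environment move. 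In step two, the $\top$-resolution resp.\ the firing of $t.\dynsubcl^{[\instancesubcl,\instanceelement]}$ on $\rep_s$ corresponds, after applying $\validassignment_\cut'$, to an honest $\top$-resolution resp.\ firing $\decset_\cut[t.v_\cut\rangle$ with $v_\cut$ the unique mode satisfying $\validassignment_\cut'(v_\cut)=\dynsubcl^{[\instancesubcl,\instanceelement]}$ (unique because the instantiated subclasses have cardinality $1$); this is exactly the content of \refTheo{relBetweenCanonReps}, and it is where soundness of symbolic firing for symmetric nets enters. Adding the corresponding transition and its postset places to $\PGStrat$ yields the new cut $\cut'$ with $\decset_{\cut'}$ the result of that firing. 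In step three, canonicalization again only relabels via a partition $\partitionfunc'$, preserving validity and the represented symbolic decision set, so the invariant holds at $\treenode'$.

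With the invariant in hand, the structural conditions follow as in \cite{DBLP:journals/iandc/FinkbeinerO17}. Justified refusal holds because environment transitions are never added from a system node of $\strategyTree$ and are taken only when no system player can progress, so no pure environment choice is restricted, while a missing transition is missing because a system player's recorded commitment set omits it. Determinism and deadlock-freedom are shown by contraposition: if at some cut $\cut$ two transitions sharing a system place in their presets were simultaneously enabled, then $\decset_\cut$ would be nondeterministic, hence by \refCor{PropsOfSDAndCR} the canonical representation $\rep$ labeling $\treenode$ would be nondeterministic, so $\STPGc$ has only a self-loop at $\rep$; a run of $\STPGStrat$ through $\rep$ would then remain in that self-loop and never reach an accepting node, contradicting that $\STPGStrat$ is winning. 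The deadlock case is identical via $\rep_\mathit{all}$. Winning is then immediate: $\STPGStrat$ never visits a representation containing a bad place, so by \refCor{PropsOfSDAndCR} no $\decset_\cut$ contains one, and by the invariant no reachable marking of $\PGStrat$ marks a place in $\badplaces$.

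The main obstacle I expect is the bookkeeping in case 3 (environment transitions): one must show that branching over \emph{all} compatible assignments $\validassignment_\cut'$ in the splitting step matches exactly the branching that firing $\mathit{d}$ in every mode induces in $\PGStrat$, and that after this multiplication of cuts the invariant — in particular validity of every $\validassignment_\cut$ and the determinism argument, which must now hold simultaneously for \emph{every} cut attached to a single tree node — is still preserved; this is where the proof genuinely departs from \cite{DBLP:journals/iandc/FinkbeinerO17}, which attaches a single cut to each tree node. A secondary technical point is that the composition $\partitionfunc'\circ(\cdots)\circ\partitionfunc$ of refinement maps along a branch always yields valid assignments, which reduces to the fact that each individual splitting or canonicalization step refines a partition of the basic colour classes and hence cannot violate cardinality or static-subclass constraints.
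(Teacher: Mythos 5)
Your overall architecture is sound: the invariant you maintain along the breadth-first traversal (each tracked cut $\cut$ carries a valid assignment $\validassignment_\cut$ realizing a decision set in the class represented by the node's label) is a reasonable way to make precise what the paper leaves implicit, and your treatment of justified refusal, deadlock freedom, and the safety condition matches the paper's in substance.

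The genuine gap is in the determinism argument. You argue by contraposition that if two transitions $\mathsf{t_1},\mathsf{t_2}$ sharing a system place in their presets are both enabled at a cut $\cut$ of $\PGStrat$, then the associated decision set would be nondeterministic and hence so would the canonical representation labeling the corresponding tree node. This presupposes that the offending cut is one of the cuts $(\cut,\validassignment_\cut)\in\cuts_\treenode$ tracked by the algorithm. But the determinism condition for Petri game strategies quantifies over \emph{all} reachable markings of the concurrent object $\PGStrat$, and these are strictly more numerous than the tracked cuts: the algorithm follows one particular interleaving (system moves first, environment delayed), whereas $\mathsf{t_1}$ and $\mathsf{t_2}$ may have been introduced at different tree nodes --- one below the other, or on different branches after an environment split --- and the cut at which both are simultaneously enabled need never be visited during construction. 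The paper's proof spends most of its effort on exactly these cases: when $\treenode_2$ lies below $\treenode_1$ it analyses the firing sequence between the two associated cuts to show either that $\preset{\mathsf{t_1}}$ and $\preset{\mathsf{t_2}}$ are in conflict or that some intermediate representation is already nondeterministic (via two co-enabled symbolic instances); when $\treenode_1$ and $\treenode_2$ lie on different branches it uses the environment place at their last common ancestor to show $\mathsf{t_1}$ and $\mathsf{t_2}$ are in conflict and hence never co-enabled. Your closing remark about the determinism argument having to ``hold simultaneously for every cut attached to a single tree node'' points at a different and easier issue --- multiple cuts per node are handled uniformly by the invariant --- and does not substitute for this cross-node analysis. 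Without it the proof does not establish determinism.
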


\begin{proof}
	\emph{Justified refusal:}
	Assume there is a transition $ \transition $ in the unfolding
	that is not in the strategy, but the preset
	of the transition is (so $ \transition $ would be enabled in the strategy).
	Let $ t.c=\lambda(\transition) $.
	Then, the reason for $ \transition $'s absence is that 
	there is a system place $ \place $ in the preset of $ \transition $
	such that on all traversals that reach a representation $ \rep=(\dynsubclasses,\decsetfct) $
	and a cut with valid assignment $ (\cut,\validassignment_\cut) $
	such that $ \transition $ is enabled in $ \cut $,
	$ \decsetfct $ forbids $ \validassignment_\cut(\lambda(\transition))=t.\validassignment_\cut(c) $,
	and hence also no transition $ \mathsf{\transition'} $ 
	with $ \lambda(\mathsf{t'})=\lambda(\mathsf{t}) $
	(implying $ \validassignment_\cut(\lambda(\transition))=\validassignment_\cut(\lambda(\mathsf{\transition'})) $)
	is in the strategy.
		
	\emph{Deadlock freedom:}
	The generated strategy is deadlock free because
	in every branch in the tree~$ \strategyTree $ 
	either infinitely many transitions are fired,
	or at some point it loops
	in a terminating state.
	The strategy~$ \STPGStrat $ does not contain representations that are deadlocks.

	\emph{Determinism:}
	Aiming a contradiction, assume that there
	are two transitions $ \mathsf{t_1} $ and $ \mathsf{t_2} $
	in the strategy that have a shared system place $ \place $ in their preset.
	and are both enabled at the same cut $ \cut $ in the strategy.\\
	\emph{Case 1.}
	Suppose that during the construction of $ \PGStrat $,
	$ \cut $ is encountered as a cut $ (\cut,\validassignment_\cut)\in\cuts_{\treenode} $
	associated to node $ \treenode $ labeled with representation $ \rep=(\dynsubclasses,\decsetfct) $.
	Then $ \rep $ is nondeterministic. Contradiction (the strategy does not contain nondeterministic representations).\\
	\emph{Case 2.}
	Suppose there exist two cuts $ (\cut_1,\validassignment_{\cut_1})\in\cuts_{\treenode_1} $
	and $ (\cut_2,\validassignment_{\cut_2})\in\cuts_{\treenode_2} $
	such that during the construction of $ \PGStrat $,
	first $ \mathsf{t_1} $ was introduced at $ \cut_1 $
	and then $ \mathsf{t_2} $ was introduced at $ \cut_2 $,
	with $ \preset{\mathsf{t_i}}\subseteq\cut_i $ but $ \preset{\mathsf{t_1}}\nsubseteq\cut_2 $
	and $ \preset{\mathsf{t_2}}\nsubseteq\cut_1 $.
	We again distinguish two cases.\\
	\emph{Case 2a} concerns the situation that 
	$ \treenode_2 $ is \emph{below} $ \treenode_1 $ in $ \strategyTree $.
	Suppose in the B\"uchi game the symbolic instances 
	$ t_1.\dynsubcl^{[\instancesubcl^1,\instanceelement^1]},\dots,t_n.\dynsubcl^{[\instancesubcl^n,\instanceelement^n]} $,
	with $ n\geq 1 $, are fired between $ \treenode_1 $ and $ \treenode_2 $,
	corresponding to a sequence 
	$ \cut_1=\cut_0'[\mathsf{\transition_1'}\rangle\dots[\mathsf{\transition_n'}\rangle\cut_n'=\cut_2 $.
	If this sequence removes a token from a place $ \place\in\preset{\mathsf{\transition_1}} $
	and puts it on $ \preset{\mathsf{\transition_2}} $ then place $ \preset{\mathsf{\transition_1}} $ and $ \preset{\mathsf{\transition_2}} $ are in conflict, therefore there cannot exist a cut~$ \cut $ in which both $ \mathsf{\transition_1} $ and $ \mathsf{\transition_2} $ are enabled. Contradiction.
	Thus, a transition that takes a token from a place in~$ \preset{\mathsf{\transition_1}} $
	puts it outside of $ \preset{\mathsf{\transition_2}} $.
	Consider the first transition $ \mathsf{\transition_j} $ that removes a token from $ \preset{\mathsf{\transition_1}} $.
	Then the representation $ \rep=(\dynsubclasses,\decsetfct) $, 
	label of the node $ \treenode $ that introduces $ \cut_{j-1} $
	is nondeterministic, as two instances 
	$ t.\dynsubcl^{[\instancesubcl,\instanceelement]} $ and $ t_j.\dynsubcl^{[\instancesubcl^i,\instanceelement^i]} $,
	corresponding to 
	$ \mathsf{\transition_1} $ and $ \mathsf{\transition_j} $
	are enabled in $ \decsetfct $. 
	Then proceed as in~Case~1.\\
	\emph{Case 2b} concerns the situation that $ \treenode_1 $ and $ \treenode_2 $
	are on different branches in $ \strategyTree $. 
	Let $ \treenode $ be their last common ancestor.
	Let $ \place $ be the environment place contained in the 
	cut $ (\cut,\validassignment_\cut)\in\cuts_\treenode $
	that such that $ \cut\leq\cut_i $.
	Let $ \mathsf{\transition_i'} $ be the transition 
	from $ \place $ in the path from $ \treenode $ to $ \treenode_i $.
	Then all transitions added in branches after $ \mathsf{t_1'} $
	are in conflict with all transitions added in branches after $ \mathsf{t_2'} $.
	In particular, $ \mathsf{t_1} $ and $ \mathsf{t_2} $ are in conflict,
	and therefore cannot be enabled at the same marking. Contradiction.
\end{proof}

\section{Experimental Results}\label{sec:ExperimentalResults}
In this section we investigate the impact of using \emph{canonical representations} for solving the realizability problem
of distributed systems modeled with
high-level Petri games with one environment player, an arbitrary number of system players, a safety objective, and an underlying symmetric net.

A prototype \cite{DBLP:journals/acta/GiesekingOW20} for generating the reduced state space of $ \STPG $ for
such a high-level Petri game $ \SG $ 
shows a state space reduction by up to three orders of magnitude
compared to $ \mathcal{G}(\HLtoLL(\SG)) $ (cf.~\refFig{HLandLLDiagram}) for the considered benchmark families~\cite{DBLP:journals/acta/GiesekingOW20}.
For this paper we extended this prototype and implemented algorithms
to obtain the same state space reduction by using canonical representations in~/$ \STPGc $.
Furthermore, we implemented a solving algorithm to exploit the reduced state space for the realizability problem
of high-level Petri games.
As a reference, we implemented an explicit approach which does not exploit any symmetries of the system.
We applied our algorithms on the benchmark families presented in~\cite{DBLP:journals/acta/GiesekingOW20} and
added a benchmark family for the running example introduced in this paper.
An extract of the results for three of these benchmark families are given in Table~\ref{tab:results}. The complete 
results are contained in the corresponding artifact~\cite{Gieseking2021}.
\begin{table}[hbt]
	\caption{
		Comparison of the run times of the \emph{canonical (Canon.)} and \emph{membership (Memb.)} approach solving
		the realizability problem (\cmark/\xmark) for the 3 benchmark families \emph{CS}, \emph{DW}, \emph{CM}
		with the number of states \(|V|\) and number of symmetries~\(|\symmetries|\).
		A gray number of states \(|\mathsf{V}|\) for the \emph{explicit reference approach} indicates a timeout.
		Results are obtained on an AMD Ryzen\texttrademark{} 7 3700X~CPU, 4.4~GHz, 64 GB RAM
		and a timeout (TO) of 30 minutes. The run times are in seconds.}
	\label{tab:results}
	\begin{minipage}{.495\textwidth}
		\scalebox{0.93}{
			\centering
			{\setlength{\tabcolsep}{1mm} 	  
			\begin{tabular}{c||r||c||rr||rr}
				\textit{CS} & \(|\mathsf{V}|\)  & \(\models\) & \(|V|\) & \(|\symmetries|\) & Memb. & Canon. \\\hline
				1 & 21 &  \cmark & 21 & 1 & .38 & \textbf{.36} \\
				2 & 639 &  \cmark & 326 & 2 & \textbf{.63} & .64 \\
				3 & 45042 &  \cmark & 7738 &  6 & \textbf{5.20} & 6.05 \\
				4 & \lightText{7.225e6} &  \cmark & 3.100e5 &  24 & 151.62 & \textbf{148.08} \\
				5 & \lightText{3.154e9} &   - & - & 120 & TO & TO \\
				\multicolumn{7}{c}{}\\
				\textit{DW} & \(|\mathsf{V}|\)  & \(\models\) & \(|V|\) & \(|\symmetries|\) & Memb. & Canon. \\\hline
				1 & 57 & \cmark & 57 & 1 & .40 & \textbf{.39} \\
				2 & 457 & \cmark & 241 & 2 & .67 & \textbf{.62} \\
				& \multicolumn{1}{|c}{\(\cdots\)}&\multicolumn{1}{c}{\(\cdots\)}&\multicolumn{2}{c}{\(\cdots\)}&\multicolumn{2}{c}{\(\cdots\)}\\
				7 & \lightText{4.055e6} & \cmark & 5.793e5 & 7 & 100.67 & \textbf{75.24} \\
				8 & \lightText{2.097e7} & \cmark & 2.621e6 & 8 & 986.77 & \textbf{671.04} \\
				9 & \lightText{1.053e8} & - & - &  9 & TO & TO 
			\end{tabular}
			}
		}
	\end{minipage}\hfill
	\begin{minipage}{.505\textwidth}
		\scalebox{0.93}{
			\centering
			{\setlength{\tabcolsep}{1mm} 	  
			\begin{tabular}{c||r||c||rr||rr}
				\textit{CM} & \(|\mathsf{V}|\)  & \(\models\) & \(|V|\) & \(|\symmetries|\) & Memb. & Canon. \\\hline
				2/1 & 155 & \cmark & 79 & 2 & \textbf{.49} & .52 \\
				2/2 & 2883 & \xmark & 760 & 4 & \textbf{1.07} & 1.08 \\
				2/3 & 58501 & \xmark & 5548 &  12 & \textbf{4.38} & 5.94 \\
				2/4 & \lightText{1.437e6} & \xmark & 33250 & 48 & 15.12 & \textbf{14.40} \\
				2/5 & \lightText{3.419e7} & \xmark & 1.701e5 & 240 & 296.05 & \textbf{185.81} \\
				2/6 & \lightText{8.376e8} &  - & - & 1440 & TO & TO \\\cdashline{2-7}
				3/1 & 702 & \cmark & 147 & 6 & .71 & \textbf{.58} \\
				3/2 & 45071 & \cmark & 4048 & 12 & \textbf{4.46} & 4.99 \\
				3/3 & \lightText{3.431e6} & \xmark & 91817 & 36 & 89.35 & \textbf{49.90} \\
				3/4 & \lightText{2.622e8} & - & - & 144 & TO & TO \\\cdashline{2-7}
				4/1 & 2917 & \cmark & 239 & 24 & \textbf{1.24} & 1.42 \\
				4/2 & \lightText{6.587e5} & \cmark & 16012 & 48 & 25.42 & \textbf{14.09} \\
				4/3 & \lightText{1.546e8} & - & - & 144 & TO & TO \\
			\end{tabular}
			}
		}
	\end{minipage}
\end{table}

The benchmark family \textit{Client/Server (CS)} corresponds to the running example of the paper.
With \textit{Document Workflow (DW)} a cyclic document workflow between clerks is modeled. In this 
benchmark family the symmetries of the systems are only one rotation per clerk.
In \textit{Concurrent Machines (CM)} a hostile environment can destroy one of the machines
processing the orders. Since each machine can only process one order, a positive realizability result
is only obtained when the number of orders is smaller than the number of machines.
In Table~\ref{tab:results} we can see that for those benchmark families
the extra effort of computing the canonical representations 
(Canon.)
is worthwhile for most instances
compared to the cost of checking the membership of a decision set in an equivalence class~(Memb.).
This is not the case for all benchmark families.
\begin{figure}[tb]
	\centering
	\includegraphics[width=\textwidth]{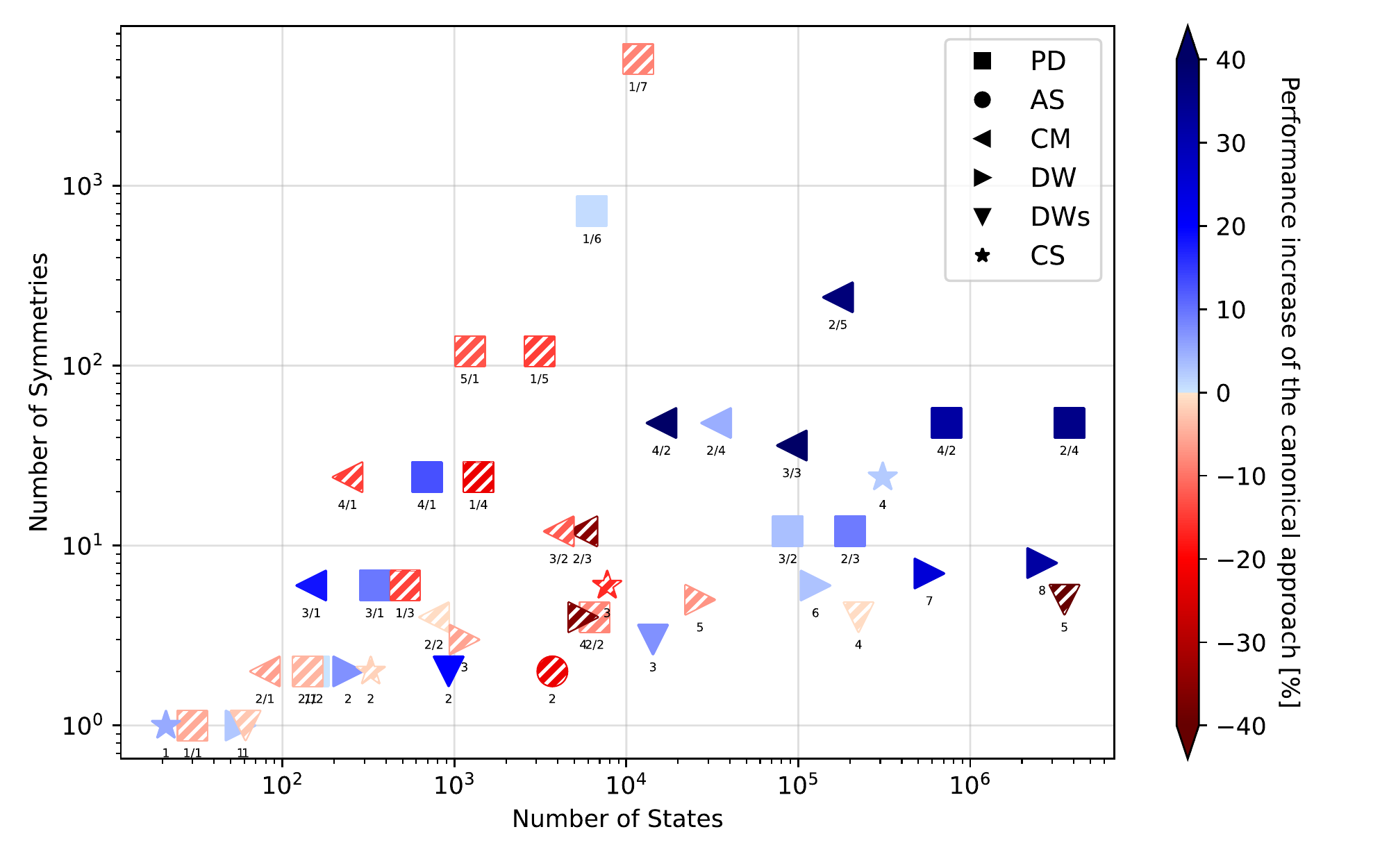}%
	\captionof{figure}{Comparing the percentage performance gain of the canonical and the membership approach with respect to
		the number of states and symmetries of the input problem for the benchmark families
		\emph{Package Delivery (PD)}, \emph{Alarm System (AS)}, \emph{CM}, \emph{DW}, \emph{DWs}, \emph{CS}.
		Labels are the parameters of the benchmark. A blue (unhatched) marker indicates a performance increase when using canonical representations.}
	\label{fig:canonVsMember}
\end{figure}

In \refFig{canonVsMember} we have plotted the instances of all benchmark families
according to their number of symmetries and states.
The color of the marker shows the percentaged in- or decrease in performance
when using canonical representations while solving high-level Petri games.
Blue (unhatched) indicates a performance gain when using the canonical approach.
This shows that the benchmarks in general benefit from the canonical approach
for an increasing number of states (the right blue (unhatched) area).
However, the \emph{DWs} benchmark (a simplified version of \emph{DW}) exhibits the opposite behavior.
This is most likely explained by the very simple structure, which favors a quick member check.

The algorithms are integrated in \textsc{AdamSYNT}\footnote{\url{https://github.com/adamtool/adamsynt}}
\cite{DBLP:conf/cav/FinkbeinerGO15,DBLP:journals/corr/abs-1711-10637}, open source, 
and available online\footnote{\url{https://github.com/adamtool/highlevel}}. 
Additionally, we created an artifact with the current version running in a virtual machine
for reproducing and checking all experimental data with provided scripts \cite{Gieseking2021}.

\section{Related Work}\label{sec:RelatedWork}%
For the synthesis of distributed systems other approaches are most prominently the Pnueli/Ros\-ner model
\cite{DBLP:conf/focs/PnueliR90} and Zielonka's asynchronous automata~\cite{DBLP:books/ws/95/Zielonka95}.
The synchronous setting of Pnueli/Rosner is in general undecidable \cite{DBLP:conf/focs/PnueliR90},
but some interesting architectures exist that have a decision procedure with nonelementary complexity \cite{Rosner/92/Modular,DBLP:conf/lics/KupfermanV01,DBLP:conf/lics/FinkbeinerS05}.
For asynchronous automata, the decidability of the control problem is open in general,
but again there are several interesting cases which have a decision procedure with nonelementary complexity
\cite{DBLP:conf/icalp/GenestGMW13,DBLP:conf/fsttcs/MuschollW14,DBLP:conf/fsttcs/MadhusudanTY05}. 

Petri games based on P/T Petri nets are introduced in \cite{DBLP:journals/corr/FinkbeinerO14,DBLP:journals/iandc/FinkbeinerO17}.
Solving unbounded Petri games is in general undecidable.
However, for Petri games with one environment player, a bounded number of system players,
and a safety objective the problem is \textsc{exptime}-complete.
The same complexity result holds for interchanged players~\cite{DBLP:conf/fsttcs/FinkbeinerG17}.
High-level Petri games have been introduced in \cite{DBLP:journals/corr/abs-1904-05621}.
In \cite{DBLP:journals/acta/GiesekingOW20}, such Petri games are solved while exploiting symmetries. 
\cite{DBLP:journals/it/Wurdemann21} gives outlooks towards high-level representations of strategies.

The symbolic B\"uchi game is inspired by the symbolic reachability graph for high-level nets from~\cite{DBLP:journals/tcs/ChiolaDFH97},
and the calculation of canonical representatives~\cite{Chiola1991} from~\cite{DBLP:journals/tc/ChiolaDFH93}. 
There are several works on how to obtain symmetries of different subclasses of high-level Petri nets efficiently
\cite{DBLP:conf/apn/DutheilletH89,%
	Chiola1991,%
	DBLP:journals/tc/ChiolaDFH93,%
	DBLP:conf/apn/Lindquist91%
}
and for efficiency improvements for systems with different degrees of symmetrical behavior
\cite{DBLP:conf/apn/HaddadITZ95,%
	BAARIR2004219,%
	DBLP:conf/mascots/BellettiniC04%
}.

\section{Conclusions and Outlook}\label{sec:Conclusions}
We presented a new construction for the synthesis of distributed systems modeled by
high-level Petri games with one environment player, an arbitrary number of system players,
and a safety objective.
The main idea is the reduction to a 
symbolic two-player B\"uchi game,
in which the nodes are equivalence classes 
of symmetric situations in the Petri game.
This leads to a significant reduction of the state space.
The novelty of this construction is to obtain the reduction
by introducing canonical representations. 
To this end, a theoretically cheaper construction of the B\"uchi game
can be obtained depending on the input system.
Additionally, the representations now allow to skip 
the inflated generation of an explicit B\"uchi game strategy
and to directly generate a Petri game strategy 
from the symbolic B\"uchi game strategy.
Our implementation, applied on six structurally different benchmark families,
shows in general a performance gain in favor of the canonical representatives
for larger state spaces.

In future work, we plan to integrate the algorithms in AdamWEB \cite{DBLP:conf/tacas/GiesekingHY21}, 
a web interface\footnote{\url{http://adam.informatik.uni-oldenburg.de/}} for the synthesis of distributed systems,
to allow for an easy insight in the symbolic games and strategies.
Furthermore, we want to continue our investigation on the benefits 
of canonical representations, e.g., to 
directly generate high-level representations of Petri game strategies that match the 
given high-level Petri~game.
\bibliographystyle{fundam}
\bibliography{bib}

\appendix
\section*{Appendix}
\section{Strategies in Petri Games -- Formal Definition}

\label{app:strategyFormal}
We formally define strategies in Petri games as subprocesses of the unfolding. 

First, we define concurrency and conflicts in Petri nets.
Consider a Petri net 
$ \PN=(\places,\transitions,\flowfunc,\marking_\mathsf{0}) $
and two nodes $ \mathsf{x},\mathsf{y}\in\places\cup\transitions $.
We write $ \mathsf{x}<\mathsf{y} $, and call $ \mathsf{x} $ \emph{causal predecessor} of $ \mathsf{y} $,
if $ \mathsf{x}\flowfunc^+\mathsf{y} $.  
Additionally, we write $ \mathsf{x}\leq \mathsf{y} $ if $ \mathsf{x}<\mathsf{y} $ or $ \mathsf{x}=\mathsf{y} $,
and call $ \mathsf{x} $ and $ \mathsf{y} $ \emph{causally related} if 
$ \mathsf{x}\leq \mathsf{y} $ or $ \mathsf{y}\leq \mathsf{x} $.
The nodes $ \mathsf{x} $ and $ \mathsf{y} $ are said to be \emph{in conflict},
denoted by $ \mathsf{x}\sharp \mathsf{y} $, if 
$ \exists \mathsf{p}\in\places\,\exists \mathsf{t_1}\neq \mathsf{t_2}\in\postset{p} :
\mathsf{t_1}\leq \mathsf{x} \land \mathsf{t_2}\leq \mathsf{y}$. 
Two nodes are called \emph{concurrent},
if they are neither in conflict, nor causally related.
A set $ \mathsf{X}\subseteq\places\cup\transitions $
is called concurrent, if each two elements in $ \mathsf{X} $ are concurrent.

$ \PN $ is called an \emph{occurrence net} if: 
i) every place has at most one transition in its preset, i.e.,
$ \forall \mathsf{p}\in\places:|\preset{\mathsf{p}}|\leq 1 $;
ii) no transition is in self conflict, i.e., 
$ \forall {\mathsf{t}\in\transitions:} \neg(\mathsf{t}\sharp \mathsf{t}) $;
iii) no node is its own causal predecessor, i.e.,
$ \forall \mathsf{x}\in\places\cup\transitions:{\neg (\mathsf{x}<\mathsf{x})} $;
iv) the initial marking contains exactly the places with no transition in their preset, i.e.,
$ \marking_\mathsf{0}=\{ \mathsf{p}\in\places\with \preset{\mathsf{p}}=\emptyset \} $.
An occurrence net is called a \emph{causal net} 
if additionally 
v) from every place there is at most one outgoing transition, i.e.,
$ \forall \mathsf{p}\in\places: |\postset{\mathsf{p}}|\leq 1 $.

For a superscripted net $ \PN^x $,
we implicitly also equip its components with 
the superscript, i.e., 
$ \PN^x=(\places^x,\transitions^x,\flowfunc^x,\marking_\mathsf{0}^x) $,
as well as the corresponding pre- and postset function 
$ \mathit{pre}^x,\mathit{post}^x $.
Let $ \PN $ and $ \PN' $ be two Petri nets.
A function $ h:\places\cup\transitions\to\places'\cup\transitions' $
is called a \emph{Petri net homomorphism}, if:
i) it maps places and transitions in $ \PN $ into the corresponding sets in $ \PN' $, i.e.,
$ \forall \mathsf{p}\in\places:{h(\mathsf{p})\in\places'}\land\forall {\mathsf{t}\in\transitions}: h(\mathsf{t})\in\transitions' $;
ii) it maps the pre- and postset correspondingly, i.e.,
$ \forall {\mathsf{t}\in\transitions}: \mathit{pre}'(h(\mathsf{t}))=h(\preset{\mathsf{t}}) 
\land \mathit{post}'(h(\mathsf{t}))=h(\postset{\mathsf{t}})$.
The homomorphism is called \emph{initial} if additionally
iii) it maps the initial marking of $ \PN $ to the initial marking of~$ \PN' $,
i.e., $ h(\marking_\mathsf{0})=\marking_\mathsf{0}' $.

A(n \emph{initial}) \emph{branching process}
$ \beta=(\PN^U,\lambda^U) $ of $ \PN $
consists of an occurrence net $ \PN^U $ and a(n initial) 
homomorphism $ \lambda^U:\places^U\cup\transitions^U
\to \places\cup\transitions $ that 
is injective on transitions with same preset, i.e.,
$ \forall \mathsf{\mathsf{t_1}},\mathsf{t_2}\in\transitions^U: 
(\preset[U]{\mathsf{t_1}}=\preset[U]{\mathsf{t_2}}\land \lambda^U(\mathsf{t_1})=\lambda^U(\mathsf{t_2}) )
\Rightarrow \mathsf{t_1}=\mathsf{t_2} $.
If $ \beta_R=(\PN^R,\rho) $ is an initial 
branching process of $ \PN $ with a causal net $ \PN^R $,
$ \beta_R $ is called an \emph{initial (concurrent) run of $ \PN $}.
A run formalizes a single concurrent execution of the net.
For two branching processes 
$ \beta=(\PN^U,\lambda^U) $ and $ \beta'=({\PN^V},{\lambda^V}) $
we call $ \beta $ a \emph{subprocess} of $ \beta' $, if
i) $ \PN^U $ is a \emph{subnet} of $ {\PN^V} $, i.e.,
$ \places^U\subseteq{\places^V},
\transitions^U\subseteq{\transitions^V},
\flowfunc^U\subseteq{\flowfunc^V},
\marking_\mathsf{0}^U={\marking_\mathsf{0}^V} $;
ii) $ {\lambda^V} $ acts on $ \places^U\cup\transitions^U $
as $ \lambda^U $ does, i.e.,
$ {\lambda^V}|_{\places^U\cup\transitions^U}=\lambda^U $.

A branching process $ \beta=(\PN^U,\lambda^U) $
is called an \emph{unfolding of $ \PN $}, if 
for every transition that can occur in the net,
there is a transition in the unfolding with corresponding label, i.e.,
$ \forall \mathsf{t}\in\transitions\,\forall X\subseteq\places^U:
X \text{ concurrent} \land \preset{\mathsf{t}}=\lambda^U(X)
\Rightarrow \exists \mathsf{t}^U\in\transitions^U:
\lambda^U(\mathsf{t}^U)=\mathsf{t} \land \preset[U]{\mathsf{t}^U}=X $.
The unfolding of a net is unique up to isomorphism.
The unfolding of a Petri game 
$ \PG=(\sysplaces,\envplaces,\transitions,\flowfunc,\marking_0,\badplaces) $ 
with underlying net $ \PN $ is the unfolding of $ \PN $,
where the distinction of system-, environment-, and bad places 
is lifted to the branching process:
$ \sysplaces^U=\{ \mathsf{p}\in\places^U\with \lambda^U(\mathsf{p})\in\sysplaces \} $,
$ \envplaces^U=\{ \mathsf{p}\in\places^U\with \lambda^U(\mathsf{p})\in\envplaces \} $,
and
$ \badplaces^U=\{ \mathsf{p}\in\places^U\with {\lambda^U(\mathsf{p})\in\badplaces} \} $.

A \emph{strategy} for the system players in $ \PG $
is a subprocess $ \PGStrat=(\PN^\PGStrat,\lambda^\PGStrat) $ 
of the unfolding of $ \PG $ satisfying the following conditions:\\
\emph{Justified refusal:}
If a transition is forbidden in the strategy,
then a system player in its preset uniformly forbids all occurrences in the strategy, i.e.,
$ \forall \mathsf{t}\in\transitions^U: ({\mathsf{t}\notin\transitions^\PGStrat} 
\land \preset[U]{t}\subseteq\places^\PGStrat)\Rightarrow
(\exists \mathsf{p}\in\preset[U]{\mathsf{t}}\cap\sysplaces^\PGStrat\,
\forall \mathsf{t'}\in\postset[U]{\mathsf{p}}: \lambda^U(\mathsf{t'})=\lambda^U(\mathsf{t})
\Rightarrow \mathsf{t'}\notin\transitions^\PGStrat)$.
This also implies that no pure environment transition is forbidden. \\
\emph{Determinism:}
In no reachable marking (cut) in the strategy does a 
system player allow two transitions in his postset 
that are both enabled, i.e.,
$ \forall \mathsf{p}\in\sysplaces^\PGStrat\ 
\forall \marking\in\mathcal{R}(\PN^\PGStrat):
\mathsf{p}\in\marking\Rightarrow \exists^{\leq 1} \mathsf{t}\in\postset[\PGStrat]{\mathsf{p}}:
\marking[\mathsf{t}\rangle $. The set $ \mathcal{R}(\PN^\PGStrat) $
are all reachable markings in the strategy.\\
\emph{Deadlock freedom:}
Whenever the system can proceed in $ \PG $, 
the strategy must also give the possibility to continue, i.e.,
$ \forall \marking\in\mathcal{R}(\PN^\PGStrat):
(\exists \mathsf{t}\in\transitions^U: \marking[\mathsf{t}\rangle ) 
\Rightarrow {\exists \mathsf{t'}\in\transitions^\PGStrat: \marking[\mathsf{t'}\rangle} $.

An initial concurrent run $ \pi=(\PN^R,\rho) $ of
the underlying net $ \PN $ of a Petri game~$ \PG $
is called a \emph{play} in~$ \PG $.
The play $ \pi $ \emph{conforms} to $ \PGStrat $
if it is a subprocess of $ \PGStrat $.
The system players win $ \pi $ if $ \badplaces^R=\emptyset $,
otherwise the environment players win $ \pi $.
The strategy $ \PGStrat $ is called \emph{winning},
if all plays that conform to $ \PGStrat $
are won by the system players.
This is equivalent to $ \badplaces^\PGStrat=\emptyset $.
\end{document}